\documentclass[aps,prx,twocolumn,superscriptaddress,numerical]{revtex4-2}
\usepackage[utf8]{inputenc}
\usepackage[english]{babel}
\usepackage[T1]{fontenc}
\usepackage{amsthm,bm}
\usepackage{physics}
\usepackage{subfigure}
\usepackage{amsfonts} 
\usepackage{mathtools}
\usepackage{dsfont}
\usepackage{hyperref}
\usepackage{graphicx}   
\usepackage{amsmath, amssymb}  
\usepackage{soul} 
\usepackage{float}      
\usepackage{caption}    
\usepackage{subcaption} 
\makeatletter
\def\@makecaption#1#2{%
  \par
  \begingroup
  \small
  \setlength{\parindent}{0pt}
  \rightskip=0pt
  \leftskip=0pt
  \parfillskip=0pt plus 1fil
  \spaceskip=0pt
  \xspaceskip=0pt
  \noindent\textbf{#1.} #2\par
  \endgroup
}
\makeatother

\usepackage{booktabs}

\usepackage{tikz}
\usepackage{lipsum}

\newtheorem{theorem}{Theorem}

\newtheorem{corollary}{Corollary}

\newcommand{\utheta}{\underline{\theta}}

\begin{document}

\title{Resource-Optimal Importance Sampling for  Randomized Quantum Algorithms}

\author{Davide Cugini}
\affiliation{Dipartimento di Fisica ``Alessandro Volta,'' Università di Pavia, via Bassi 6, 27100 Pavia, Italy}
\affiliation{Theoretical Division, Los Alamos National Laboratory, Los Alamos, NM 87545, USA}

\author{Touheed Anwar Atif}
\affiliation{Computing and Artificial Intelligence Division, Los Alamos National Laboratory, Los Alamos, NM 87545, USA}

\author{Yi\u{g}it Suba\c{s}\i}
\affiliation{Computing and Artificial Intelligence Division, Los Alamos National Laboratory, Los Alamos, NM 87545, USA}

\begin{abstract}
Randomized protocols are procedures that incorporate probabilistic choices during their execution
and they play a central role in quantum algorithms, 
spanning Hamiltonian simulation, noise mitigation, and measurement tasks. In practical implementations, the dominant cost of such protocols typically arises from circuit execution and measurement, and depends on hardware-specific resources such as gate counts, circuit depth, runtime, or dissipated energy.
We introduce a general framework for applying classical importance sampling to randomized quantum protocols. Given a cost function for running quantum circuits, the proposed approach minimizes a net-cost figure of merit that jointly captures the computational expense per circuit and the estimator variance. 
We further extend the framework to scenarios where the quantum computation is subject to errors arising either from algorithmic approximations or from physical noise,  
proving that importance sampling preserves estimator bias despite altering the sampling distribution, and to settings with error-detection schemes, where we characterize the resulting changes in the optimal sampling strategy and achievable net-cost reduction. Representative applications include the \textsc{Qdrift} protocol, dephasing channels, mixed-states simulation, composite observables estimation, classical shadows, and probabilistic error cancellation. Overall, our results establish a principled approach for reducing the computational resources required by randomized quantum protocols through classical sampling optimization.
\end{abstract}

\maketitle

\newcommand{\DC}[1]{{\color{violet}[Davide: #1]}}
\newcommand{\TAcom}[1]{{\color{blue}[Touheed: #1]}}
\newcommand{\TA}[1]{{\color{green}#1}}
\newcommand{\ys}[1]{{\color{blue}#1}}
\newcommand{\cut}[1]{{\color{red}[Delete: #1]}}

\section{Introduction}
Randomized methods have become integral to quantum algorithms and protocols.
Broadly speaking, 
randomized algorithms are procedures that incorporate probabilistic choices during their execution~\cite{mitzenmacher2017probability}. 
In the context of digital quantum computation, 
this typically entails sampling a subset of quantum operations in each run of a circuit
according to a classical probability distribution.

This paradigm arises in several areas of quantum information processing.  
In the context of Hamiltonian simulation, randomized product formulas can provide tighter error bounds than fixed-order Trotter-Suzuki schemes, whereas the \textsc{Qdrift} protocol follows an alternative strategy, approximating the time-evolution operator $e^{-iHt}$ by sampling Hamiltonian terms proportionally to their operator norms~\citep{Childs2019,Campbell2019}.
In the context of error reduction, 
~\cite{Wallman2016} introduces randomized compiling, which inserts suitable random gates to tailor coherent errors into an effectively stochastic (Pauli) error model compatible with fault-tolerant error correction.
In quantum characterization and measurement, 
protocols such as direct fidelity estimation (DFE) and classical shadows 
rely explicitly on randomized measurements, 
with DFE already incorporating importance weighting over Pauli operators 
to allocate sampling effort where it is most informative~\citep{FlammiaLiu2011,HuangKuengPreskill2020}.
Beyond protocols that are intrinsically randomized, 
many deterministic quantum algorithms admit randomized reformulations; 
this perspective is relevant, for instance, 
in the simulation of mixed states and in the estimation of composite observables, 
where stochastic decompositions can be leveraged to construct efficient sampling-based procedures~\cite{arrasmith2020operator}.

Given the stochastic nature of randomized protocols,
a natural question arises: can one exploit the classical technique of Importance Sampling (IS) in the quantum setting to improve the performance of randomized quantum algorithms? 
IS consists of replacing the original probability distribution $p$ of a randomized protocol with an alternative distribution $q$, while compensating for this modification by reweighting each measurement outcome with an appropriate factor $w$,
so that the estimator bias remains unchanged.
In classical computation, 
IS is typically employed for two main purposes,
namely
i) variance reduction:
if approximate prior knowledge of the desired result is available, 
a suitable choice of the distribution $q$ can significantly decrease the variance of the estimator 
and ii) cost reduction: when drawing samples from $p$ constitutes the dominant computational burden, IS allows one to replace it with a more convenient distribution $q$. In this case,
the resulting estimator remains unbiased, but its variance generally changes. 
Therefore, IS amounts to balancing the sampling cost against the variance of the estimator~\citep{OwenMCBook,LiuMCBook}.

In this work, we adopt a problem-agnostic perspective, 
assuming no prior knowledge of the computation’s outcome, 
and therefore focus on the second classical use of IS: 
reducing the average computational resources.
For quantum randomized protocols, 
the primary contribution to the overall cost typically does not arise from classical sampling. 
Instead, it comes from the “quantum sampling” performed via measurements at the end of the circuit, 
which inherently includes the cost of implementing the circuit itself.
Depending on the theoretical framework and hardware constraints, the relevant cost metric may correspond to circuit depth, gate counts (e.g., CNOT or T gates), or even physical resources such as execution time or dissipated energy.
The first step of a randomized quantum algorithm 
is the sampling from a classical distribution, the outcome of which determines which quantum circuit is ran on the device. If the cost of executing the circuits is different, 
IS can be used to modify the sampling distribution so that more expensive circuits are sampled less frequently, 
while cheaper circuits are sampled more often.
However, 
as in the classical setting, this changes the variance of the estimator, which in the quantum case is determined by the statistical fluctuations of single-shot measurement outcomes obtained at the end of each circuit execution.
In general, the variance of the estimator may increase, necessitating more independent runs to reach a desired precision. That is, while IS can reduce the expected cost per run, it may increase the total number of runs required to achieve a given accuracy. In this work, we show how to select the sampling distribution $q$ to optimally balance this tradeoff. \\

The manuscript is structured as follows.
In Section~\ref{sec: General Discussion} we analytically derive the optimal IS distribution $q^*$ that minimizes the net cost
\begin{align}
NC \;=\;& \bigl(\text{average per-run resources}\bigr)\times \nonumber \\
&\;\; \bigl(\text{variance of the estimator}\bigr)\,, \nonumber
\end{align}
for a given original sampling distribution $p$ characterizing the randomized protocol and a cost function associated with quantum circuit evaluations.
We then extend the analysis to scenarios where the quantum computation is subject to imperfections
arising either from algorithmic approximations or from physical noise. 
Within this setting, the reduction in expected per-run cost enabled by importance sampling could suggest a potential mitigation of such effects. 
However, in Section~\ref{sec:IS_imperfect_channels} we demonstrate that the estimator bias introduced by algorithmic or physical imperfections is unaltered by IS. 
In Section~\ref{sec : IS with error detection}, 
we further extend the previous results to scenarios in which the quantum circuit is equipped with an error-detection scheme,
showing how the optimal sampling distribution $q^*$ 
and the achievable reduction in net cost are modified in this setting.
In Section~\ref{sec:applications} we report representative applications of our framework. 
In particular, 
we discuss its implications for the \textsc{Qdrift} protocol~\citep{Campbell2019}, 
the implementation of dephasing channels~\citep{BoixoKnillSomma2009,CunninghamRoland2024}, 
the mixed-state simulation and 
the composite observable estimation, 
the classical shadow tomography~\citep{HuangKuengPreskill2020}, 
and the Probabilistic Error Cancellation (PEC) technique for error mitigation~\citep{TemmeBravyiGambetta2017,EndoBenjaminLi2018,TakagiLimitsQEM2022,RMPreviewQEM2023}.
The results obtained demonstrate how optimal IS enables a reduction in computational resources solely through the manipulation of classical sampling distributions, 
without requiring any modification of the underlying quantum protocols.

\section{Optimal IS}\label{sec: General Discussion}
In this work, we address the problem of efficiently estimating the expectation value $\Tr[O\, \rho_\text{target}]$
of an observable $O$ on a target quantum state $\rho_\text{target}$. 
We focus on scenarios in which $\rho_\text{target}$ admits a representation as an ensemble average over a family 
of more easily implementable quantum states, obtained by applying quantum channels 
$\mathcal{E}_{\underline{\theta}}$ to a fixed reference state $\rho$, with sampling probabilities given by a probability distribution $p(\underline{\theta})$. 
Namely,
\begin{align}\label{eq: state decomposition}
    \rho_\text{target} 
    = \int d\underline{\theta}\; p(\underline{\theta})\, \mathcal{E}_{\underline{\theta}}(\rho)
    = \mathbb{E}_{p}\!\left[\,\mathcal{E}_{\underline{\theta}}(\rho)\,\right].
\end{align}
Consider an observable
\begin{align}
    O = \sum_x x \,\Pi_x\,,
\end{align} 
where $\Pi_x$ is a projector on the eigenspace associated with its eigenvalue $x$.
One can estimate the expectation value of $O$ 
on the state $\rho_\text{target}$ by preparing $\mathcal{E}_{\underline{\theta}}(\rho)$ 
with probability $p(\underline{\theta})$ and measuring in the eigenbasis of $O$. 
Such an estimator is unbiased, indeed
\begin{align}
    \Tr[O \rho_\text{target}] &= \int d\underline{\theta} \,p(\underline{\theta}) \sum_x \Tr[\Pi_x\, \mathcal{E}_{\underline{\theta}}(\rho)] x \nonumber\\
    &= \sum_x \int d\underline{\theta} \,\left(p(\underline{\theta})  r(x|\underline{\theta})\right)\, x \nonumber \\
     &= \mathbb{E}_{pr}\left[x\right]  \,,
\end{align}
where $r(x|\underline{\theta}):= \Tr[\Pi_x\,\mathcal{E}_{\underline{\theta}}(\rho)]$ 
is the probability of obtaining the measurement outcome $x$
given the application of the channel $\mathcal{E}_{\utheta}$, 
while $\mathbb{E}_{pr}$ denotes the expectation over the joint distribution $p(\underline\theta)r(x|\underline{\theta})$.
Here and in the following, we assume that the operator norm of the observable satisfies $\norm{O} \le 1$; 
if this is not the case, one can always enforce it by rescaling $O$ by its operator norm.
The total computational resources needed for the estimation is determined by the number of independent samples required to approximate the expectation value of $O$ to a desired accuracy $\delta$, as well as the resources needed to prepare and measure each instance of $\mathcal{E}_{\underline{\theta}}(\rho)$. We quantify the latter by $c(\underline{\theta})$. Note that the single-sample cost $c(\underline{\theta})$ may be defined in terms of any relevant computational resource, such as runtime, circuit depth or the total number of two-qubit gates in a quantum circuit.
While the variance of the estimator imposes a lower bound 
on the number of samples needed, 
the expected cost per sample depends on 
the sampling distribution $p(\underline{\theta})$. 
We therefore explore the use of IS
as a means to optimize this trade-off. 
By sampling from another distribution 
$q(\underline{\theta})$ 
and multiplying the measurement outcome of $\mathcal{E}_{\underline{\theta}}(\rho)$ by a corresponding weight $w(\underline{\theta}) := p(\underline{\theta})/q(\underline{\theta})$,
one can in principle reduce the average computational resources per run,
while keeping the expectation value of the estimation unchanged.
Indeed
\begin{align}
    \mathbb{E}_{qr}\left[w(\utheta)x\right] 
= \mathbb{E}_{pr}\left[x\right] .
\end{align}
However,
although the variance for a specific observable may decrease or remain comparable, 
the variance of the worst-case observable (i.e., the upper bound over all possible observables) 
is generally increased by IS. 
To make these considerations quantitative, let us define the expected cost per run under the modified sampling as
\begin{align}
\mathbb{E}_{qr}[c(\underline{\theta})] &= \sum_x \int d\underline{\theta}q(\underline{\theta})r(x|\underline{\theta})\, c(\underline{\theta}) \nonumber \\  
&= \int d\underline{\theta}\, q(\underline{\theta})\, c(\underline{\theta}) \nonumber\\
&= \mathbb{E}_{q}[c(\underline{\theta})] \,,
\end{align}
where $c(\underline{\theta})$ is the cost
associated with the implementation of the channel $\mathcal{E}_{\underline{\theta}}(\rho)$ and subsequent measurement of $O$.
Then, the maximum variance over all possible observables is bounded by
\begin{align}
    \mathrm{Var}^{\max}_{qr} 
    &= \max_{|x| \leq 1} \left\{\mathbb{E}_{qr}\left[(w(\underline{\theta}) x)^2\right]\nonumber - \left(\mathbb{E}_{qr}\left[w(\underline{\theta}) x\right]\right)^2\nonumber \right\}\\
    &\leq \max_{|x| \leq 1}\, \mathbb{E}_{qr}\left[(w(\underline{\theta}) x)^2\right]\nonumber \\
    &= \mathbb{E}_q\left[w^2(\underline{\theta})\right]\,, 
\end{align}
Hence, by means of the Central Limit Theorem, 
the total number of independent samples required to achieve a desired precision $\delta$ 
scales as $\mathbb{E}_q[w^2(\underline{\theta})]\,\delta^{-2}$.
For this reason, we introduce a new quantity,
which we call net-cost of the computation,
defined as
\begin{equation}\label{eq: net cost definition}
    NC_q = \mathbb{E}_q[c(\underline{\theta})] \, \mathbb{E}_q[w^2(\underline{\theta})]\,,
\end{equation}
With this definition the expected cost for desired precision $\delta$ scales as $\delta^{-2} NC_q $, independently of the observable. When we don't use IS, the net cost is simply
\begin{equation}\label{eq: net cost without IS}
    NC_p = \mathbb{E}_p[c(\underline{\theta})] \,,
\end{equation}
since in this case $w(\theta)=1$ and hence so is the worst-case variance.

Although using an arbitrary IS distribution $q$ may increase the worst-case variance, 
a suitable choice of $q$ can, in principle, reduce the overall net cost $NC_q$ compared to the conventional sampling scheme. 
In particular, we are interested in identifying the optimal distribution $q$, 
i.e., the one that minimizes $NC_q$. 
The following theorem provides a characterization of this optimal choice.

\begin{theorem}[Optimal Sampling Distribution]\label{thm:optSampling}
Let $p(\underline{\theta})$ be a probability distribution, $c(\underline{\theta})$ a cost function, and $q(\underline{\theta})$ an arbitrary IS distribution. Then,
\begin{equation}
    NC_q \ge \left( \mathbb{E}_p\!\left[c^{1/2}(\underline{\theta})\right] \right)^2, \quad \forall q,
\end{equation}
with equality if and only if
\begin{equation}
    q(\underline{\theta}) \propto \frac{p(\underline{\theta})}{c^{1/2}(\underline{\theta})}.
\end{equation}
\end{theorem}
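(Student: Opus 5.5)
The plan is a single application of the Cauchy--Schwarz inequality. The natural split writes $p$ as $(q\,c)^{1/2}$ times $(p^2/q)^{1/2}$, with $p/q$ playing the role of the weight $w$. Concretely, since $w=p/q$, we have $\mathbb{E}_q[w^2]=\int d\utheta\, p^2(\utheta)/q(\utheta)$ and $\mathbb{E}_q[c]=\int d\utheta\, q(\utheta)c(\utheta)$. We would then write
\begin{align}
\mathbb{E}_p\!\left[c^{1/2}\right]=\int d\utheta\, \Bigl(q^{1/2}c^{1/2}\Bigr)\Bigl(\frac{p}{q^{1/2}}\Bigr)
\le \Bigl(\int d\utheta\, q\,c\Bigr)^{1/2}\Bigl(\int d\utheta\, \frac{p^2}{q}\Bigr)^{1/2},\nonumber
\end{align}
and squaring both sides gives $NC_q \ge (\mathbb{E}_p[c^{1/2}])^2$ for every $q$.

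One technical point needs care first. The integrals are taken over the support of $p$. If $q$ vanishes somewhere that $p$ is positive, then $\mathbb{E}_q[w^2]=+\infty$ and the bound holds trivially. So we may assume $q>0$ wherever $p>0$, which is also the standing requirement for the importance-sampling estimator to be unbiased. We also assume $c>0$, or at least that $c$ is finite; if $c$ vanishes on a set of positive $p$-measure, the proportionality $q\propto p/c^{1/2}$ has to be read with care there. I would state these positivity and integrability assumptions explicitly at the start.

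For the equality case, Cauchy--Schwarz is tight if and only if the two factors are proportional almost everywhere, that is, $q^{1/2}c^{1/2}=\lambda\, p/q^{1/2}$. This is equivalent to $q=\lambda\, p/c^{1/2}$, and normalization fixes $\lambda^{-1}=\mathbb{E}_p[c^{-1/2}]$. As a sanity check we would substitute $q^*=p\,c^{-1/2}/\mathbb{E}_p[c^{-1/2}]$ back in. This gives $\mathbb{E}_{q^*}[c]=\mathbb{E}_p[c^{1/2}]/\mathbb{E}_p[c^{-1/2}]$ and $\mathbb{E}_{q^*}[w^2]=\mathbb{E}_p[c^{1/2}]\,\mathbb{E}_p[c^{-1/2}]$, whose product is exactly $(\mathbb{E}_p[c^{1/2}])^2$. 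This confirms that the bound is attained.

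The main obstacle is not the inequality itself but making the "if and only if" rigorous in the continuous setting. Equality in Cauchy--Schwarz forces proportionality only $p$-almost everywhere, and it requires both factors to be square-integrable with nonzero norm. I would handle this by stating uniqueness up to null sets. I would also note that $q^*$ is normalizable only when $\mathbb{E}_p[c^{-1/2}]<\infty$, which holds for instance whenever $c$ is bounded away from zero.
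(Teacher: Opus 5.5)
Your proof is correct and is essentially the paper's argument. The paper first changes measure to write $NC_q=\mathbb{E}_p[c\,w^{-1}]\,\mathbb{E}_p[w]$ and then applies Cauchy--Schwarz in $L^2(p)$ to the factors $(c/w)^{1/2}$ and $w^{1/2}$; multiplied by $p^{1/2}$, these are exactly your factors $q^{1/2}c^{1/2}$ and $p/q^{1/2}$. Your handling of the equality case is more careful than the paper's, which states the equality condition only loosely: you require pointwise a.e.\ proportionality, normalizability via $\mathbb{E}_p[c^{-1/2}]<\infty$, and the support condition on $q$.
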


\begin{proof}
We begin with the definition of the net cost~\eqref{eq: net cost definition} and note that its dependence on $q$ appears both explicitly in the sampling distribution and implicitly through the weight $w$.  
Using the change-of-measure identities 
$\mathbb{E}_q[c(\underline{\theta})] = \mathbb{E}_p[c(\underline{\theta})\,w^{-1}(\underline{\theta})]$ and 
$\mathbb{E}_q[w^2(\underline{\theta})] = \mathbb{E}_p[w(\underline{\theta})]$, 
we obtain
\begin{align}
    NC_q = \mathbb{E}_p[c(\underline{\theta})\,w^{-1}(\underline{\theta})]\,\mathbb{E}_p[w(\underline{\theta})]\,.
\end{align}
Applying the Cauchy--Schwarz inequality yields
\begin{align}
    NC_q \ge \left( \mathbb{E}_p\!\left[c^{1/2}(\underline{\theta})\right] \right)^2,
\end{align}
with equality if and only if both $\mathbb{E}_p[c(\underline{\theta})\,w^{-1}(\underline{\theta})]$ and $\mathbb{E}_p[w(\underline{\theta})]$ are proportional to $\mathbb{E}_p[c^{1/2}(\underline{\theta})]$, 
which occurs when $w(\underline{\theta}) \propto c^{1/2}(\underline{\theta})$. 
The result directly follows.
\end{proof}

Besides providing the optimal distribution $q$, which we henceforth denote by $q^*$, 
the previous result can also be used to quantify the advantage of employing optimal IS over naively sampling from $p$. 
This advantage can be expressed as
\begin{align}\label{eq:netCostRatio}
    \frac{NC_{q^*}}{NC_p}
    &= \frac{\left(\mathbb{E}_p[c^{1/2}(\underline{\theta})]\right)^2}{\mathbb{E}_p[c(\underline{\theta})] } \nonumber\\
    &= \left[1 + \frac{\mathrm{Var}_p\!\left[c^{1/2}(\underline{\theta})\right]}{\left(\mathbb{E}_p[c^{1/2}(\underline{\theta})]\right)^2}\,\right]^{-1}.
\end{align}
This expression clearly shows that $NC_{q^*} \leq NC_p$. 
Moreover, by treating $c^{1/2}(\underline{\theta})$ as a random variable with $\underline{\theta} \sim p$, We see that the advantage grows with the ratio between the variance and the squared mean of $c^{1/2}$..
This provides an intuitive interpretation of the effectiveness of optimal IS: its benefit is more pronounced when the distribution of the square root of the cost function is broad rather than sharply peaked.
We conclude this section with two remarks. 
First, although the magnitude of the net-cost reduction depends on the specific problem instance,
switching from the original sampling distribution $p$ to the optimal IS distribution $q^*$ 
does not modify the structure of the quantum circuit. 
The same family of channels $\mathcal{E}_{\underline{\theta}}$ must still be implemented, 
and the measurement procedure remains unchanged; only the classical distribution used to sample the parameters $\underline{\theta}$ is altered.
Second, all the results and derivations presented above apply equally well when the parameters $\underline{\theta}$ take discrete values, without any modification to the formalism.

\subsection{Composite Quantum Channels}
\label{sec:multilayer}
In the following, we focus on a setup that satisfies:
\begin{enumerate}
    \item The random vector of variables $\underline{\theta}$ has dimension $s$ 
    and its components $\theta_i$ are drawn independently. 
    \item The cost is additive:
    \begin{align}
        c(\underline{\theta}) = \sum_{i=1}^s c_i(\theta_i),
    \end{align}
    i.e., the total cost is the sum of individual contributions $c_i$ associated with each random variable.
\end{enumerate}
Notice that each variable $\theta_i$ may represent either a single scalar parameter 
or a whole collection of parameters.
This scenario frequently arises, for example when the channel $\mathcal{E}_{\underline{\theta}}$ is given by a sequential composition
\begin{align}
    \mathcal{E}_{\underline{\theta}} 
    = \mathcal{E}_{\theta_s} \circ \mathcal{E}_{\theta_{s-1}} \circ \cdots \circ \mathcal{E}_{\theta_1},
\end{align}
of quantum channels $\mathcal{E}_{\theta_i}$, 
each depending on its own independent parameter $\theta_i$.
For instance, 
this is the behavior of the Eigenpath Traversal \cite{BoixoKnillSomma2009} 
and \textsc{Qdrift}-type protocols \cite{campbell2018random} on an ideal quantum device.
Let $\{\theta_i\}$ be independent random variables. 
Then the quantities $c_i := c_i(\theta_i)$ can themselves be regarded as independent random variables
with mean $\mu_i$ and variance $\sigma_i^2$.
The total cost $c(\underline{\theta})$
is then a sum of independent random variables whose distribution approaches a Gaussian for large $s$ (central limit theorem), with mean and variance
\begin{align}
    \mu &= \sum_{i=1}^s \mu_i := s \bar{\mu},\\
    \sigma^2 &= \sum_{i=1}^s \sigma_i^2 := s \bar{\sigma}^2,
\end{align}
where $\bar{\mu}$ and $\bar{\sigma}^2$ denote the average mean and variance, 
respectively.  
Assuming that both $\bar{\mu}$ and $\bar{\sigma}^2$ converge to a finite value 
for $s \to \infty$,
we find (see Appendix~\ref{thm:TheoremIIDLInfty}) that
\begin{align}
    \mathbb{E}_{p}[c(\underline{\theta})] 
        &\sim s\,\bar{\mu} + \mathcal{O}(\sqrt{s}),\\
    \big(\mathbb{E}_{p}[c^{1/2}(\underline{\theta})]\big)^2
        &\sim s\,\bar{\mu} - \frac{\bar{\sigma}^2}{4\bar{\mu}} + \mathcal{O}(s).
\end{align}
Thus, the ratio in Eq.~\eqref{eq:netCostRatio} becomes
\begin{equation}
    \frac{NC_{q^*}}{NC_p}
    = 1 - \frac{1}{s}
    \left(\frac{\bar{\sigma}}{2\bar{\mu}}\right)^2
    + o\!\left(\frac{1}{s}\right),
\end{equation}
which clearly approaches $1$ as $s \to \infty$.  
Hence, in multilayer settings with the assumption of independent components and additive costs, 
the advantage of IS diminishes with increasing size $s$,
any cost reduction achievable through independent sampling becomes asymptotically negligible. 
However, relaxing either of these two assumptions is enough to produce entirely different behaviors.
Moreover, distinct scenarios arise when error detection schemes are applied (see Section~\ref{sec : IS with error detection}).

\section{IS for imperfect quantum channels}\label{sec:IS_imperfect_channels}
In the previous section, 
we showed that average computational cost, 
for instance circuit depth, can be reduced by IS. 
This might suggest that the overall noise-induced error could change and eventually be decreased.
In this section we show that this intuition is misleading, 
as the bias of the final estimator is totally uneffected by IS. 
More generally,
in many practical scenarios, 
the quantum channel $\mathcal{E}_{\underline{\theta}}$ introduced in Eq.~\eqref{eq: state decomposition}
can only be approximated by a quantum channel \(\mathcal{N}_{\underline{\theta}}(\rho)\),
both because of algorithmic biases and hardware imperfections, including gate noise and decoherence.
Let
\begin{equation}
     \varepsilon_{\underline{\theta}}(\rho) := \mathcal{E}_{\underline{\theta}}(\rho) - \mathcal{N}_{\underline{\theta}}(\rho)\,,
\end{equation}
represent the error in the approximation of the state $\mathcal{E}_{\underline{\theta}}(\rho)$.
Using IS, 
the expected error in the prepared state is $\mathbb{E}_q[\varepsilon_{\underline{\theta}}(\rho)]$
and could, in principle, be reduced through a suitable choice of the sampling distribution $q(\underline{\theta})$. 
Nevertheless, 
such a reduction does not translate into a smaller bias in the final observable, 
since the latter is exactly compensated by the weighting factor $w(\underline{\theta})$. 
In other words, 
the final bias is fully determined by $p(\underline{\theta})$ 
and is not altered by the choice of the IS distribution,
as proved in the following theorem.
\begin{theorem}\label{thm:imperfect_quantum_channels}
Let 
\begin{align}
\mathcal{N}_{\underline{\theta}}(\rho)  = 
 \mathcal{E}_{\underline{\theta}}(\rho) 
- \varepsilon_{\underline{\theta}}(\rho)
\end{align}
be an approximate implementation of the channel $\mathcal{E}_{\underline{\theta}}$, 
applied with probability $p(\underline{\theta})$.
Let $q(\underline{\theta})$ denote the importance-sampling distribution. 
Then,
the bias introduced by the channel approximation in the 
importance-sampled estimator of any observable $O$ is independent of $q$ and is given by
\begin{align}
    \mathbb{E}_p\left[ \Tr\!\left[\, O \, \varepsilon_{\underline{\theta}}(\rho) \right] \right]
\end{align}
\end{theorem}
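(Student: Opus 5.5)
The plan is to compute the expectation of the importance-sampled estimator when the circuits actually implement $\mathcal{N}_{\utheta}$, and compare it with the ideal value $\Tr[O\rho_\text{target}]$. The only change from Section~\ref{sec: General Discussion} is that the outcome distribution becomes $\tilde r(x|\utheta) := \Tr[\Pi_x\,\mathcal{N}_{\utheta}(\rho)]$ instead of $r(x|\utheta)$. This is still a valid conditional probability distribution, because $\mathcal{N}_{\utheta}$ is a quantum channel and $\mathcal{N}_{\utheta}(\rho)$ is therefore a density matrix. The estimator is unchanged: sample $\utheta\sim q$, measure $O$ on $\mathcal{N}_{\utheta}(\rho)$, and output $w(\utheta)x$ with $w=p/q$.

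First I will write out the expectation explicitly:
\begin{align}
\mathbb{E}_{q\tilde r}[w(\utheta)x] = \int d\utheta\, q(\utheta)\,\frac{p(\utheta)}{q(\utheta)}\sum_x x\,\Tr[\Pi_x\,\mathcal{N}_{\utheta}(\rho)] = \mathbb{E}_p\!\left[\Tr[O\,\mathcal{N}_{\utheta}(\rho)]\right].
\end{align}
In this step, $q$ cancels against the weight and $\sum_x x\Pi_x=O$. The identity requires $q(\utheta)>0$ wherever $p(\utheta)>0$, so that $w$ is well defined on the support of $p$. I will state this as the standing assumption of importance sampling; it is the same condition already used implicitly in the identity $\mathbb{E}_{qr}[wx]=\mathbb{E}_{pr}[x]$.

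Next I will substitute $\mathcal{N}_{\utheta}=\mathcal{E}_{\utheta}-\varepsilon_{\utheta}$ and use linearity of the trace and of the expectation. This gives
\begin{align}
\mathbb{E}_p\!\left[\Tr[O\,\mathcal{E}_{\utheta}(\rho)]\right]-\mathbb{E}_p\!\left[\Tr[O\,\varepsilon_{\utheta}(\rho)]\right].
\end{align}
By Eq.~\eqref{eq: state decomposition}, the first term equals $\Tr[O\rho_\text{target}]$.

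The bias, defined as the ideal value minus the expected estimate, is therefore $\mathbb{E}_p[\Tr[O\varepsilon_{\utheta}(\rho)]]$. This expression contains no $q$, so the bias is independent of the importance-sampling distribution, which proves the claim.

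The argument has no real obstacle beyond two points that need care. The first is the support condition on $q$ noted above. The second is the sign convention, which follows from the definition of $\varepsilon_{\utheta}$ as $\mathcal{E}_{\utheta}-\mathcal{N}_{\utheta}$. I will also remark that the variance, and hence the net cost, does depend on $q$, while the bias does not.
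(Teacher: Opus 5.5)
Your proof is correct and is essentially the paper's argument: cancel $q$ against $w=p/q$ so that the noisy estimator's mean becomes $\mathbb{E}_p[\Tr[O\,\mathcal{N}_{\utheta}(\rho)]]$, then subtract the ideal value. The only difference is the sign convention. You define the bias as ideal value minus expected estimate, which reproduces the stated $\mathbb{E}_p[\Tr[O\,\varepsilon_{\utheta}(\rho)]]$ exactly; the paper takes expected estimate minus ideal value, which gives its negative, and this does not affect the $q$-independence claim.
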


\begin{proof}
Define the ideal and noisy outcome probabilities for a POVM $\{\Pi_x\}_x$ as
\begin{align}
    r(x|\underline{\theta}) 
    &:= \Tr\!\left[\, \Pi_x\, \mathcal{E}_{\underline{\theta}}(\rho) \right], \\
    r'(x|\underline{\theta}) 
    &:= \Tr\!\left[\, \Pi_x\, \mathcal{N}_{\underline{\theta}}(\rho) \right]
\end{align}
respectively.
Replacing $\mathcal{E}_{\underline{\theta}}$ with $\mathcal{N}_{\underline{\theta}}$ 
leads to an estimator whose expectation value acquires a bias equal to
\begin{align}
    \mathbb{E}_{q r'}[w(\underline{\theta}) x] - \mathbb{E}_{p r}[x] &= \mathbb{E}_{p r'}[ x] - \mathbb{E}_{p r}[x] \nonumber \\
         &= \mathbb{E}_{p (r'-r)}[x]
\end{align}
where we used $p(\underline{\theta}) = q(\underline{\theta}) w(\underline{\theta})$. 
This shows the bias is $q$-independent, concluding the proof.

\end{proof}
In summary, while IS effectively reduces computational resources, 
it neither alleviates nor exacerbates the bias 
induced by imperfect state preparation or hardware noise. 
Ultimately, 
the expectation value of any observable 
is fully determined by the original distribution $p(\underline{\theta})$ 
and the implemented quantum channel $\mathcal{N}_{\underline{\theta}}$.

\section{IS with error detection}\label{sec : IS with error detection}

In this section, we consider a scenario in which a device, after completing a circuit execution, outputs a classical \emph{flag} 
$F \in \{\mathtt{ok}, \mathtt{err}\}$ indicating whether a detectable fault occurred. 
We refer to runs with $F=\mathtt{ok}$ as successful and runs with $F=\mathtt{err}$ as failed.
We assume that the flag is perfectly reliable, with no false positives or negatives. 
If an error occurs, two situations can arise. In some cases, the error can be corrected, 
allowing the final outcome to be treated as if no error occurred.
n these cases, the additional resources required for the error-correction procedure should be accounted for in the overall cost function.
In other cases, errors can be detected but not corrected. 
This situation is common in current quantum devices, for example, in trapped-ion systems, 
where certain faults can be reliably flagged but not actively corrected \cite{hilder2022fault, self2024protecting, ransford2025helios}. 
Alternatively, errors may be detected through violations of conservation laws or symmetries that are preserved by the ideal, error-free evolution~\cite{linke2018measuring,bonet2018low,mcardle2019error,cai2021quantum}. 

In the case of errors that can be detected but not corrected, various strategies can be employed.
First we describe how these situations can be handled without IS. Consider a randomized algorithm for estimating $\Tr[O \, \rho_\text{target}]$, 
where $\rho_\text{target}$ can be decomposed as in Eq.~\eqref{eq: state decomposition} using a probability distribution $p(\underline{\theta})$. 
The simplest procedure is to sample a value of $p(\underline{\theta})$, 
apply the corresponding parametrized quantum channel $\mathcal{E}_{\underline{\theta}}$ on the initial state $\rho$, 
and measure $O$. 
If the flag is $\mathtt{ok}$, the outcome is recorded; otherwise, the whole process is repeated with the same $\utheta$ until success. This is necessary in order for the samples $\utheta$ to come from the target distribution $p(\utheta)$ and the final state prepared be $\rho_\text{target}$.
This in turn implies that the expected cost to record one successful outcome depends on both the cost associated with $\mathcal{E}_{\underline{\theta}}$ 
and the success probability $f(\underline{\theta})>0$ of the flag indicating $\mathtt{ok}$.
Since we are not using IS at this stage, the expected cost coincides with the net cost
and reads
\begin{align}
    NC_p = \mathbb{E}_p\left[\frac{c(\underline{\theta})}{f(\underline{\theta})}\right]\,, \label{eq:netCostLinftyWithP}
\end{align}
as we'll prove later in this section.
When $f(\underline{\theta})$ is small for some $\utheta$, the corresponding circuit needs to be run many times on average before it doesn't flag an error. This drives up the average cost of the algorithm. In response one may want to impose a limit $L$ on the number of repetitions after which a new $\utheta$ is sampled from $p(\utheta)$. 
However, this effectively alters the sampling distribution $p(\utheta)$ 
and IS is required to maintain an unbiased estimator.
In what follows we present two approaches that allow recording independent outcomes whose expectation values exactly correspond to $\Tr[O \, \rho_\text{target}]$, 
differing in how they handle unsuccessful runs. 
We then analyze these methods and derive the optimal IS distributions that minimize the net cost for each protocol for a given target estimation error.

\subsection{ZeroFill($L$)}\label{sec:RUSZeroFill}
The ZeroFill$(L)$ protocol proceeds through a cycle of three steps:
\begin{enumerate}
    \item \textbf{Parameter Sampling:}  
    Sample a random  \(\underline{\theta}\) according to the distribution \(q(\underline{\theta})\).
    \item \textbf{Circuit Execution and Detection Loop:}  
    For the sampled \(\underline{\theta}\), run the corresponding quantum circuit, incurring a cost \(c(\underline{\theta})\) and subject to an error-detection scheme.  
    The execution is repeated up to a maximum of \(L\) times.
    \label{step:loop}
    \begin{itemize}
        \item \textbf{Successful Run:}  
        With probability \(f(\underline{\theta})\), no error is detected.  
        In this case, measure the observable \(O\), multiply the resulting outcome by the corresponding weight \(w_\mathrm{Z}(\utheta;L)\) (see Eq.~\eqref{eq: weights Zerofill}), and record it.  
        Then proceed to sample the next \(\underline{\theta}\) from $q(\utheta)$.

        \item \textbf{Failed Run (Error Detected):}  
        With probability \(1 - f(\underline{\theta})\), an error is detected.  
        The current circuit run is discarded, and the execution is repeated using the same \(\underline{\theta}\).
    \end{itemize}

    \item \textbf{Maximum-Repetition Rejection:}  
    If all \(L\) executions associated with the fixed \(\underline{\theta}\) result in detected errors, a value of \(0\) is recorded (Zero-Fill), and the procedure moves on to sample the next \(\underline{\theta}\).
\end{enumerate}

The objective is to estimate $\mathbb{E}_{p}\!\left[O\,\mathcal{E}_{\underline{\theta}}(\rho)\,\right]$
by repeating the procedure many times
and using the average of the results as the estimator.
In order for the estimator to be unbiased the weight function has to be (see Appendix~\ref{appx:proofZeroFillDiscardL} Eq.~\eqref{eq:unbiasedwDetectError})
\begin{align}\label{eq: weights Zerofill}
    w_\mathrm{Z}(\utheta;L) = \frac{p(\utheta)}{q(\utheta)k(\utheta;L)}\,,
\end{align}
where we introduced $k(\utheta;L) := 1-(1-f(\utheta))^L$ as the probability of 
having at least one successful run, 
for further convenience.
Notice that for $L = 1$ one has
\begin{align}
    k(\utheta; 1) = f(\utheta)\,,
\end{align}
while, in the opposite regime
\begin{align}
    \lim_{L \mapsto\infty} k(\utheta; L) = 1.
\end{align}

\subsection{Discard($L$)}
Similarly to the previous algorithm, in the Discard$(L)$ protocol we allow at most $L$ unsuccessful circuit repetitions for each independent sample $\underline{\theta}$.  
However, unlike the previous protocol, if all $L$ attempts of step \ref{step:loop} fail, no outcome is recorded and a new sample is drawn.  
To maintain the unbiasedness of the estimator, 
the weight function 
which we here denote as $w_{\mathrm{D}}(\utheta;L)$ must satisfy (see Appendix~\ref{appx:proofZeroFillDiscardL}  Eq.~\eqref{eq:discard_unbiasedwDetectError})
\begin{align}\label{eq: weights Discard}
    w_\mathrm{D}(\underline{\theta}; L) = \mathbb{E}_q\left[k(\underline{\theta}; L)\right] \frac{p(\underline{\theta})}{q(\underline{\theta})\,k(\underline{\theta}; L)} \, ,
\end{align}
where 
\begin{align}\label{eq : weights proportionality factor}
    \mathbb{E}_q\left[k(\underline{\theta};L)\right] = \int d\underline{\theta}\, q(\underline{\theta})\, k(\underline{\theta}; L) \in [0,1]
\end{align}
Notice that the weight function in Eq.~\eqref{eq: weights Discard} is proportional to that of the ZeroFill($L$) protocol:
\begin{align}
    w_\mathrm{D}(\underline{\theta}; L) =\mathbb{E}_q\left[k(\underline{\theta}; L)\right]\, w_\mathrm{Z}(\underline{\theta}; L).
\end{align}
In particular, since $k(\underline{\theta}; L) \to 1$ as $L \to \infty$, then
\begin{align}
    \lim_{L \to \infty}\mathbb{E}_q\left[k(\underline{\theta}; L) \right] = 1,
\end{align}
which implies that $w_\mathrm{Z}$ and $w_\mathrm{D}$ become identical in this limit.

\subsection{Algorithm performance}\label{sec : algorithms performances}
The performance of the two preceding algorithms can be assessed in terms of the expected cost per recorded outcome and the corresponding variance. 
As shown in the following theorem, 
these quantities admit identical expressions for both algorithms
in terms of their respective weight functions $w_\mathrm{Z}(\utheta;L)$ or $w_\mathrm{D}(\utheta;L)$.

\begin{theorem}\label{thm:algorithms performancesZeroFillDiscardL}
    Consider a target distribution $p(\underline{\theta})$, a cost function $c(\underline{\theta})$, 
    and one of the schemes for handling errors as described above.
    Let $f(\underline{\theta})$ be the probability that the circuit associated with $\utheta$ will run without error and  
    $q(\underline{\theta})$ be the sampling distribution for $\utheta$.
    The expected cost for a single recorded outcome,
    expressed in terms of the weight function $w_\mathrm{Z,D}(\utheta;L)$,
    is
    \begin{equation}\label{eq: cost Zerofill}
        \mathbb{E}_p\!\left[\frac{c(\underline{\theta})}{w_\mathrm{Z,D}(\underline{\theta};L)\, f(\underline{\theta})}\right],
    \end{equation}
    while its worst-case variance, for all observables satisfying $\|O\|\leq 1$, is
    \begin{align}\label{eq: variance Zerofill}
        \mathbb{E}_p\!\left[w_\mathrm{Z,D}(\underline{\theta};L)\right].
    \end{align}
\end{theorem}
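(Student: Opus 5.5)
The plan is to treat the two protocols in parallel. For each one I would compute two things separately: the expected resources consumed between consecutive recorded outcomes, and the second moment of a recorded outcome. The starting point common to both is the cost of one \emph{cycle}, meaning one parameter sample $\utheta\sim q$ followed by the detection loop of step~\ref{step:loop}. For fixed $\utheta$, the number of executions $N$ in the loop is a geometric random variable with success probability $f(\utheta)$, truncated at $L$. Hence
\begin{align}
\mathbb{E}[N\,|\,\utheta]=\sum_{j=0}^{L-1}(1-f(\utheta))^j=\frac{k(\utheta;L)}{f(\utheta)}\,,
\end{align}
and the expected cost of one cycle is $\mathbb{E}_q\!\left[c(\utheta)\,k(\utheta;L)/f(\utheta)\right]$. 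A cycle produces a successful measurement with probability $k(\utheta;L)$, so the parameter attached to a successful cycle has the (unnormalized) law $q(\utheta)k(\utheta;L)$.

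\textbf{ZeroFill($L$).} Every cycle records exactly one value, so the cost per recorded outcome equals the cycle cost. Using Eq.~\eqref{eq: weights Zerofill} in the form $q k = p/w_\mathrm{Z}$, the change of measure gives
\begin{align}
\mathbb{E}_q\!\left[\frac{c\,k}{f}\right]=\mathbb{E}_p\!\left[\frac{c}{w_\mathrm{Z}\,f}\right].
\end{align}
For the variance, I would argue as in the derivation preceding Eq.~\eqref{eq: net cost definition}. The recorded value is $w_\mathrm{Z}x$ with $|x|\le 1$ on success and $0$ otherwise, so the worst-case variance is bounded by the second moment:
\begin{align}
\mathbb{E}_q\!\left[k\,w_\mathrm{Z}^2\right]=\mathbb{E}_p\!\left[w_\mathrm{Z}\right].
\end{align}

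\textbf{Discard($L$).} Cycles are i.i.d., and each records an outcome with probability $\mathbb{E}_q[k]$. The number of cycles per recorded outcome is therefore geometric with mean $1/\mathbb{E}_q[k]$. The total cost until a record is a sum of i.i.d.\ cycle costs up to a stopping time, so Wald's identity (equivalently, a renewal-reward argument) gives
\begin{align}
\frac{\mathbb{E}_q[c\,k/f]}{\mathbb{E}_q[k]}\,.
\end{align}
Substituting $w_\mathrm{D}=\mathbb{E}_q[k]\,p/(qk)$ from Eq.~\eqref{eq: weights Discard} turns this into $\mathbb{E}_p[c/(w_\mathrm{D} f)]$. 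For the variance, the recorded $\utheta$ is distributed as $qk/\mathbb{E}_q[k]$, so the second moment is bounded by
\begin{align}
\frac{\mathbb{E}_q[k\,w_\mathrm{D}^2]}{\mathbb{E}_q[k]}=\mathbb{E}_p[w_\mathrm{D}]\,,
\end{align}
where the equality again follows from one factor of $w_\mathrm{D}$ cancelling $qk/\mathbb{E}_q[k]$ against $p$. This bound is attained up to the subtracted squared mean, which is dropped exactly as in Section~\ref{sec: General Discussion}.

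The step I expect to need the most care is the cost accounting for Discard($L$). There, one must justify that the expected cost \emph{per recorded outcome} is the ratio of the expected cycle cost to the success probability. Cycle cost and success are correlated through $\utheta$ and through $N$, so a naive product of expectations is not valid. I would handle this by noting that the cost accrued before the first recorded cycle is a sum of i.i.d.\ (cost, success-indicator) pairs stopped at the first success, and then applying Wald's identity to that stopped sum. A secondary point to check is that the ZeroFill variance really reduces to $\mathbb{E}_p[w_\mathrm{Z}]$ after taking the worst case over observables. This requires maximizing $x^2\le 1$ pointwise under the joint law of $(\utheta,x)$, as was done before Eq.~\eqref{eq: net cost definition}.
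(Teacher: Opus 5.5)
Your proposal is correct and follows essentially the same route as the paper's appendix proof: the truncated-geometric count $k(\utheta;L)/f(\utheta)$ of runs per cycle, the change of measure $q\,k = p/w_\mathrm{Z}$ for ZeroFill($L$), and the rejection-sampling view with accepted law $q\,k/\mathbb{E}_q[k]$ for Discard($L$), with the worst-case variance bounded by the second moment in both cases. Your explicit use of Wald's identity for the Discard($L$) cost per recorded outcome is a more careful justification of a step the paper simply asserts as (expected number of draws) $\times$ (expected cycle cost).
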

\begin{proof}
  A proof is provided in Appendix~\ref{appx:proofZeroFillDiscardL}.
\end{proof}

\noindent
Notice that,
although Eqs.~\eqref{eq: cost Zerofill} and~\eqref{eq: variance Zerofill} hold for both the ZeroFill($L$) and Discard($L$) protocols,
their actual performance differs because the corresponding weight functions are not identical. 
In particular, the expected cost of ZeroFill($L$) equals that of Discard($L$) multiplied by the factor $\mathbb{E}_q\left[k(\underline{\theta};L)\right]$. Conversely, the variance bound of ZeroFill($L$) is equal to that of Discard($L$) divided by $\mathbb{E}_q\left[k(\underline{\theta};L)\right]$.
Also,
we emphasize that Eq.~\eqref{eq: cost Zerofill} quantifies the expected cost to record a successful outcome. 
If, instead, one is only interested in the cost of a single run of the algorithm regardless of whether the outcome is successful and ultimately recorded,
it suffices to consider $\mathbb{E}_q[c(\underline{\theta})]$.

\begin{corollary}\label{cor:optimal_Zerofill_Discard}
For both the ZeroFill($L$) and Discard($L$) protocols, characterized by a success probability
\( f(\underline{\theta}) \) and a sampling distribution \( q(\underline{\theta}) \),
the net cost satisfies
\begin{align}\label{eq:NetCostZeroFill}
    NC_q \;\geq\;
    \left(
        \mathbb{E}_p\!\left[
            \sqrt{\frac{c(\underline{\theta})}{f(\underline{\theta})}}
        \right]
    \right)^2.
\end{align}
Equality holds if and only if the sampling distribution is given by
\begin{align}\label{eq: optimal q detection}
    q_L^*(\underline{\theta})
    \;\propto\;
    \frac{p(\underline{\theta})}{k(\underline{\theta};L)}
    \sqrt{\frac{f(\underline{\theta})}{c(\underline{\theta})}}\,.
\end{align}
\end{corollary}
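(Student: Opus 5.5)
The plan is to reduce the corollary to the Cauchy--Schwarz argument used for Theorem~\ref{thm:optSampling}, with the cost and variance expressions supplied by Theorem~\ref{thm:algorithms performancesZeroFillDiscardL}. The net cost is the product of the expected cost per recorded outcome and the worst-case variance. For either protocol, with $w = w_\mathrm{Z,D}(\utheta;L)$, this gives
\begin{align}
NC_q = \mathbb{E}_p\!\left[\frac{c(\utheta)}{w(\utheta)\, f(\utheta)}\right]\,\mathbb{E}_p\!\left[w(\utheta)\right].
\end{align}
The only feature of the protocol that enters here is the weight function $w$; the distribution $q$ appears only through $w$.

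\textbf{Lower bound.} Write the integrand of the first factor as $\bigl(\sqrt{c/(wf)}\bigr)^2$ and that of the second as $(\sqrt{w})^2$. Cauchy--Schwarz in $L^2(p)$ then gives
\begin{align}
NC_q \ge \left(\mathbb{E}_p\!\left[\sqrt{\frac{c}{wf}}\,\sqrt{w}\right]\right)^2 = \left(\mathbb{E}_p\!\left[\sqrt{\frac{c}{f}}\right]\right)^2 .
\end{align}
The weight cancels, so the bound is independent of $q$, of $L$, and of which protocol is used.

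\textbf{Equality.} Equality in Cauchy--Schwarz holds iff $\sqrt{w}$ and $\sqrt{c/(wf)}$ are proportional $p$-almost everywhere, i.e.\ iff $w(\utheta) \propto \sqrt{c(\utheta)/f(\utheta)}$.
\begin{itemize}
\item For ZeroFill, $w_\mathrm{Z} = p/(qk)$, so the condition becomes $q \propto (p/k)\sqrt{f/c}$, which is Eq.~\eqref{eq: optimal q detection} after normalization.
\item For Discard, $w_\mathrm{D} = \mathbb{E}_q[k]\,p/(qk)$. The factor $\mathbb{E}_q[k]$ is a constant independent of $\utheta$, so the condition again reduces to $q \propto (p/k)\sqrt{f/c}$.
\end{itemize}

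\textbf{Main obstacle.} The delicate point is the "only if" direction for Discard. Since $\mathbb{E}_q[k]$ itself depends on $q$, one must check that the proportionality condition on $w_\mathrm{D}$ is a condition on $q$ alone. It is: a constant prefactor does not affect proportionality. Any $q$ satisfying $w_\mathrm{D}\propto\sqrt{c/f}$ must therefore equal the normalized $q_L^*$, and conversely $q_L^*$ attains equality whatever value $\mathbb{E}_{q_L^*}[k]$ takes.

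Two further checks are needed. First, $q_L^*$ must be normalizable, which requires $\mathbb{E}_p[\sqrt{f/c}/k]<\infty$; I will state this as an implicit assumption, as in Theorem~\ref{thm:optSampling}. Second, $c$, $f$ and $k$ must be strictly positive on the support of $p$, so that every quantity above is well defined.
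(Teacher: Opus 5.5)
Your proposal is correct and follows the paper's proof. Like the paper, you multiply the cost and variance expressions from Theorem~\ref{thm:algorithms performancesZeroFillDiscardL}, apply Cauchy--Schwarz to obtain the weight-independent bound with equality iff $w_\mathrm{Z,D}\propto\sqrt{c/f}$, and then invert $q\propto p/(k\,w_\mathrm{Z,D})$ for both protocols. Your remark that the $q$-dependent prefactor $\mathbb{E}_q[k(\utheta;L)]$ in $w_\mathrm{D}$ is a scalar, and so drops out of the proportionality condition, spells out what the paper leaves implicit; the same holds for your positivity and normalizability caveats.
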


\begin{proof}
We consider the net cost
\begin{align*}
    NC_q
    \;\geq\;
    \mathbb{E}_p\!\left[
        \frac{c(\underline{\theta})}
        {w_\mathrm{Z,D}(\underline{\theta};L)\, f(\underline{\theta})}
    \right]
    \cdot
    \mathbb{E}_p\!\left[
        w_\mathrm{Z,D}(\underline{\theta};L)
    \right],
\end{align*}
which is obtained by multiplying the expected cost in
Eq.~\eqref{eq: cost Zerofill} with the variance bound in
Eq.~\eqref{eq: variance Zerofill}.
Applying the Cauchy--Schwarz inequality yields
Eq.~\eqref{eq:NetCostZeroFill}, with equality if and only if
\begin{align}
    w_\mathrm{Z,D}(\underline{\theta};L)
    \;\propto\;
    \sqrt{\frac{c(\underline{\theta})}{f(\underline{\theta})}}\,.
\end{align}
Finally, since both the ZeroFill($L$) and Discard($L$) protocols satisfy
\begin{align}
    q(\underline{\theta})
    \;\propto\;
    \frac{p(\underline{\theta})}
    {k(\underline{\theta};L)\, w_\mathrm{Z,D}(\underline{\theta};L)},
\end{align}
the optimal sampling distribution in
Eq.~\eqref{eq: optimal q detection} follows.
\end{proof}
\noindent
Remarkably,
both the optimal net cost and the associated sampling distribution $q^*_L$ are identical for the ZeroFill($L$) and Discard($L$) algorithms.
Also, 
while the optimal sampling distribution $q_L^*(\underline{\theta})$ depends on $L$, 
the resulting optimal net cost does not.
In the limit $L \to \infty$
we have $k(\underline{\theta}; L) \to 1$ and 
$\mathbb{E}_q\!\left[k(\underline{\theta}; L)\right] \to 1$. 
In this regime, the weight function for both algorithms reduces to the standard IS form
\begin{align}
    w(\underline{\theta}) = \frac{p(\underline{\theta})}{q(\underline{\theta})}\,,
\end{align}
and from Theorem~\ref{thm:algorithms performancesZeroFillDiscardL} we recover that sampling directly from $p(\underline{\theta})$ yields a worst-case variance bound equal to $1$ and the expected cost in Eq.~\eqref{eq:netCostLinftyWithP} which is strictly larger than the optimal net cost achievable through IS for any positive cost function. 
Note that the optimal sampling distribution in the $L \to \infty$ limit follows from Corollary~\ref{cor:optimal_Zerofill_Discard} and is given by
\begin{align}
    q^*_\infty(\underline{\theta}) \propto p(\underline{\theta}) 
    \sqrt{\frac{f(\underline{\theta})}{c(\underline{\theta})}}\,.
\end{align}


\subsection{A Case Study: IID Parameter Model under Poisson Error Statistics}
In this section, we present an analytically tractable instance of the general framework 
under the  assumptions introduced in Section~\ref{sec:multilayer}. 
We consider a collection of parameters $\{\theta_1,\ldots,\theta_s\}$ modeled as independent and identically distributed random variables, and assume that the total cost function $c(\underline{\theta})$ is additive, with identical single-parameter contributions associated with each $\theta_i$.
We model the probability of observing no detectable errors through a Poissonian ansatz that depends on the total cost,
\begin{align}
    f(\underline{\theta}) = e^{-\lambda c(\underline{\theta})}, \qquad \lambda > 0 \, .
\end{align}
This modeling choice is appropriate,
e.g.,
in settings where the cost function faithfully reflects the total execution time of a quantum circuit 
and errors arise from stationary noise processes acting continuously on the hardware. 
Under these assumptions, 
the probability of remaining error-free decays exponentially with the accumulated cost, 
naturally leading to the Poissonian form above.

\begin{theorem} \label{thm:TheoremIIDLInfty}
Given a target distribution $p(\underline{\theta})$, 
let $q(\underline{\theta})$ be an arbitrary sampling distribution. Consider i.i.d. random variables $\underline{\theta} = (\theta_1,\ldots,\theta_s)$ with an additive cost
\begin{align}
    c(\underline{\theta}) = \sum_{i=1}^s c_1(\theta_i).
\end{align}
Let the success probability of the detectable-error mechanism be
\begin{align}
    f(\underline{\theta}) = e^{-\lambda c(\underline{\theta})},
\end{align}
for some $\lambda \ge 0$. Define
\begin{align}
    \mu_1 = \mathbb{E}_p[c_1(\theta)], 
    \qquad
    \sigma_1^2 = \mathrm{Var}_p[c_1(\theta)],
\end{align}
as the mean and variance of the single-variable cost function $c_1(\theta)$.
Using one of the protocols described in Section~\ref{sec : IS with error detection} with $L = \infty$,
the ratio between the net cost obtained under the optimal sampling distribution $q^*(\underline{\theta})$ 
(as characterized by Theorem~\ref{thm:TheoremIIDLInfty}) and that obtained under the original distribution $p(\underline{\theta})$ admits the following asymptotic expansion
\begin{align}
\frac{NC_{q^*}}{NC_p}(s)
    &= 
      \left(
      \frac{\mu_1+ \frac{\lambda \sigma_1^2}{2}}
           {\mu_1+ \lambda \sigma_1^2}
           \right)e^{-s\left(\lambda\sigma_1\over2\right)^2}\nonumber \\
           & \;\; \times
      \left[1-\frac{1}{4s}
      \left(\frac{\sigma_1}{\mu_1+ \frac{\lambda \sigma_1^2}{2}}\right)^2
      + O\!\big(s^{-3/2}\big) \right].
\end{align}
\end{theorem}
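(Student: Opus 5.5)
By Corollary~\ref{cor:optimal_Zerofill_Discard} with $L=\infty$, the optimal net cost is $NC_{q^*}=\big(\mathbb{E}_p[\sqrt{c/f}]\big)^2=\big(\mathbb{E}_p[c^{1/2}e^{\lambda c/2}]\big)^2$. By Eq.~\eqref{eq:netCostLinftyWithP}, the reference cost is $NC_p=\mathbb{E}_p[c\,e^{\lambda c}]$. The plan is to evaluate both expectations through the moment generating function of the single-variable cost, $M(t):=\mathbb{E}_p[e^{t c_1(\theta)}]$, which I assume finite on $[0,\lambda]$. Independence of the $\theta_i$ together with additivity of the cost gives $\mathbb{E}_p[e^{tc}]=M(t)^s$.

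\textbf{Denominator.} Since $c\,e^{\lambda c}=\partial_\lambda e^{\lambda c}$, we get exactly
\begin{align}
NC_p=\partial_\lambda M(\lambda)^s=s\,M(\lambda)^{s}\,m(\lambda),
\end{align}
where $m(t):=M'(t)/M(t)$ is the mean of $c_1$ under the exponentially tilted law $p_t(\theta)\propto p(\theta)e^{t c_1(\theta)}$.

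\textbf{Numerator.} I rewrite it as a tilted expectation: $\mathbb{E}_p[c^{1/2}e^{\lambda c/2}]=M(\lambda/2)^s\,\mathbb{E}_{p_{\lambda/2}}[c^{1/2}]$. Under $p_{\lambda/2}$ the variables $c_1(\theta_i)$ are still i.i.d., now with mean $m(\lambda/2)$ and variance $v(\lambda/2)=m'(\lambda/2)$. I then expand $c^{1/2}$ around $s\,m(\lambda/2)$, exactly as in the argument behind the expansion of $(\mathbb{E}_p[c^{1/2}])^2$ in Section~\ref{sec:multilayer}. Write $c=sm(1+\xi)$ with $\xi=O_P(s^{-1/2})$ and use $\sqrt{1+\xi}=1+\xi/2-\xi^2/8+O(\xi^3)$. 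This gives
\begin{align}
\mathbb{E}_{p_{\lambda/2}}[c^{1/2}]=\sqrt{s\,m}\Big(1-\tfrac{v}{8 s m^2}+O(s^{-3/2})\Big),
\end{align}
with $m=m(\lambda/2)$ and $v=v(\lambda/2)$. Squaring turns the correction into $1-\tfrac{v}{4sm^2}$.

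\textbf{Assembling the ratio.} Putting the two pieces together,
\begin{align}
\frac{NC_{q^*}}{NC_p}=\left[\frac{M(\lambda/2)^2}{M(\lambda)}\right]^{s}\frac{m(\lambda/2)}{m(\lambda)}\Big[1-\tfrac{v(\lambda/2)}{4s\,m(\lambda/2)^2}+O(s^{-3/2})\Big].
\end{align}
To reach the stated form, I truncate the cumulant generating function at second order, $\log M(t)\simeq \mu_1 t+\sigma_1^2t^2/2$. This is exact for Gaussian $c_1$ and is the same CLT-level approximation used in Section~\ref{sec:multilayer}. Under this truncation:
\begin{itemize}
\item $2\log M(\lambda/2)-\log M(\lambda)=-\lambda^2\sigma_1^2/4$, which produces the factor $e^{-s(\lambda\sigma_1/2)^2}$;
\item $m(\lambda/2)=\mu_1+\lambda\sigma_1^2/2$ and $m(\lambda)=\mu_1+\lambda\sigma_1^2$, which produce the prefactor;
\item $v=\sigma_1^2$, which produces the $1/(4s)$ correction in the statement.
\end{itemize}
I will state explicitly that higher cumulants of $c_1$ alter the exponent by terms of order $s\lambda^3\kappa_3$.

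\textbf{Main obstacle.} I expect the hardest step to be making the $O(s^{-3/2})$ remainder of the tilted square-root expansion rigorous. The delicate region is near $c=0$, where $\sqrt{\cdot}$ is not smooth. I would handle it by splitting on the event $|c-sm|<sm/2$. On that event, Taylor's theorem with a bounded third derivative applies. Off that event, I would use $c^{1/2}\le 1+c$ together with a Chernoff bound under the tilted law, which makes the contribution exponentially small in $s$. The odd central moments of $\xi$ contribute at order $s^{-2}$ or smaller, so they do not affect the stated order.
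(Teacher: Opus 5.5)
Your argument is correct, and it takes a different route from the paper's. The paper (Appendix~\ref{proof:TheoremIIDLInfty}) first replaces the law of $c(\utheta)$ by the Gaussian $\mathcal{N}(s\mu_1,s\sigma_1^2)$, asserting via Berry--Esseen an additive $O(s^{-1/2})$ error for the test function $c^\alpha e^{\lambda\alpha c}$. It then completes the square, which is an exponential tilt of that Gaussian, and Laplace-expands $c^\alpha$ about the shifted mean $s(\mu_1+\lambda\alpha\sigma_1^2)$ for general $\alpha$. You instead tilt the exact law through $M(t)^s$, which gives
\[
\frac{NC_{q^*}}{NC_p}=\Big[\frac{M(\lambda/2)^2}{M(\lambda)}\Big]^{s}\,\frac{m(\lambda/2)}{m(\lambda)}\,\frac{\big(\mathbb{E}_{p_{\lambda/2}}[c^{1/2}]\big)^2}{s\,m(\lambda/2)}
\]
with no approximation; only the delta-method step is asymptotic.

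This buys two things the paper's route cannot. First, it proves the case $\lambda=0$ in full generality. Second, it shows that the Gaussian replacement is not a controlled step when $\lambda>0$. A tilt of fixed strength shifts the mean of $c$ by order $s$, i.e.\ by order $\sqrt{s}$ standard deviations. That is the large-deviation regime, where Berry--Esseen says nothing. Indeed the true exponent is
\[
s\,[2K(\lambda/2)-K(\lambda)]=-s\big(\tfrac{\lambda^2\sigma_1^2}{4}+\tfrac{\lambda^3\kappa_3}{8}+\cdots\big),
\]
with $K=\log M$ and $\kappa_3$ the third cumulant of $c_1$.

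So your closing caveat is essential, not cosmetic. The stated expansion, with relative error $O(s^{-3/2})$, holds only for quadratic $K$, or in a regime such as $s\lambda^3\kappa_3\to 0$. You should state this as a hypothesis rather than call it ``the same CLT-level approximation'' as in Section~\ref{sec:multilayer}. There $\lambda=0$ and no truncation is needed; here dropping $\kappa_3$ changes the ratio by a factor $e^{-s\lambda^3\kappa_3/8}$, which is not close to $1$. What the paper's route offers in exchange is the general-$\alpha$ formula \eqref{eq:netcost_gen_alpha}, used for $s^*$, within the Gaussian model.

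On your ``main obstacle'': the split with a Chernoff bound works for $\lambda>0$. The needed exponential moments of $p_{\lambda/2}$ near $\lambda/2$ follow from $M$ being finite up to $\lambda$, and the lower tail is harmless since $c_1\ge 0$. At $\lambda=0$, however, it would need exponential moments of $c_1$ that you have not assumed. You can drop the split entirely. Since $c\ge 0$ forces $\xi\ge -1$, the remainder $R(\xi)=\sqrt{1+\xi}-1-\xi/2+\xi^2/8$ satisfies $|R(\xi)|\le C|\xi|^3$ on all of $[-1,\infty)$. For $\xi\ge0$, Lagrange gives $0\le R\le \xi^3/16$; on $[-1,0)$, $R(\xi)/\xi^3$ extends continuously to $[-1,0]$ and is therefore bounded. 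Hence $|\mathbb{E}R(\xi)|\le C(\mathbb{E}\xi^4)^{3/4}=O(s^{-3/2})$, using only a finite fourth moment of $c_1$ under $p_{\lambda/2}$.
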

\begin{proof}
  A proof is provided in Appendix \ref{proof:TheoremIIDLInfty}.
\end{proof}
\noindent
In the no-error regime $\lambda = 0$,
\begin{align}
    \frac{NC_{q^*}}{NC_p}(s)
    \sim 1- \frac{1}{s}\,,
    \qquad s \to \infty ,
\end{align}
and the ratio converges to unity, indicating the absence of an asymptotic advantage from IS.
In contrast, for any $\lambda > 0$ the ratio of net cost decays exponentially with the system size. In particular,
\begin{align}
    \frac{NC_{q^*}}{NC_p}(s)
    \sim
    \exp\!\left[- s\left(\frac{\lambda \sigma_1}{2}\right)^2 \right],
    \qquad s \to \infty ,
\end{align}
so that the ratio vanishes exponentially whenever $\sigma_1 \neq 0$.
Moreover, the ratio $NC_{q^*}/NC_p(s)$ admits a unique global maximum at a finite value of $s$, given by
\begin{align}
    s^*
    &\approx \frac{1}{8}
    \left(\frac{\sigma_1^2}{\mu_1 + \frac{\lambda \sigma_1^2}{2}} \right)^2
    \left(
        1+ \sqrt{
            1+\frac{64}{\lambda^2\sigma_1^4}
            \left( \mu_1 + \frac{\lambda \sigma_1^2}{2}\right)^2
        }
    \right).
\end{align}
This value marks the size of $\utheta$ at which the relative net-cost reduction 
induced by IS is minimal before the exponential behavior dominates.

\section{Applications}\label{sec:applications}

In this section, we present representative applications that directly build on the results established above. Each application is treated in a separate subsection. For clarity, the subsections are self-contained and can be read independently and in arbitrary order.
Most examples concern scenarios in which the randomization is confined to a single block of operations within the quantum circuit ($s = 1$) and no error-detection scheme  is employed ($\lambda = 0$). The extension to multiple independently sampled blocks treated in Sec.~\ref{sec:multilayer}, as well as to settings with error detection, 
where Theorem~\ref{thm:TheoremIIDLInfty} applies,
is straightforward.

\subsection{\textsc{Qdrift}}
Given an Hamiltonian $H$,
the \emph{\textsc{Qdrift} protocol} \cite{campbell2018random} provides a stochastic method to approximate the time evolution of an inital quantum state $\rho$ under $H$ for a duration $t$.
This is achieved by \textsc{Qdrift} using a quantum channel $\mathcal{C}^s$
that is the composition of an integer number $s$ of identical channels $\mathcal{C}$,
each approximating
\begin{equation}
\mathcal{U}[\rho] = e^{-i \frac{t}{s} H} \rho e^{i \frac{t}{s} H} = e^{\frac{t}{s} \mathcal{L}}[\rho], 
\end{equation}
where 
\begin{equation}
\mathcal{L}[\rho] = -i[H, \rho]
\end{equation}
is the Liouvillian.
Let
\begin{equation}
H = \sum_j H_j,
\end{equation}
be a decomposition such that each term $H_j$ is associated with a unitary operator $V_j = e^{-i H_j t_j}$ that can be implemented on quantum hardware
with a cost $c(j)$. 
The single-step \textsc{Qdrift} channel is defined by
\begin{align}
\mathcal{C}[\rho] &= \sum_j p(j) e^{-i \frac{t}{s p(j)} H_j} \,\rho\, e^{i \frac{t}{s p(j)} H_j} \\
&= \sum_j p(j) e^{\frac{t}{s p(j)} \mathcal{L}_j}[\rho]\,,
\end{align}
where 
\begin{equation}
\mathcal{L}_j[\rho] = -i[H_j, \rho],
\end{equation}
and
\begin{equation}\label{eq:single_step_distribution_QDrift}
p(j) = \frac{\norm{H_j}}{\lambda}\,, \qquad \lambda = \sum_j \norm{H_j}\,.
\end{equation}
Approximating the $\mathcal{U}$ channel according to the \textsc{Qdrift} 
protocol in order to estimate the expectation value of an observable $O$ yields an error bounded by
\begin{align}\label{eq:qdrift_bias}
    \big|\Tr[O \mathcal{C}[\rho]] - \Tr[O \mathcal{U}[\rho]]\big| \leq \norm{O}\left(\frac{2 t \lambda}{s}\right)^2 e^{\frac{2 t \lambda}{s}}\,.
\end{align}
Following the framework of IS 
from Sec.~\ref{sec: General Discussion}, 
we introduce the more general channel
\begin{align}
\mathcal{C}_q[\rho] &= \sum_j q(j) e^{\frac{t }{s p_j} \mathcal{L}_j}[\rho]\,,
\end{align}
where $q(j)$ is an alternative sampling distribution,
and simultaneously multiply the observable measured at the end of the circuit
by the corresponding weights  $w(j) = p(j)/q(j)$.
Notice that the conventional \textsc{Qdrift} protocol
is recovered with the choice $q(j) = p(j)$.
Applying the optimality condition from Theorem~\ref{thm:optSampling}, 
the net-cost minimizing distribution is
\begin{equation}\label{eq:optimal_sampling_Qdrift}
q^*(j) \propto \frac{p(j)}{\sqrt{c(j)}}.
\end{equation}
This choice minimizes the expected total cost of the single-step \textsc{Qdrift} channel
while maintaining the bias in Eq.\eqref{eq:qdrift_bias} unchanged,
according to Theorem~\ref{thm:imperfect_quantum_channels}.

\subsubsection{\texorpdfstring{He$_2$}{He2} example}
\label{sec:qdrift_he2_example}
We next illustrate the advantage of the above result through a case study of the helium dimer Hamiltonian. The electronic-structure Hamiltonian of He$_2$ is mapped to qubits via a standard fermion-to-qubit transformation (e.g., Jordan–Wigner or Bravyi–Kitaev), followed by basis truncation.  
After mapping, the Hamiltonian can be expressed as a real linear combination of Pauli strings on $n=8$ qubits,
\begin{equation}
H = \sum_{j=1}^{M} h_j P_j,
\;\;
h_j\in\mathbb{R},\;\;
P_j\in\{I,X,Y,Z\}^{\otimes 8}\,.
\label{eq:pauli_decomp}
\end{equation}
In this example, we use the molecular Hamiltonian coefficients $\{h_j\}$ provided by \cite{pennylane}.
To incorporate cost-aware IS, 
we assign each Pauli word $P_j$ a circuit cost $c(j)$ that approximates the number of entangling gates required to implement $e^{-i\theta P_j}$.  
Let
\begin{equation}
S(P_j) \;:=\; \big|\{k \in \{0,\dots,7\} : (P_j)_k \neq I\}\big|.
\label{eq:pauli_weight}
\end{equation}
be the number of qubits on which the Pauli string $P_j$ is acting non trivially.
Using the standard parity-computation circuit (basis change to $Z^{\otimes S}$, a CNOT ladder/tree to compute parity, a single $R_z(2\theta)$, and uncomputation), an all-to-all connectivity cost model yields
\begin{equation}
c(j) \;:=\; 2\big(S(P_j)-1\big)
\label{eq:cnot_cost}
\end{equation}
CNOT operations. This model captures the dominant scaling of entangling resources with Pauli weight: one-local terms (e.g.\ $Z_k$) require no CNOTs, while two-local terms (e.g.\ $Z_a Z_b$) require $2$ CNOTs, and so on.
Such a cost model gives
\begin{align}
    \mathbb{E}_p[c(j)] = 4.4735\,, \qquad \left(\mathbb{E}_p[\sqrt{c(j)}]\right)^2 = 2.8468\,,
\end{align}
where $p$ is given by Eq.~\eqref{eq:single_step_distribution_QDrift}.
As a result
\begin{align}
    \frac{NC_{q^*}}{NC_p} = 0.6364.
\end{align}
showing that the optimal IS scheme gives an improvement of $36.36\%$ compared the standard \textsc{Qdrift} in terms of Net Cost formulation for the single time step.

\subsubsection{Comparison with Related Approaches}
The authors of~\cite{kiss2023importance} explore a related idea to improve the performance of \textsc{Qdrift}, referring to their approach as IS. In the original \textsc{Qdrift}~\cite{campbell2018random}, Hamiltonian terms are sampled with probability proportional to their operator norm, and the time evolution corresponding to each term is performed for a duration inversely proportional to that probability. In~\cite{kiss2023importance}, a different sampling distribution is proposed, and the evolution times are adjusted accordingly to ensure that the quantum channel implemented by the randomized protocol remains unchanged. This is achieved by multiplying the standard \textsc{Qdrift} evolution times by a weight function defined in~\cite{kiss2023importance}.

This approach differs from ours in two fundamental ways. First, the modified evolution times correspond to unitary channels that are generally different from those used in the original \textsc{Qdrift} algorithm, whereas our IS method samples only channels that already appear in the original randomized protocol. Second, the “weights” in~\cite{kiss2023importance} modify the quantum channel itself rather than multiplying observed quantities, as in our IS formulation. From our perspective, the method of~\cite{kiss2023importance} can be interpreted as an alternative randomized protocol to \textsc{Qdrift}, in which a different set of channels is sampled according to a modified distribution to approximate the target time-evolution unitary. In fact, we could directly apply our IS method to the algorithm of~\cite{kiss2023importance}.

The conceptual perspective also differs. In~\cite{kiss2023importance}, the per-run cost is reduced using a heuristic modified distribution, but the variance of the resulting estimator increases such that the net cost of computation is ultimately higher. In contrast, our approach explicitly aims to minimize the final net cost by providing an analytically optimal sampling distribution. Additionally, as discussed in Sec.~\ref{sec:IS_imperfect_channels}, our IS framework preserves the effects of noise and cannot mitigate them, whereas the method of~\cite{kiss2023importance} modifies the implemented quantum channels and can, in principle, change the effect of noise.

\subsection{Dephasing channels}\label{sec : Eigenpath Transversal}
A direct application of the results developed in this work is the realization of a dephasing quantum channel in the eigenbasis of a given operator $H$.
Such channels are a key primitive in Zeno-based state stabilization and 
adiabatic state preparation \cite{BoixoKnillSomma2009, CunninghamRoland2024}.
More precisely, 
in the context of state preparation,
the objective is to implement a quantum channel of the form \cite{BoixoKnillSomma2009}
\begin{align}\label{eq:Zeno_decoherence_channel}
    \left[\mathcal{D}_l\right](\rho)
    =
    P_l \rho P_l
    +
    \mathcal{M}_l\!\left(Q_l \rho Q_l\right),
\end{align}
where
\begin{align}
    P_l = \ket{\psi_l}\bra{\psi_l}
\end{align}
denotes the projector onto the instantaneous $l$-th eigenstate of $H$,
\begin{align}
    Q_l = \mathds{1} - P_l
\end{align}
is the corresponding orthogonal projector onto the complementary subspace, and $\mathcal{M}_l$ represents an arbitrary quantum channel acting within that subspace.
A practical approach to achieve this transformation consists in exploiting randomized unitary dynamics that effectively approximate the action of the dephasing channel $\mathcal{D}_l$.
Specifically, 
one samples a time parameter $t$ from a classical probability distribution $p(t)$ 
and applies the unitary evolution
\begin{align}\label{eq:eigenpath_transversal_evolution}
    U(t) = e^{-i H t},
\end{align}
thereby realizing the quantum channel
\begin{align}
    \mathcal{E}^p(\rho)
    &=
    \int dt\, p(t)\, U(t)\rho U^\dagger(t) \\
    &=: \int dt\, p(t)\, \mathcal{E}_t(\rho).
\end{align}
This expression is formally analogous to Eq.~\eqref{eq: state decomposition}, with the variable $t$ playing the role of $\utheta$.
The discrepancy between the ideal channel $\mathcal{D}_l$ and its randomized approximation $\mathcal{E}^p$ can be quantified in trace norm and was bounded in~\cite{BoixoKnillSomma2009} as
\begin{align}\label{eq: Zeno_error_bound}
    &\norm{\left[\mathcal{D}_l-\mathcal{E}^p\right](\rho)}_\mathrm{tr}
    \leq\sup_{j \neq l}
    \hat{p}\!\left(\abs{E_l-E_j}\right),
\end{align}
where $E_j$ is the $j$-th eigenvalue of $H$ and
$\hat{p}(\omega)$ is the Fourier transform of the distribution $p(t)$,
\begin{align}
    \hat{p}(\omega) := \int dt\, p(t)\, e^{-i \omega t}.
\end{align}
The approximation error in Eq.~\eqref{eq: Zeno_error_bound} vanishes provided that $\hat{p}$ satisfies
\begin{align}\label{eq:characteristic_function_condition}
    \hat{p}(\omega) = 0
    \qquad
    \forall\, \omega \geq \Delta > 0,
\end{align}
where $\Delta$ denotes lower-bound on the minimum energy gap separating the $l$-th eigenvalue from the remainder of the spectrum.
This condition can be satisfied only when the eigenvalue $E_l$ remains spectrally isolated, highlighting the essential role of a finite gap.
Crucially, the choice of $p(t)$ that makes the approximation error vanish in not unique.

In the following, 
we present an original contribution that addresses scenarios where the implementation of the unitary evolution $U(t)$
incurs a cost described by a generic function $c(t)$, 
which we assume depends solely on $|t|$.
Our objective is to determine the probability distribution $p(t)$ 
that minimizes the expected net cost of implementing the quantum channel $\mathcal{E}^p$ while respecting Eq.~\eqref{eq:characteristic_function_condition}, 
both with and without IS.
These two settings generally admit different optimal solutions, 
as they correspond to the minimization of distinct quantities, namely
\begin{align}
    NC_p = \mathbb{E}_p[c(t)]
\end{align}
in the absence of IS, and
\begin{align}
    NC_{q^*} = \left(\mathbb{E}_p[c^{1/2}(t)]\right)^2
\end{align}
when optimal IS is employed, as established in Theorem~\ref{thm:optSampling}.
To treat both cases within a unified framework, we focus on the minimization of
\begin{align}
  \mathbb{E}_p[z(\abs{t})]\,, 
\end{align}
where $z$ can be any function of $\abs{t}$.
\begin{widetext}
\noindent
In order to report the solution of the optimization problem in the following theorem,
we introduce the following two matrices,
\begin{align}\label{eq:A_B_definition}
\mathds{A}_{nm}(z,\Delta) &:= 
16\pi \Delta \Big(n-\tfrac{1}{2}\Big)\Big(m-\tfrac{1}{2}\Big)
\int dt\,
\frac{z(\abs{t}) \cos^2(\Delta t/2)}
{\bigl(\Delta^2t^2-(2\pi)^2(n-\tfrac{1}{2})^2\bigr)
 \bigl(\Delta^2t^2-(2\pi)^2(m-\tfrac{1}{2})^2\bigr)}, \\[2mm]
\mathds{B}_{nm}(z,\Delta) &:= 
16\pi \Delta n\,m
\int dt\,
\frac{z(\abs{t}) \sin^2(\Delta t/2)}
{\bigl(\Delta^2t^2-(2\pi)^2n^2\bigr)
 \bigl(\Delta^2t^2-(2\pi)^2m^2\bigr)}\,,
\end{align}
where the indices $n,m$ run over positive integers.  
We then collect these matrices into the block-diagonal matrix
\begin{align}\label{eq:V_matrix_definition}
\mathds{V}(z,\Delta) :=
\begin{bmatrix}
\mathds{A}(z,\Delta) & \mathbf{0} \\[1mm]
\mathbf{0} & \mathds{B}(z,\Delta)
\end{bmatrix}\,.
\end{align}
Accordingly, we denote the vectors on which $\mathds{V}(z,\Delta)$ acts as
\begin{align}
\underline{v} := (\underline{a},\underline{b}),
\end{align}
where $\underline{a}$ and $\underline{b}$ collect the coefficients associated with the cosine and sine blocks, respectively.
Notice that, since $\mathds{V}(z,\Delta)$ is block-diagonal, 
its basis of eigenstates can be chosen such that the support lies entirely in one of the two blocks, namely
\begin{align}
\underline{\tilde{v}} = [\underline{\tilde{a}},0]
\qquad \text{or} \qquad
\underline{\tilde{v}} = [0,\underline{\tilde{b}}].
\end{align}
The following theorem shows that,
within the framework described above, the problem of identifying the optimal distribution 
minimizing $\mathbb{E}_p[z(\abs{t})]$ can be reduced to an eigenvalue problem for the $\mathds{V}(z,\Delta)$ matrix.

\begin{theorem}[Optimal Band-Limited Randomized Sampling]
\label{thm:Optimal_Band_limited_Randomized_Sampling}
Let $\underline{\tilde{v}}$ be the normalized eigenvector of $\mathds{V}(z,\Delta)$ associated with its smallest eigenvalue $\nu$.  
Then 
\begin{align}
p_{\rm opt}(t) &= 
\sum_{n,m=1}^\infty
\frac{16\pi\Delta (n-\tfrac{1}{2})(m-\tfrac{1}{2})\,
\tilde{a}_n^\ast \tilde{a}_m}
{\bigl(\Delta^2t^2-(2\pi)^2(n-\tfrac{1}{2})^2\bigr)
 \bigl(\Delta^2t^2-(2\pi)^2(m-\tfrac{1}{2})^2\bigr)}
\cos^2(\Delta t/2) \nonumber\\
&\quad +
\sum_{n,m=1}^\infty
\frac{16\pi\Delta \,n\,m\,
\tilde{b}_n^\ast \tilde{b}_m}
{\bigl(\Delta^2t^2-(2\pi)^2 n^2\bigr)
 \bigl(\Delta^2t^2-(2\pi)^2 m^2\bigr)}
\sin^2(\Delta t/2)\,,
\end{align}
where exactly one of the coefficient vectors $\underline{\tilde{a}}$ or $\underline{\tilde{b}}$ is nonzero,  
is the probability density whose Fourier transform vanishes outside the interval $[-\Delta,\Delta]$ which minimizes $\mathbb{E}_p[z(\abs{t})]$.
The minimum value of $\mathbb{E}_p[z(\abs{t})]$ achieved is $\nu$.
\end{theorem}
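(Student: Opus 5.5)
The plan is to parametrize all admissible densities explicitly, so that $\mathbb{E}_p[z(\abs{t})]$ becomes a Rayleigh quotient of $\mathds{V}(z,\Delta)$. The key input is the Fejér--Riesz / Krein (Boas--Kac) factorization theorem for nonnegative band-limited functions. If $p\ge 0$ is integrable and $\hat p$ vanishes outside $[-\Delta,\Delta]$, then $p(t)=\abs{g(t)}^2$ for some $g\in L^2(\mathbb{R})$ whose Fourier transform $\hat g$ is supported in $[-\Delta/2,\Delta/2]$. Conversely, any such $g$ gives, after normalization, a probability density whose characteristic function is the autocorrelation $\hat g\star\hat g^{\,*}$, supported in $[-\Delta,\Delta]$. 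The optimization over $p$ is therefore equivalent to minimizing $\int dt\, z(\abs{t})\abs{g(t)}^2$ over $g$ in the Paley--Wiener space $PW_{\Delta/2}$ with $\int\abs{g}^2=1$.

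Next I expand $\hat g$ in a complete orthogonal basis of $L^2[-\Delta/2,\Delta/2]$, namely the Dirichlet eigenfunctions
\begin{align}
\cos\!\bigl((2n-1)\pi\omega/\Delta\bigr)\quad\text{and}\quad \sin\!\bigl(2n\pi\omega/\Delta\bigr),\qquad n\ge 1.
\end{align}
A routine inverse Fourier transform of each basis function on the interval gives the corresponding time-domain functions. For the cosine family this is, up to constants, $(n-\tfrac12)\cos(\Delta t/2)/\bigl(\Delta^2t^2-(2\pi)^2(n-\tfrac12)^2\bigr)$. For the sine family it is, up to constants, $n\sin(\Delta t/2)/\bigl(\Delta^2t^2-(2\pi)^2n^2\bigr)$. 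I then fix the normalization constants so that, by Plancherel, $\int\abs{g}^2=\norm{\underline a}^2+\norm{\underline b}^2$; this is where the prefactor $16\pi\Delta$ should come out. Writing $g=g_a+g_b$, with $g_a$ even (cosine family) and $g_b$ odd (sine family), gives $\abs{g}^2=\abs{g_a}^2+\abs{g_b}^2+2\,\mathrm{Re}(g_a g_b^*)$. The cross term is odd in $t$, while $z(\abs{t})$ is even, so it integrates to zero. Hence
\begin{align}
\mathbb{E}_p[z(\abs{t})]=\underline a^\dagger\mathds{A}\underline a+\underline b^\dagger\mathds{B}\underline b=\underline v^\dagger\mathds{V}\underline v,\qquad \norm{\underline v}=1,
\end{align}
which explains the block-diagonal structure of Eq.~\eqref{eq:V_matrix_definition}. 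Expanding $\abs{g_a}^2$ and $\abs{g_b}^2$ reproduces exactly the double sums in the stated $p_{\rm opt}$.

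The minimization is then the Rayleigh--Ritz variational principle. The infimum of $\underline v^\dagger\mathds{V}\underline v$ on the unit sphere is the bottom of the spectrum of the Hermitian operator $\mathds{V}$. When this is an eigenvalue $\nu$, it is attained at the corresponding eigenvector. Because $\mathds{V}$ is block diagonal, that eigenvector can be taken in a single block, so exactly one of $\underline{\tilde a},\underline{\tilde b}$ is nonzero. This gives both $p_{\rm opt}$ and the minimum value $\nu$.

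I expect the main obstacle to be the functional-analytic rigor in the infinite-dimensional setting, rather than the algebra. Two things must hold. First, the quadratic form must be well defined; for this I would assume $z\ge 0$ grows at most so that $\int z\abs{g}^2<\infty$ on a dense set. Second, the infimum must actually be attained. To secure attainment I would try to show that, for $z(\abs{t})\to\infty$ (e.g.\ the increasing cost functions of interest, or $c^{1/2}$), the form $\underline v^\dagger\mathds{V}\underline v$ has compact resolvent, so that its spectrum is discrete and bounded below. If this fails, the statement should be read with $\nu$ the infimum of the spectrum, approached by truncating $\mathds{V}$ to finite $N\times N$ blocks and letting $N\to\infty$; this truncation is also how the result is used numerically. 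A secondary check is that the invoked factorization theorem applies to $L^1$ densities with compactly supported characteristic function, which is the classical Krein/Akhiezer setting.
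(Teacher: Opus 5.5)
Your proof is correct and follows the paper's skeleton. You write $p=\abs{u}^2$ with $\hat u$ supported in $[-\Delta/2,\Delta/2]$, expand $\hat u$ in the cosine/sine Dirichlet basis, discard the odd cross term against the even $z(\abs{t})$, and minimize $\underline v^\dagger\mathds{V}(z,\Delta)\underline v$ on the unit sphere. The differences are in rigor, and they favor your version.

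\textbf{Necessity of the parametrization.} The paper only says it \emph{suffices} to impose $\hat u(\omega)=0$ for $\abs{\omega}\ge\Delta/2$. As written, its proof therefore optimizes over a possibly smaller class than the theorem claims. Your Krein/Boas--Kac factorization supplies exactly the missing converse: every integrable $p\ge 0$ with $\hat p$ supported in $[-\Delta,\Delta]$ equals $\abs{g}^2$ for some $g\in PW_{\Delta/2}$.

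\textbf{Normalization.} You get $\mathds{V}(1,\Delta)=\mathds{1}$ in one line from Plancherel and orthogonality of the basis. The paper instead checks it entry by entry with principal-value contour integrals in its appendix.

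\textbf{Attainment.} You address the infinite-dimensional attainment issue, which the paper skips by treating $\mathds{V}$ as if it were a finite matrix. For the cases actually used, $z=\abs{t}$ and $z=t^2$, your compactness criterion holds. Unit-norm elements of $PW_{\Delta/2}$ are uniformly bounded and equicontinuous (Bernstein), so Arzel\`a--Ascoli applies on compacts. The tails are controlled by $z(\abs{t})\to\infty$. Hence $\nu$ is a genuine eigenvalue and the minimum is attained.
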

\end{widetext}
\begin{figure*}[t]
    \centering
    \makebox[\textwidth][c]{%
        \subfigure[Dependence of the smallest eigenvalue of $\mathds{V}(\abs{t},\Delta)$ on the the series' truncation $n_\mathrm{max}$.]{
            \includegraphics[width=0.48\textwidth]{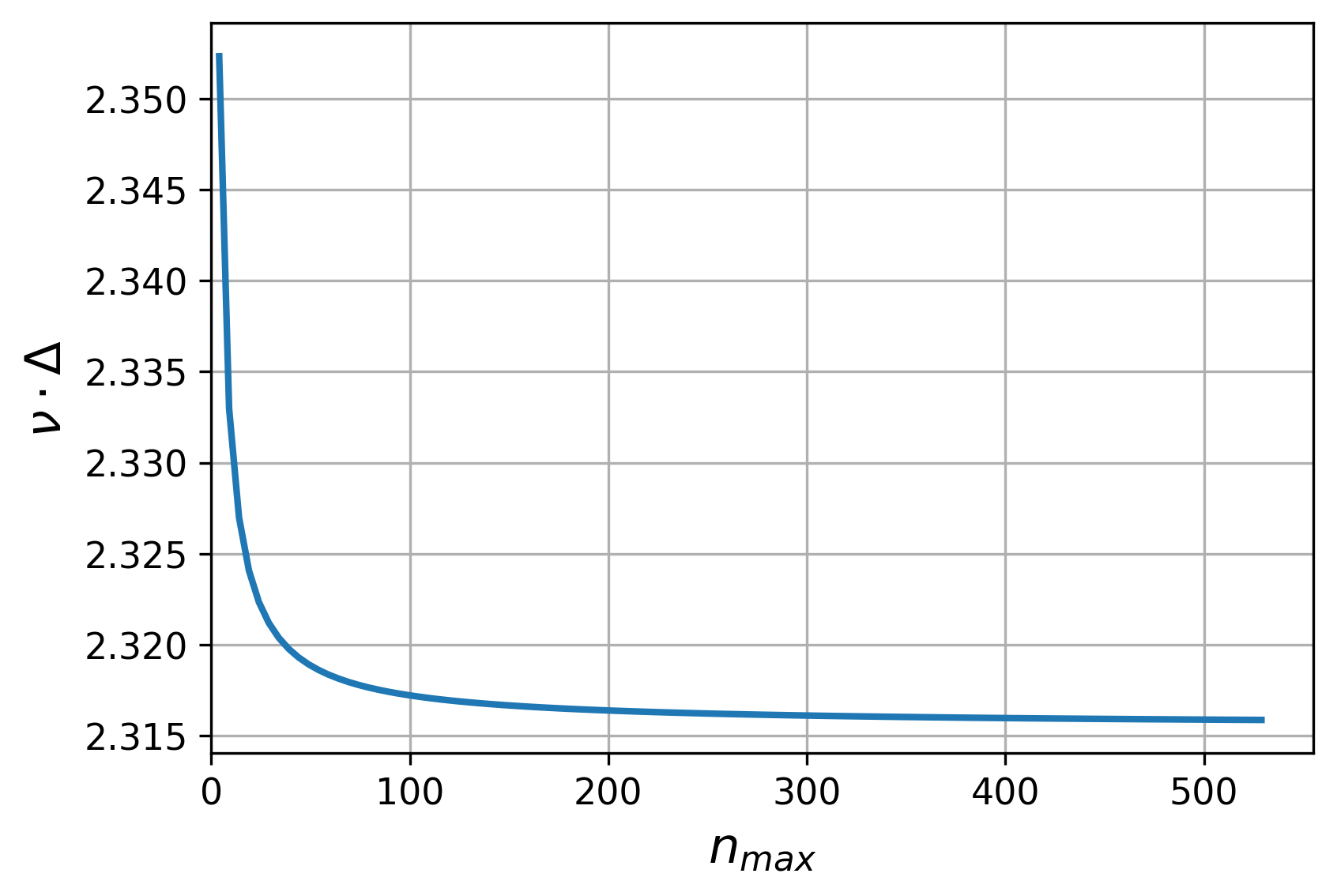}
        }\hspace{0.04\textwidth}
        \subfigure[Coefficients of the cosine-block for the eigenvector corresponding to the smallest eigenvalue of $\mathds{V}(\abs{t},\Delta)$.
            The values of the coefficients are obtained from the numerical diagonalization of the truncated matrix $\mathds{A}(\abs{t},\Delta)$ for $n = 1, 2, \dots, 529$.]{
            \includegraphics[width=0.48\textwidth]{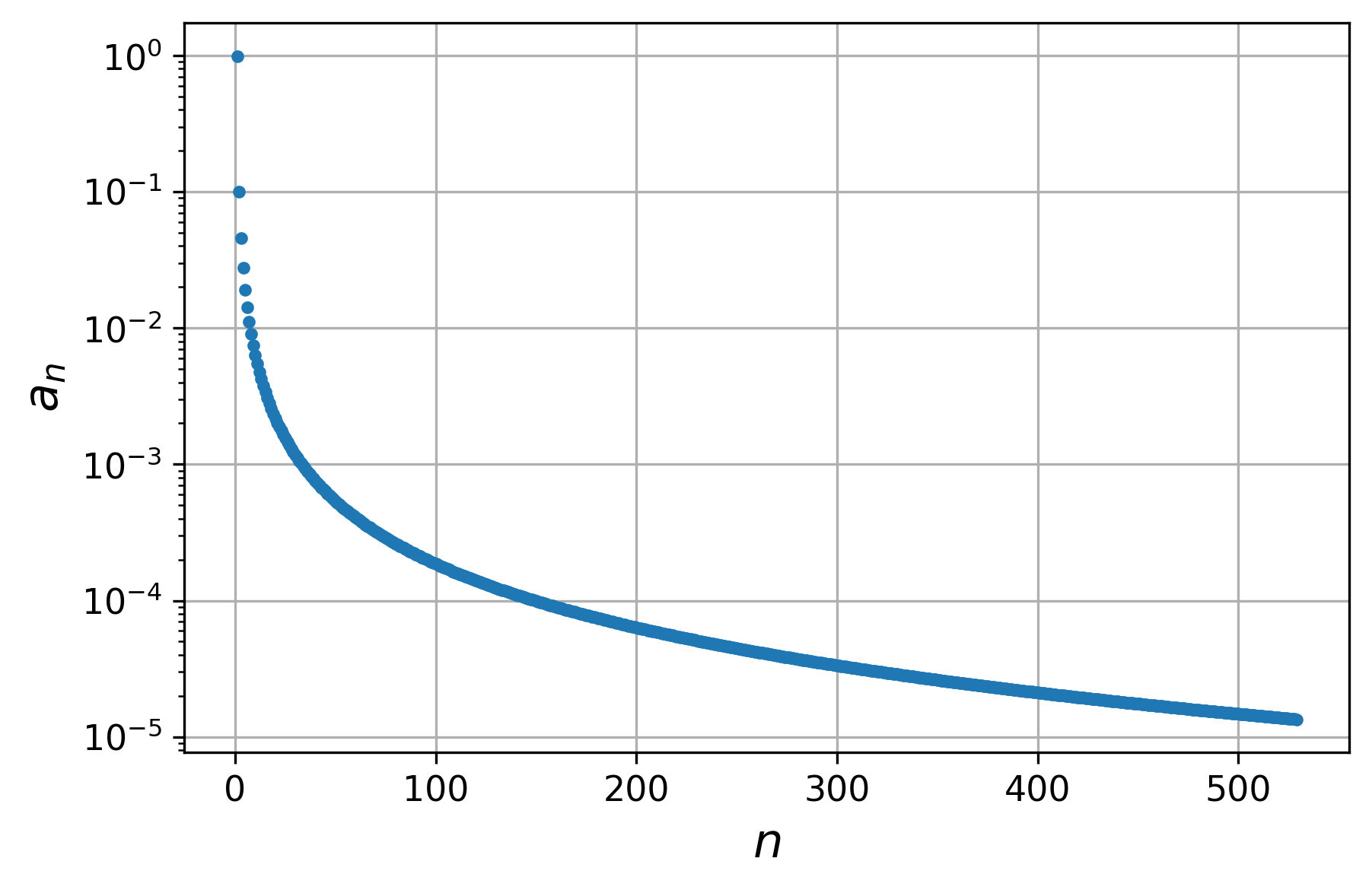}
        }
    }
    \caption{}
    \label{fig:optimal_eigenvector_A}
\end{figure*}
We report the proof of the theorem in Section~\ref{sec:proof_Optimal_Band_limited_Randomized_Sampling}.
The application of Theorem~\ref{thm:Optimal_Band_limited_Randomized_Sampling} 
to the minimization of 
\begin{align}
    \mathbb{E}_p[c^{\alpha}(t)]\,, \quad \alpha = \Big{\{}\frac{1}{2}, \,1\Big{\}}
\end{align}
is straightforward.  
As a concrete example, suppose that after sampling a value of $t$, the action of $U(t)$
is approximated using first-order  Trotter-Suziki formula up to a target precision (higher order formulas can be analyzed similarly). 
Since the error of the first-order Trotter-Suzuki formula scales as $\mathcal{O}(t^2/N)$, 
where $N$ is the number of Trotter steps, 
the final depth of the quantum circuit required to implement the evolution scales as $t^2$. 
If we take the circuit depth as the cost function, 
this is exactly of the form $c(t) \propto t^2$.  
To identify the optimal sampling distribution, we solve the eigenvalue problem
\begin{align}
    \mathds{V}(\abs{t}^{2 \alpha}, \Delta) \, \underline{\tilde{v}} = \nu \, \underline{\tilde{v}},
\end{align}
where $\nu$ is the smallest eigenvalue of $\mathds{V}(\abs{t}^{2 \alpha}, \Delta)$.  
For $\alpha = 1$, it can be analytically shown that (see Appendix~\ref{appx:A_B_computation})
\begin{align}
    \mathds{A}_{nm}(t^2, \Delta) &= \left(\frac{2\pi}{\Delta}\right)^2 \Big(n-\tfrac{1}{2}\Big)^2 \delta_{nm}, \\
    \mathds{B}_{nm}(t^2, \Delta) &= \left(\frac{2\pi}{\Delta}\right)^2 n^2 \,\delta_{nm}.
\end{align}
Hence, $\mathds{V}(\abs{t}^{2 \alpha}, \Delta)$ is diagonal, and the normalized eigenvector corresponding to its smallest eigenvalue has entries
\begin{align}
\tilde{a}_1 = 1, \quad \tilde{a}_{n > 1} = 0, \quad \tilde{b}_n = 0.
\end{align}
As a result of Theorem~\ref{thm:Optimal_Band_limited_Randomized_Sampling}, 
the optimal distribution to sample from (without performing IS) is
\begin{align}\label{eq:optimal_p_no_IS}
p_{\rm opt}(t) &= 
4 \pi \, \Delta\,
\frac{\cos^2\left(\frac{\Delta t}{2}\right)}
{\bigl(\Delta^2t^2 - \pi^2\bigr)^2}\,,
\end{align}
which allows to achieve a net-cost 
\begin{align}
    NC_{p_\mathrm{opt}} =\nu =  \left( \frac{\pi}{\Delta}\right)^2\,.
\end{align}

We now turn to the case $\alpha = 1/2$, corresponding to the optimization of the expected cost under IS.
In this regime, the objective function takes the form $\mathbb{E}_p[\abs{t}]$.
We derived closed-form expressions for the matrix elements of $\mathds{A}(\abs{t},\Delta)$ and $\mathds{B}(\abs{t},\Delta)$, reported in Appendix~\ref{appx:A_B_computation}, using symbolic computation.
In contrast with the case $\alpha = 1$, these matrices are no longer diagonal.
As a result, the optimal sampling distribution cannot be identified by inspection of a single matrix element, but instead requires the solution of a genuine eigenvalue problem.
To this end, we numerically diagonalize truncated versions of the matrices $\mathds{A}(\abs{t},\Delta)$ and $\mathds{B}(\abs{t},\Delta)$, restricting the indices to the finite set
\begin{align}
n,m \in {0,1,\dots,n_{\mathrm{max}}}.
\end{align}
The truncation is justified by the rapid decay of the matrix elements with increasing indices, and convergence is verified a posteriori by increasing $n_{\mathrm{max}}$.
Remarkably, for all truncation sizes considered, we find that the eigenvector associated with the smallest eigenvalue of $\mathds{V}(\abs{t},\Delta)$ has support entirely within the cosine block, namely
\begin{align}
\tilde{b}_n = 0 \qquad \forall, n,
\end{align}
while the coefficients ${\tilde{a}_n}$ are nontrivial.
This implies that the optimal distribution,
that we denote with $p^*_{\rm opt}(t)$,
is entirely supported on the family of cosine-modulated basis functions appearing in Theorem~\ref{thm:Optimal_Band_limited_Randomized_Sampling}.
By increasing the truncation to $n_{\mathrm{max}} = 529$, 
we obtain a converged estimate for the smallest eigenvalue, $\nu \cdot \Delta = 2.3159$, 
which is compatible with the performance $2.3160$ reported in~\cite{sanders2020compilation}, 
achieved using an analogous truncation of a series expansion based on a different (polynomial rather than trigonometric) functional basis.
Unlike the distribution in Eq.~\eqref{eq:optimal_p_no_IS},
which is optimal in the absence of IS, the distribution $p^*_{\rm opt}(t)$ is optimal specifically in the importance-sampling setting.
That is, it does not minimize the expected cost directly, but, when combined with IS according to Theorem~\ref{thm:optSampling}, 
where the sampling function is 
\begin{align}
    q^*_{\rm opt}(t) \propto \frac{p^*_{\rm opt}(t)}{\abs{t}}\; ,
\end{align}
allows to achieve the net cost
\begin{align}
NC_{q^*_\mathrm{opt}} = \nu^2 \simeq \left(\frac{2.32}{\Delta}\right)^2.
\end{align}
The coefficients ${a_n}$ of the normalized eigenvector associated with the smallest eigenvalue are shown in Fig.~\ref{fig:optimal_eigenvector_A}.

Finally, it is instructive to compare the performance of the optimal distribution $p_{\mathrm{opt}}$ in the absence of IS with that of the optimal distribution $p^*_{\mathrm{opt}}$ when IS is employed.
Using the results derived above
we obtain
\begin{align}
\frac{NC_{q^*_{\mathrm{opt}}}}{NC_{p_{\mathrm{opt}}}}
\simeq
\left(\frac{2.32}{\pi}\right)^2
\simeq 0.54\,.
\end{align}
This ratio provides a quantitative measure of the advantage offered by IS in the present setting, demonstrating that the optimal importance-sampled strategy reduces the net cost by almost a factor of two relative to the best achievable strategy without IS.

Dephasing in the eigenbasis of a unitary can be achieved in a similar fashion by acting with random powers of the unitary~\cite{BoixoKnillSomma2009}. In this case the probability distribution is over integers that correspond to the times the unitary is applied. A near optimal choice for this distribution is provided in Eq.(17) of Ref.~\cite{jennings2025randomized}. It is possible to implement IS with that distribution or derive the discrete distribution that is optimal when used with IS, following the same steps for continuous distributions above.
 
One of the most important applications of the dephasing channel we discussed in this section is adiabatic state preparation~\cite{BoixoKnillSomma2009}. For this application multiple, say $s$, dephasing channels are applied sequentially, each with respect to a different Hamiltonian along the interpolation between a simple initial Hamiltonian and a target Hamiltonian. As shown in Theorem~\ref{thm:TheoremIIDLInfty}, in the absence of error-detection schemes the improvement provided by IS decreases as $1/s$. 
In this regime, 
further developments are required to make IS more effective 
for typical adiabatic state-preparation applications, 
for instance by extending the search for optimal sampling distributions that incorporate correlations between the samples associated with different channels.
In contrast, when an error-detection scheme is available, the improvement grows exponentially with $s$.

\subsection{Mixed States and Composite Observables}
In this section, we consider two standard tasks in quantum computing: i) the preparation of mixed states and ii) the estimation of observables expressed as linear combinations of operators that are not jointly measurable. 
These tasks are conceptually distinct and may arise independently or concurrently within a given algorithm. 
Nevertheless, we analyze them within a common framework, as they give rise to closely related mathematical structures.

We begin by considering the estimation of a physical observable $O$ on a mixed quantum state
\begin{align}
    \rho_\mathrm{target} = \sum_n p_n \ket{\psi_n}\bra{\psi_n}.
\end{align}
A straightforward strategy consists of executing a collection of quantum circuits in which the register is initialized in the pure state $\ket{\psi_n}$ with probability $p_n$, followed by a measurement of $O$. The resulting measurement outcomes are then averaged to estimate $\Tr[O\rho_\text{target}]$. In practice, however, the cost associated with preparing the states $\ket{\psi_n}$ may vary significantly with $n$, making this naive approach potentially inefficient.

A closely related situation arises when the observable $O$ cannot be measured directly, but instead admits a decomposition into a linear combination of directly measurable operators $\{O_n\}$,
\begin{align}
    O = \sum_n a_n O_n, \qquad  a_n \in \mathds{R}.
\end{align}
A basic approach is to independently estimate each expectation value $\Tr[O_n \rho_\text{target}]$ and subsequently combine the results using classical post-processing. A more refined strategy introduces a sampling distribution
\begin{align}
    p_n := \frac{|a_n|}{\sum_m |a_m|},
\end{align}
from which the index $n$ is drawn randomly. Each experimental run then consists of measuring $O_n$ with probability $p_n$, and rescaling the measurement outcome by the factor $\sum_m |a_m|$ to obtain an unbiased estimator of $\Tr[O\rho_\text{target}]$.
Our interest lies in regimes where the cost of implementing the measurement associated with $O_n$ is strongly dependent on $n$. This typically occurs when the measurement requires a basis change whose circuit depth or gate count varies across different terms. A prominent example is provided by Variational Quantum Eigensolvers, e.g. when the objective function corresponds to a many-body Hamiltonian decomposed into a sum of Pauli strings. In this case, the cost of implementing each Pauli measurement depends sensitively on the hardware connectivity and the native gate set.

In both scenarios described above, the estimation procedure ultimately reduces to executing randomized quantum circuits sampled according to a classical probability distribution $p_n$. When the measurement costs $c_n$ associated with different circuits are nonuniform, it becomes advantageous to adjust the sampling distribution. This approach is adopted, for instance, in~\cite{arrasmith2020operator}, although not in a way that is optimal with respect to the net cost.
Theorem~\ref{thm:optSampling} guarantees that sampling according to
\begin{align}
    q_n^* \propto \frac{p_n}{\sqrt{c_n}}
\end{align}
minimizes the net cost required to achieve a fixed target precision for the estimator,
provided that the measurement outcomes are appropriately reweighted to preserve unbiasedness.

\subsection{Classical Shadows}

In what follows, we consider the Classical Shadows protocol~\citep{HuangKuengPreskill2020} applied to a system of $n$ qubits, 
described by an Hilbert space of dimension $d = 2^n$. 
The goal of the protocol is to efficiently estimate expectation values 
of a large family of observables on a quantum state $\rho_\text{target}$ using a limited number of randomized measurements. 
The procedure consists of two main steps: first, randomly sampled unitaries $U$ are applied to 
$\rho_\text{target}$, 
followed by projective measurements in the computational basis; 
second, a classical post-processing stage reconstructs unbiased estimators of the desired observables.
We focus on the case where the Classical Shadows ensemble consists of the $n$-qubit Clifford group $\mathcal C_n$, 
sampled uniformly according to $p(U) = 1 / |\mathcal C_n|$. 
Denoting by $c(U) > 0$ the implementation cost of a Clifford unitary $U \in \mathcal C_n$, 
we show how IS can reduce the net cost of the protocol while preserving its statistical accuracy.

\paragraph{Shadow channel and its inverse (uniform Clifford).}
Let $\mathcal{M}_p$ denote the channel that describes the first step of the Classical Shadows protocol:
\begin{align}
\mathcal{M}_p(X)
&= \mathbb{E}_{p} \left[\sum_{b\in\{0,1\}^n}
\langle b|UXU^\dagger|b\rangle\;U^\dagger|b\rangle\!\langle b|U \nonumber\right] \\
&= \frac{X+\operatorname{Tr}(X)\,I}{d+1}.
\end{align}
Its inverse is given by
\begin{align}
\mathcal{M}_p^{-1}(Y)
= (d+1)\,Y - \operatorname{Tr}(Y)\,I.
\end{align}
For a single measurement shot with $U\sim p$ and outcome $b$, 
one can define the corresponding classical snapshot as
\begin{align}
\rho(U,b)
&= \mathcal{M}_p^{-1}\!\left(U^\dagger|b\rangle\!\langle b|U\right)\nonumber \\
&= (d+1)\,U^\dagger|b\rangle\!\langle b|U - I.
\end{align}
Importantly, this inversion is not directly performed on the quantum device.
Indeed, $\mathcal{M}_p^{-1}$ is not completely positive and thus cannot be implemented as a quantum channel. 
Classical shadows rely on the fact that,
for any observable $O$,
the estimator $\Tr[\rho(U,b)\, O]$ is an unbiased estimator,
namely
\begin{align}
    \mathbb{E}_p \left[ \sum_{b\in\{0,1\}^n} p\left( b|U\right)\Tr[\rho(U,b) O]\right] = \Tr[\rho_\text{target} O] \,,
\end{align}
where we introduced the conditional probability 
\begin{align}
    p(b|U):= \langle b|U \rho_\text{target} U^\dagger|b\rangle\,. 
\end{align}
Furthermore, computing $\Tr[\rho(U,b) \,O]$ is classically efficient, 
making the protocol scalable to large systems.
In order to move to the IS setting, we rewrite the channel $\mathcal{M}_p$ as an expectation with respect to an arbitrary probability density function $q(U)$,
\begin{align}
\mathcal{M}_p(X)
= \mathbb{E}_q\left[
w(U)\sum_{b\in{0,1}^n}
\langle b|U X U^\dagger|b\rangle \;
U^\dagger |b\rangle\langle b| U
\right],
\end{align}
where the weight
\begin{align}
w(U) = \frac{p(U)}{q(U)},
\end{align}
accounts for the change of sampling measure.
This representation shows that the channel $\mathcal{M}_p$ can be reproduced without bias by sampling unitaries $U$ according to $q(U)$ rather than $p(U)$, provided that each measurement outcome is reweighted by the corresponding factor $w(U)$. The protocol therefore proceeds exactly as in the conventional classical shadows scheme, except for this additional classical reweighting step.
Accordingly, for a single measurement outcome $(U,b)$, the estimator of the observable $O$ is defined as
\begin{align}
\hat{O} := \Tr\left[\rho(U,b) w(U) O\right]\,.
\end{align}
This estimator is unbiased, since
\begin{align}
    \mathbb{E}_q \left[ \sum_{b\in\{0,1\}^n} p\left( b|U\right)\hat{O}\right] = \Tr[\rho_\text{target} O] \,.
\end{align}
If we denote the cost 
of running the quantum circuit for the unitary $U$ with $c(U)$,
the expected cost per sample is given by
\begin{align}
    \mathbb{E}_q\left[c(U)\right] = \mathbb{E}_p\left[\frac{c(U)}{w(u)}\right]\,,
\end{align}
whereas the variance of the final estimator can be bounded as
\begin{align}\label{eq:varianceCSBound}
    Var(\hat{O}) &\leq \mathbb{E}_q \left[ \sum_{b\in\{0,1\}^n} p\left( b|U\right)\; \hat{O}^2\right] \nonumber\\
    & = \mathbb{E}_q \Big[ \sum_{b\in\{0,1\}^n} p(b\,|\,U) \nonumber \\
    & \hspace{0.2in}\times \left(\Tr\left[\mathcal{M}_p^{-1}(O) w(U) U^\dagger \ketbra{b}U \right]\right)^2 \Big]\nonumber \\
    & = \mathbb{E}_p \left[  w(U) f(U,O) \right]
\end{align}
where 
\begin{align}
    f(U,O) &:= \sum_{ b\in\{0,1\}^n}p(b\,|\,U)\nonumber \\
    & \hspace{0.4in} \times\left(Tr\{\mathcal{M}_p^{-1}(O)  U^\dagger \ketbra{b}U \}\right)^2. 
\end{align}
This gives a net cost 
\begin{align}
    NC_q \geq \left( \mathbb{E}_p \left[\sqrt{c(U) f(U,O)} \right]\right)^2\,, \label{eq:NC_CS_q}
\end{align}
where the inequality follows from the Cauchy-Schwartz inequality.
To attain the performance given by the above lower bound, 
one needs
\begin{align}
q^*(U) \propto p(U)\sqrt\frac{f(U,O)}{c(U)}\,.
\end{align}
However, this choice depends on the observable. To avoid this dependency, a heuristic choice for a sub-optimal distribution is
\begin{align}
    q'(U) \propto \frac{p(U)}{\sqrt{c(U)}}.
\end{align}
We demonstrate through an example below 
that this choice can also yield an improvement 
in the Net-Cost performance of Classical Shadows.
\paragraph{Example: 2-qubit Clifford Group with Pauli Observable}

For any 2-qubit Clifford $U$, let $n_{\mathrm{cx}}(U)\in\{0,1,2,3\}$ denote the number of CNOT gates in an optimal decomposition with respect to the number of CNOT gates. 
We assume an affine cost model with the per-shot cost as
\begin{align}
c(U) \;=\; \beta \;+\; n_{\mathrm{cx}}(U),
\end{align}
where $\beta$ is a fixed overhead (capturing single-qubit gates, reset, and readout) measured in ``CNOT-equivalent'' units.
Over the full 2-qubit Clifford group ($|\mathcal{C}_2|=11520$), the known CNOT distribution \cite{CorcolesRB2012} as summarized in Table \ref{table:CNOT2qubitDistr}, yields
\[
\mathbb{E}_{p}[\,c(U)\,] \;=\; \beta \;+\; 1.5.
\]

\begin{table}
    \centering
\begin{tabular}{|c|c|c|c|}
\hline
$n_{\mathrm{cx}}$ & \# Cliffords & Probability & Cost $c(U)$ \\
\hline
0 & 576   & $0.05$ & $\beta$ \\
1 & 5184  & $ 0.45$ & $\beta + 1$ \\
2 & 5184  & $0.45$ & $\beta + 2$ \\
3 & 576   & $0.05$ & $\beta + 3$ \\
\hline
\multicolumn{3}{|c|}{Average} & $\beta + 1.50$\\
\hline
\end{tabular}
    \caption{CNot distribution and the associated costs for two qubit Clifford circuits. The size of $\mathcal{C}_2$ is $11520.$}
    \label{table:CNOT2qubitDistr}
\end{table}
Moreover, any two-qubit unitary in $\mathrm{SU}(4)$ (hence any two-qubit Clifford) can be
implemented with at most three CNOTs, and this bound is tight~\cite{VatanWilliams2004}.
By assuming that $O$ is a Pauli observable, we observe that $\mathcal{M}_p^{-1}(O) = (d+1)O$ for all non-identity $O$. This implies 
\begin{align}
  &\Tr[ \mathcal{M}_p^{-1}(O)U^\dagger \ketbra{b}U ] = (d+1)\bra{b}UOU^\dagger\ket{b} \nonumber \\
  & \hspace{0.5in} = \begin{cases}
      (d+1) & if \quad UOU^\dagger \in \mathcal{D}_2\\ 0 & otherwise,
  \end{cases}
\end{align}
where $\mathcal{D}_2$ is the set of Pauli operators that are diagonal in the computational basis, i.e., $\mathcal{D}_2:= \{ Z\otimes I, I\otimes Z, Z\otimes Z\}$. Hence, we have 
\begin{align}
    f(U,O) = \begin{cases}
        (d+1)^2 & if \quad UOU^\dagger \in \mathcal{D}_2\\ 0 & otherwise,
    \end{cases}
\end{align}
For any non-identity two-qubit Pauli observable $O$,  conjugation by a uniformly random Clifford $U$ maps $O$ to a uniformly random non-identity Pauli. 
In some instances, one may determine a priori which Clifford unitaries $U$ yield $f(U,O)=0$. In such cases, the sampling distribution could in principle be modified to exclude these unitaries, as they do not contribute to the estimator for the given observable. However, in the present setting we refrain from doing so, both to retain an $O$-agnostic protocol and because classical shadows are typically employed to estimate multiple observables simultaneously, for which the set of unitaries satisfying $f(U,O)\neq 0$ may differ from observable to observable.
Since the two-qubit Pauli group has $4^2 - 1 = 15$ non-identity 
elements and exactly $|\mathcal{D}_2| = 3$ of them act non-trivially on the same support as $O$,  the probability that $U O U^\dagger$ remains nonzero on that support is $ 3/15 = 1/5.$ Let $n_i$ denote the number of Cliffords with $f(U,O) \neq  0$ having $i$ CNOTs in their decomposition. Using numerical simulations, we obtain $n_0=n_3 = 64$ and $n_1=n_2=1088$, independent of the Pauli observable $O$.
To avoid the dependency of the sampling distribution $q(U)$ on the observable, we choose
\begin{align}
    q'(U) \propto \frac{p(U)}{\sqrt{c(U)}}.
\end{align}
The Net-Cost obtained is given by
\begin{align}
    &NC_{q'} (\widehat{\langle O\rangle}) &=  \mathbb{E}_p \left[  \sqrt{c(U)} \right]\mathbb{E}_p \left[  \sqrt{c(U)} f(U,O) \right] \nonumber \\
\end{align}
Using the above arguments, 
we can compute
\begin{align}
    & \mathbb{E}_p \left[  \sqrt{c(U)} \right] = \nonumber \\ & 0.05(\sqrt{\beta}+\sqrt{\beta+3}) + 0.45(\sqrt{\beta+1}+ \sqrt{\beta+2}) \label{eq:CSCliffCost}
\end{align}
and
\begin{align}
    &\mathbb{E}_p \left[  \sqrt{c(U)} f(U,O) \right]  \nonumber \\
    & = \sum_{U \in \mathcal{C}_2: f(U,O)\neq 0} \frac{1}{|\mathcal{C}_2|} \sqrt{\beta + n_{\mathrm{cx}}(U)}{(d+1)^2}\nonumber \\
    & = {(d+1)^2 \over 11520}\sum_{i}n_i\sqrt{\beta+i} \nonumber \\
    & = \frac{25}{11520}\bigg(64 (\sqrt{\beta} + \sqrt{\beta+3}) \nonumber \\
     &\hspace{20pt}+ 1088(\sqrt{\beta+1} + \sqrt{\beta+2})\bigg)\,. \label{eq:CSCliffVar}
\end{align}
We can use this result to compute the ratio
$NC_{q'}/NC_p$ for different values of $\beta$.
In particular, we get $0.9828$ for $\beta=1$, and $0.9291$ for $\beta=0$, giving a relative gain of $7.09\%$ for $\beta=0$,
empirically proving the effectiveness of the IS ansatz.

\subsection{Probabilistic Error Cancellation}

Probabilistic error cancellation (PEC) was introduced in \cite{temme2017error}
as a quasi-probabilistic method for reconstructing noise-free expectation values
using noisy quantum hardware.
Consider a calibrated set of implementable noisy operations
$\Omega = \{ \mathcal{N}_{\utheta} \}$,
that spans the space of completely positive trace-preserving (CPTP) maps,
where each operation $\mathcal{N}_{\utheta}$ is associated with an implementation cost $c(\utheta)$.
Any ideal CPTP map $\mathcal{E}$ can be represented as a signed linear combination of such noisy operations, namely
\begin{align}
\mathcal{E} = \gamma \sum_{\utheta} p(\utheta)\, \eta(\utheta)\, \mathcal{N}_{\utheta},
\end{align}
where $\eta(\utheta) \in \{\pm 1\}$, $p(\utheta)$ is a probability density function,
and $\gamma \ge 1$ is a normalization factor.
As a consequence, 
one can construct an unbiased estimator of $\Tr[O \,\mathcal{E}(\rho)]$,
by sampling $\utheta$ from $p(\utheta)$,
implementing $\mathcal{N}_{\utheta}$ on the quantum hardware accordingly,
measuring the observable $O$ and reweighting each outcome
by the factor $\gamma\,\eta(\utheta)$.
This procedure is reminiscent of IS with $p$ playing the role of $q$ and $\gamma \eta(\utheta)$ playing the role of weights. The crucial difference is that here the weights can take negative values.
The variance of this estimator scales as $\mathcal{O}(\gamma^2)$,
implying that achieving an additive accuracy $\delta$ requires
$M = \mathcal{O}(\gamma^2 / \delta^2)$ circuit executions.
More generally, one may sample operations according to an alternative distribution $q(\utheta)$
and reweight each outcome by an additional factor
$w(\utheta) = p(\utheta) / q(\utheta)$.
This procedure remains unbiased and incurs an expected cost
$\mathbb{E}_q[c(\utheta)]$,
with variance bounded as
\begin{align}
\mathrm{Var} \le \gamma^2 \mathbb{E}_p[w(\utheta)] .
\end{align}
The sampling distribution that minimizes the overall net-cost is characterized by
Theorem~\ref{thm:optSampling} and reads
\begin{align}
q^*(\utheta) \propto \frac{p(\utheta)}{\sqrt{c(\utheta)}}
\end{align}
in absence of error-detection schemes.
This choice yields the net-cost improvement quantified in Eq.~\eqref{eq:netCostRatio}.


\subsubsection{Example: PEC for Depolarizing noise}
\label{subsec:pec_single_unitary_cost_is}
We here illustrate the advantages of optimal IS for probabilistic error cancellation (PEC) in the recovery of quantum channels affected by localized depolarizing noise.
To this end, we consider a collection of $k_g$-local unitary channels $\{\mathcal{U}_g\}_{g=1}^m$, whose action on a subset of $k_g$ qubits is given by
\begin{align}
\mathcal{U}_g(\rho) = U_g \rho U_g^\dagger.
\end{align}
The objective is to construct a quantum circuit by sequentially concatenating these unitaries and measure the observable 
$O$ at the end.
We assume that the available hardware instead realizes the noisy operations
\begin{align}
\mathcal{D}_{g,\varepsilon} \circ \mathcal{U}_g,
\end{align}
where $\mathcal{D}_{g,\varepsilon}$ is the local $k_g$-qubit depolarizing channel,
defined as
\begin{align}
\mathcal{D}_{g,\varepsilon}(\rho_g)
=
(1-\varepsilon)\rho_g
+
\varepsilon\,\frac{\mathrm{Tr}_g[\rho_g]}{2^{k_g}} I ,
\end{align}
where the trace is restricted to the $k_g$ qubits on which $\mathcal{U}_g$ acts non-trivially,
$\rho_g$ is the reduced density operator,
and the parameter $\varepsilon$ has been assumed to be independent from $g$.
To mitigate the effect of noise, we seek to invert the depolarizing channel.
This channel is diagonal in the Pauli operator basis: it leaves the identity invariant
and rescales any traceless Pauli operator by a factor $(1-\varepsilon)$.
Denoting the identity by $\sigma^0$ and the Pauli matrices $X$, $Y$, and $Z$
by $\sigma^1$, $\sigma^2$, and $\sigma^3$, respectively,
a generic Pauli $k_g$-string is written as
\begin{align}
P(\utheta_{g}) = \bigotimes_{i = 1}^{k_g} \sigma^{\theta_{g,i}},
\end{align}
where $\utheta_{g}$ determines the choice of Pauli string after each Unitary $U_g$, and $\theta_{g,i} \in \{0,1,2,3\}$. 
The action of the inverse depolarizing channel on this basis is
\begin{align}
\mathcal{D}_{g,\varepsilon}^{-1}(P(\utheta_{g})) =
\begin{cases}
P(\utheta_{g}) & \utheta_g = \underline{0}, \\
\frac{1}{1-\varepsilon} P(\utheta_{g}) & \text{otherwise}.
\end{cases}
\end{align}
The inverse map can be expressed as a linear combination of unitary conjugation channels
\begin{align}
\mathcal{P}_{\utheta_{g}}(\rho) := P(\utheta_{\,g})\,\rho\,P(\utheta_{\,g})
\end{align}
according to
\begin{align}
\mathcal{D}_{g,\varepsilon}^{-1}
&=
\sum_{\utheta_g} p_g(\utheta_g)\,\gamma_{g}\eta_g(\utheta_g)\mathcal{P}_{\utheta_g}\\
&=
\mathbb{E}_{p_g}\left[\gamma_{g}\eta_g(\utheta_g)\mathcal{P}_{\utheta_g}\right]\,,
\end{align}
where we introduced the probability density function
\begin{align}\label{eq:PEC_pdf}
    p_g(\utheta_g) =
    \begin{cases}
        \frac{4^{k_g}-\varepsilon}{4^{k_g}+(4^{k_g}-2)\varepsilon} & \utheta_g = \underline{0}\\
        \frac{\varepsilon}{4^{k_g}+(4^{k_g}-2)\varepsilon} & \utheta_g \neq \underline{0}\,,
    \end{cases}
\end{align}
the sign function
\begin{align}
    \eta_g(\utheta_g) =
    \begin{cases}
        1 & \utheta_g = \underline{0}\\
        -1 & \utheta_g \neq \underline{0}\,,
    \end{cases}
\end{align}
and the multiplicative factor
\begin{align}
\label{eq:gamma_n_single_unitary}
\gamma_g
:=
\frac{4^{k_g} + (4^{k_g}-2)\varepsilon}{4^{k_g}(1-\varepsilon)}\,.
\end{align}
Since $\mathcal{D}_{g,\varepsilon}^{-1}$ is not a completely positive trace-preserving map,
it cannot be implemented as a physical quantum channel.
Nevertheless, its action can be statistically reconstructed using PEC~\cite{temme2017error}.
Specifically, one samples the parameter vector $\utheta_g$ according to the distribution $p_g(\utheta_g)$ and implements on the quantum hardware the composition $\mathcal{P}_{\utheta_g}
\circ
\mathcal{U}_g$. The hardware adds the noise resulting in 
the channel
\begin{align}\label{eq:PEC_noisy_channel}
\mathcal{N}_{g}(\utheta_g)
=
\mathcal{D}_{g,\varepsilon}
\circ
\mathcal{P}_{\utheta_g}
\circ
\mathcal{U}_g.
\end{align}
After measuring the observable at the end of the circuit, the measurement outcome is reweighted by the factor $\gamma_g\,\eta_g(\utheta_g)$.
This procedure recovers,
in expectation,
the ideal channel, indeed
\begin{align}
\mathcal{U}_g
=
\gamma_g
\mathbb{E}_p
\left[\eta_g(\utheta_g)\,
(\mathcal{D}_{g,\varepsilon} \circ \mathcal{P}_{\utheta_g} \circ \mathcal{U}_g)
\right].
\end{align}
In what follows, 
we assume that (a) the same depolarizing parameter $\varepsilon$ for all $U_g$ gates (for notational simplicity),
(b) noise is Markovian and independent across gates, and
(c) the Pauli-twirl basis operations (Pauli conjugations) are easily implementable on the hardware.
For each gate $g\in[1,s]$ acting on $k_g$ qubits, the noisy quantum channel of Eq.~\eqref{eq:PEC_noisy_channel} is implemented independently.
Operationally, this corresponds to inserting a random Pauli $k_g$-string after each $\mathcal{U}_g$
according to the probability distribution $p_g$ prescribed in Eq.~\eqref{eq:PEC_pdf}
and multiplying each circuit outcome by a factor
\begin{align}
   \prod_{g = 1}^s \gamma_{g} \,\eta_{g}(\utheta_g) \,.
\end{align}
For simplicity, we denote the full \emph{parameter set} as $\Theta=(\utheta_1,...\,\utheta_m)$.
$\Theta$ determines both the choice of Pauli strings inserted after each gate
and the multiplicative factor of the circuit outcome.
Its probability distribution is a product distribution, 
simply given by
\begin{equation}
p(\Theta)
=\prod_{g=1}^s p_{g}(\utheta_g),
\label{eq:product_sampling}
\end{equation}
To switch to an IS approach,
it's enough to replace $p(\Theta)$ with an arbitrary $q(\Theta)$ probability distribution 
and to multiply each circuit outcome by an additional factor $w(\Theta) = p(\Theta)/q(\Theta)$.
This can be practically relevant whenever the circuits to be implemented 
have a cost $c(\Theta)$ that is highly dependent on the sampled vector $\Theta$.
In particular,
the net-cost optimal IS distribution over Pauli patterns is
\begin{equation}
q^\star(\Theta) \propto \frac{p(\Theta)}{\sqrt{c(\Theta)}}\,,
\label{eq:qstar_pattern}
\end{equation}
which allows to achieve a net-cost ratio
\begin{equation}\label{eq:PEC_net_cost_ratio}
\frac{\mathrm{NC}_{q^*}}{\mathrm{NC}_p}
=
\frac{\big(\mathbb{E}_{p}[c^{1/2}(\Theta)]\big)^2}{\mathbb{E}_{p}[c(\Theta)]}.
\end{equation}

In what follows, we present numerical results for a specific setup.
We consider a quantum circuit with an even number of qubits $n$ composed of $s$ layers. 
Each layer consists of: (i) $n$ single-qubit gates, and (ii) $n/2$ two-qubit gates acting on arbitrary disjoint pairs. 
We assume a \emph{gate-local depolarizing noise model}, 
in which each implemented gate $U_g$ acting on $k_g$ qubits is followed by a $k_g$-qubit depolarizing channel.
Applying PEC under this model results in a circuit structure as illustrated in Figure~\ref{fig:layer_structure}. 
Specifically, each single-qubit gate is followed by a Pauli operator ($k_g = 1$) sampled from 
\begin{align}
    p_g(\theta_g) = \begin{cases}
        \frac{4-\varepsilon}{4+2\varepsilon} & \theta_g = 0\\
        \frac{\varepsilon}{4+2\varepsilon} & \theta_g \neq 0\,,\\
    \end{cases}
\end{align} 
and each two-qubit gate is followed by a Pauli string of length $k_g = 2$ sampled from  
\begin{align}
    p_g(\utheta_g) = \begin{cases}
        \frac{16-\varepsilon}{16+14\varepsilon} & \utheta_g = \underline{0}\\
        \frac{\varepsilon}{16+14\varepsilon} & \utheta_g \neq \underline{0}\,.\\
    \end{cases}
\end{align} 
\begin{figure}[t]
    \centering
    \includegraphics[width=0.98\linewidth]{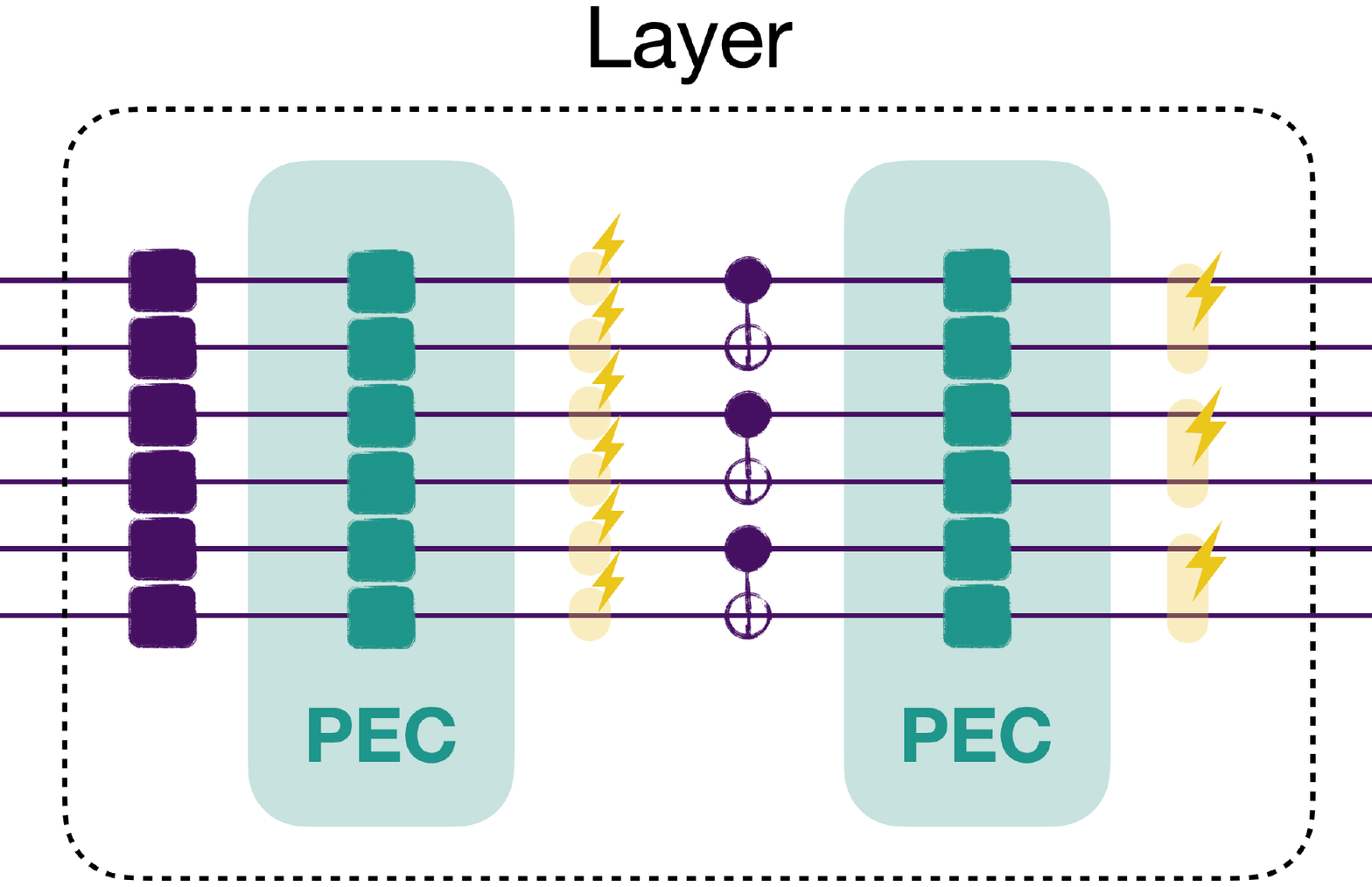}
    \caption{Structure of a single circuit layer. 
        The first stage consists of single-qubit gates applied to each qubit, 
        followed by 
        the randomized Pauli operations required to statistically cancel the noise
        and the localized single-qubit depolarizing channels. 
        Similarly, the second stage analogously applies two-qubit gates, after which
        randomized Pauli operations are performed
        and the $2$-local depolarizing channels act.}
    \label{fig:layer_structure}
\end{figure}
\begin{figure}[t]
    \centering
    \includegraphics[width=0.98\linewidth]{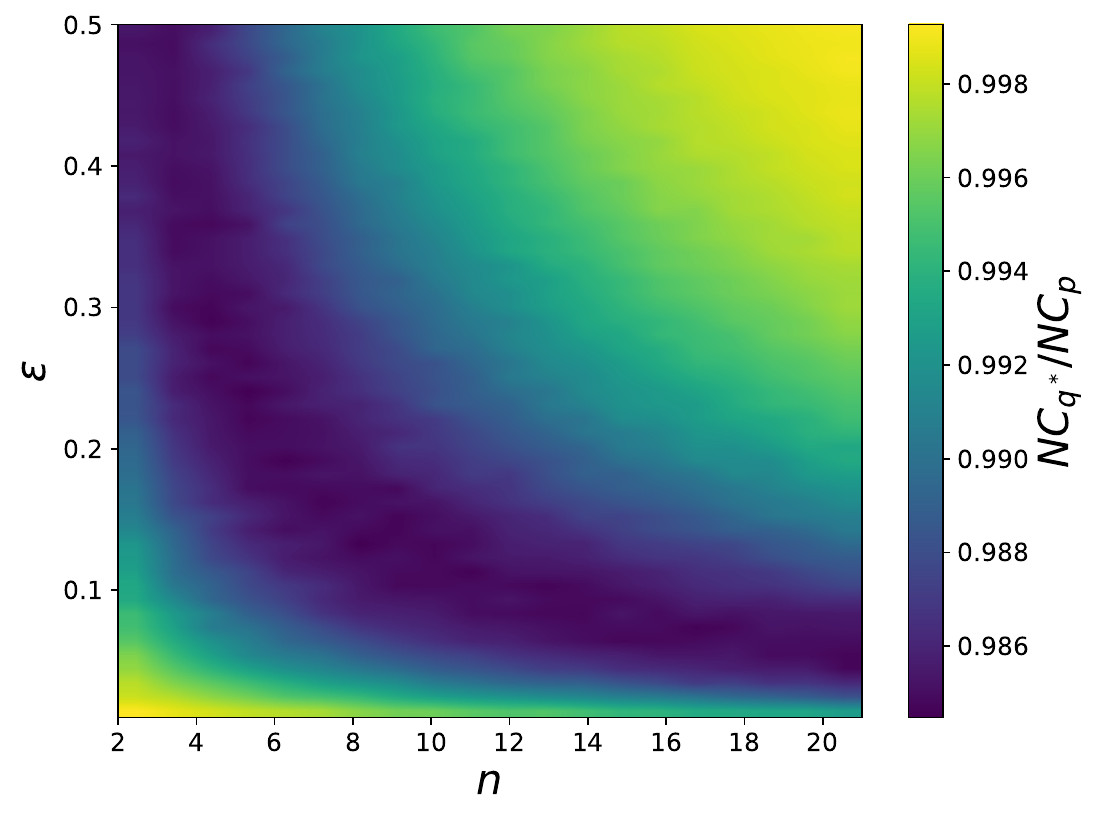}
    \caption{Net cost ratio $NC_{q^*}/NC_p$ for a single layer with the structure in Figure~\ref{fig:layer_structure}, as a function of the number of qubits $n$ and the depolarizing-noise parameter $\varepsilon$.}
    \label{fig:PEC_net_cost_ratio}
\end{figure}
Since all PEC Pauli operations are applied in parallel, 
both the single-qubit block and the two-qubit block are effectively followed by a random Pauli strings acting on all $n$ qubits, 
but according to different probability distributions.
For this example, 
we take the circuit depth as the cost metric and assume that single-qubit $I$ and $Z$ operations are “classical” 
(i.e., implemented via software frame updates or virtual-$Z$ gates), 
whereas $X$ and $Y$ require physical control pulses.
Consequently, 
each PEC Pauli string after a single/two qubit block contributes a cost of $0$ if it contains only $\sigma^0$ and $\sigma^3$, or $1$ otherwise.
As a result, 
the total cost of a layer with this structure, 
comprising both single- and two-qubit blocks, 
can take the values $2$, $3$, or $4$,
depending on the sampled Pauli strings. 
In Figure~\ref{fig:PEC_net_cost_ratio}, we report numerical results for the net cost ratio of Eq.~\eqref{eq:PEC_net_cost_ratio} 
for a single layer, 
as a function of both the noise parameter $\varepsilon$ and the number of qubits $n$. 
Remarkably, the minimal ratios,
i.e. the maximum advantages, are obtained for non-trivial combinations of $\varepsilon$ and $n$. 
Although this behavior may seem unexpected, 
it has a simple intuitive explanation. 
For a fixed number of qubits, 
in the limit $\varepsilon \to 0$, 
PEC is effectively unnecessary, 
and most sampled Pauli strings are identities. 
In this regime, cost fluctuations across different circuit runs vanish, 
and from Eq.~\eqref{eq:netCostRatio} we know that the net cost ratio approaches one. 
The same behavior occurs in the opposite regime of large $\varepsilon$. 
Here, the sampled Pauli strings are almost never identities, 
and the probability that a sampled string contains only $I$ or $Z$ is exponentially small in $n$. 
Consequently, almost all runs incur the maximum layer cost of $4$, 
with small statistical fluctuations, 
and the net cost ratio again approaches one. 
The optimal reduction in net cost therefore arises at intermediate noise levels and qubit numbers, 
where sampling fluctuations are significant and non-trivial low-cost Pauli strings still occur.
The generalization to $s$ layers is straightforward,
with the corresponding asymptotic scaling characterized by Theorem~\ref{thm:TheoremIIDLInfty}.

\section{Conclusions}
In this paper, we introduced a general framework that leverages IS as a classical optimization layer for randomized quantum algorithms, 
enabling systematic reductions in computational resource requirements without altering the underlying quantum procedures. 
Central to this contribution is an analytical expression for the optimal sampling distribution that minimizes the net cost associated with estimating a target quantity, 
thereby elevating IS from a heuristic variance-reduction technique,
to a principled resource-allocation mechanism for quantum computations.

We analytically established that the IS transformation preserves the effects of algorithmic imperfections and circuit noise.
As a consequence, 
while the framework can substantially reduce expected resource consumption through optimal sampling, 
it cannot be used to suppress or modify the effects of imperfect quantum operations, 

A defining feature of the framework is its flexibility with respect to cost modeling. 
The formalism accommodates arbitrary user-defined cost functions, 
allowing practitioners to tailor optimization to the dominant resource constraints of a given implementation. 
For near-term devices, this may correspond to minimizing two-qubit gate counts, 
whereas in fault-tolerant settings the relevant metric may instead reflect non-Clifford gate usage or other expensive logical operations. More broadly, the same methodology can be applied to optimize heterogeneous objectives such as wall-clock runtime or energy consumption, reinforcing the interpretation of IS as a general-purpose classical wrapper for resource-aware execution of randomized quantum protocols.

The framework is likewise observable-independent at the formulation level, yet it naturally admits observable-aware extensions. In particular, incorporating prior knowledge of a target observable into the optimization procedure could enable joint cost--variance tradeoffs, simultaneously reducing resource expenditure and statistical uncertainty. Developing such observable-informed strategies constitutes a promising direction for future work.

The magnitude of achievable cost reduction is governed by the statistical structure of per-circuit resource fluctuations. 
Substantial gains arise when the distribution of circuit costs is broad or when correlations between randomized steps preserve variability across runs. 
By contrast, in protocols composed of many independent randomized steps, where the per-run cost is additive in the cost of such steps, the net cost concentrates around its mean via the central limit theorem, leading to a progressive suppression of IS advantages that scales asymptotically as $1/s$ with the number of steps $s$.
This behavior highlights a fundamental limitation of the current formulation. 
Notably, the situation changes qualitatively in the presence of error-detection mechanisms: 
conditioning on successful runs effectively induces heavy-tailed cost statistics, 
under which the resource reduction enabled by IS can increase exponentially with $s$.
This observations suggests a strategy to circumvent the aforementioned limitation. 
For algorithms composed of multiple steps, one may consider randomized implementations that introduce correlations between choices across steps to avoid concentration effects due to the Central Limit Theorem. This approach could allow exploration of scenarios where the average cost in Eq.~\eqref{eq:netCostRatio} remains comparable to existing methods, while the average of the square root of the cost is reduced. Developing randomized algorithms with this type of structured variability therefore represents a promising direction for future work.

In Sec.~\ref{sec:applications}, we illustrated the versatility 
of our framework with several applications,
namely Hamiltonian simulation via randomized product formulas, 
implementation of dephasing channels via randomized time evolutions, mixed-state simulation, composite observable estimation, 
classical shadow tomography, and probabilistic error cancellation for error mitigation. 
While our examples employed simplified cost models, the framework can accommodate more realistic models and different optimization objectives, 
demonstrating its broad applicability across randomized quantum algorithms.
Since our framework provides a systematic procedure 
to reduce the computational resource requirement of any randomized protocol through optimal IS, 
we suggest that performance comparisons between deterministic and randomized algorithmic formulations 
should account for the use of IS. 
Therefore, 
a meaningful assessment requires contrasting the original algorithm 
with its randomized counterpart augmented by IS, 
rather than with the bare randomized implementation.

\section{Acknowledgments} 
This work was primarily led and supported by the U.S. Department of Energy, Office of Science, National Quantum Information Science Research Centers, Quantum Science Center (QSC). Y.S. was supported by QSC. All authors contributed to the developed of the theoretical framework and its applications and wrote the manuscript. T.A.A. acknowledges support by the Laboratory Directed Research and Development (LDRD) program of Los Alamos National Laboratory (LANL) under project number 20260043DR and the Information Science and Technology Institute (ISTI) program under ISTI Rapid Response. D.C. was a participant in the 2025 Quantum Computing Summer School at LANL, sponsored by the LANL Information Science \& Technology Institute. T.A.A and D.C. performed the numerical simulations. 
The authors warmly acknowledge Rolando Somma, Luca Spagnoli and Marco Cerezo for stimulating scientific discussions.

\bibliographystyle{IEEEtran}   
\bibliography{refs}

@article{CorcolesRB2012,
  author    = {C{\'o}rcoles, Antonio D. and Gambetta, Jay M. and Chow, Jerry M. 
               and Smolin, John A. and Ware, Marcus and Strand, Jared D. 
               and Plourde, Britton L. T. and Steffen, Matthias},
  title     = {Process verification of two-qubit quantum gates by randomized benchmarking},
  journal   = {arXiv preprint},
  eprint    = {1210.7011},
  year      = {2012},
  archivePrefix = {arXiv},
  primaryClass = {quant-ph}
}

@article{VatanWilliams2004,
  author    = {Vatan, Farrokh and Williams, Colin},
  title     = {Optimal Quantum Circuits for General Two-Qubit Gates},
  journal   = {Physical Review A},
  volume    = {69},
  number    = {3},
  pages     = {032315},
  year      = {2004},
  doi       = {10.1103/PhysRevA.69.032315},
  eprint    = {quant-ph/0308006},
  archivePrefix = {arXiv},
  primaryClass = {quant-ph}
}

@article{kiss2023importance,
  title={Importance sampling for stochastic quantum simulations},
  author={Kiss, Oriel and Grossi, Michele and Roggero, Alessandro},
  journal={Quantum},
  volume={7},
  pages={977},
  year={2023},
  publisher={Verein zur F{\"o}rderung des Open Access Publizierens in den Quantenwissenschaften}
}

@article{Childs2019,
  author    = {Andrew M. Childs and Aaron Ostrander and Yuan Su},
  title     = {Faster quantum simulation by randomization},
  journal   = {Quantum},
  volume    = {3},
  pages     = {182},
  year      = {2019},
  doi       = {10.22331/Q-2019-09-02-182},
  url       = {https://quantum-journal.org/papers/q-2019-09-02-182/}
}

@article{Campbell2019,
  author    = {Earl Campbell},
  title     = {Random Compiler for Fast Hamiltonian Simulation},
  journal   = {Phys. Rev. Lett.},
  volume    = {123},
  pages     = {070503},
  year      = {2019},
  doi       = {10.1103/PhysRevLett.123.070503}
}

@article{Wallman2016,
  author    = {Joel J. Wallman and Joseph Emerson},
  title     = {Noise tailoring for scalable quantum computation via randomized compiling},
  journal   = {Phys. Rev. A},
  volume    = {94},
  pages     = {052325},
  year      = {2016},
  doi       = {10.1103/PhysRevA.94.052325}
}

@article{FlammiaLiu2011,
  author    = {Steven T. Flammia and Yi{-}Kai Liu},
  title     = {Direct Fidelity Estimation from Few Pauli Measurements},
  journal   = {Phys. Rev. Lett.},
  volume    = {106},
  pages     = {230501},
  year      = {2011},
  doi       = {10.1103/PhysRevLett.106.230501}
}

@article{HuangKuengPreskill2020,
  author    = {Hsin{-}Yuan Huang and Richard Kueng and John Preskill},
  title     = {Predicting many properties of a quantum system from very few measurements},
  journal   = {Nat. Phys.},
  volume    = {16},
  pages     = {1050--1057},
  year      = {2020},
  doi       = {10.1038/s41567-020-0932-7}
}

@book{OwenMCBook,
  author    = {Art B. Owen},
  title     = {Monte Carlo Theory, Methods and Examples},
  year      = {2013},
  note      = {Online book},
  url       = {https://artowen.su.domains/mc/}
}

@book{LiuMCBook,
  author    = {Jun S. Liu},
  title     = {Monte Carlo Strategies in Scientific Computing},
  publisher = {Springer},
  year      = {2001},
  doi       = {10.1007/978-0-387-76371-2}
}

@article{TemmeBravyiGambetta2017,
  author    = {Kristan Temme and Sergey Bravyi and Jay M. Gambetta},
  title     = {Error Mitigation for Short-Depth Quantum Circuits},
  journal   = {Phys. Rev. Lett.},
  volume    = {119},
  pages     = {180509},
  year      = {2017},
  doi       = {10.1103/PhysRevLett.119.180509}
}

@article{EndoBenjaminLi2018,
  author    = {Suguru Endo and Simon C. Benjamin and Ying Li},
  title     = {Practical Quantum Error Mitigation for Near-Future Applications},
  journal   = {Phys. Rev. X},
  volume    = {8},
  number    = {3},
  pages     = {031027},
  year      = {2018},
  doi       = {10.1103/PhysRevX.8.031027}
}

@article{TakagiLimitsQEM2022,
  author    = {Ryuji Takagi and Suguru Endo and Shintaro Minagawa and Mile Gu},
  title     = {Fundamental limits of quantum error mitigation},
  journal   = {npj Quantum Information},
  volume    = {8},
  pages     = {114},
  year      = {2022},
  doi       = {10.1038/s41534-022-00618-z}
}

@article{RMPreviewQEM2023,
  author    = {Zhenyu Cai and Ryan Babbush and Simon C. Benjamin and Suguru Endo and William J. Huggins and Ying Li and Jarrod R. McClean and Thomas E. O'Brien},
  title     = {Quantum error mitigation},
  journal   = {Rev. Mod. Phys.},
  volume    = {95},
  number    = {4},
  pages     = {045005},
  year      = {2023},
  doi       = {10.1103/RevModPhys.95.045005}
}

@article{BoixoKnillSomma2009,
  author    = {Sergio Boixo and Emanuel Knill and Rolando D. Somma},
  title     = {Eigenpath traversal by phase randomization},
  journal   = {arXiv:0903.1652},
  year      = {2009},
  url       = {https://arxiv.org/abs/0903.1652}
}

@inproceedings{CunninghamRoland2024,
  author    = {Joseph Cunningham and J{\'e}r{\'e}mie Roland},
  title     = {Eigenpath Traversal by Poisson-Distributed Phase Randomisation},
  booktitle = {19th Conf.\ on the Theory of Quantum Computation, Communication and Cryptography (TQC 2024)},
  series    = {LIPIcs},
  volume    = {310},
  pages     = {7:1--7:20},
  year      = {2024},
  publisher = {Schloss Dagstuhl},
  doi       = {10.4230/LIPIcs.TQC.2024.7}
}

@article{temme2017error,
  title     = {Error Mitigation for Short-Depth Quantum Circuits},
  author    = {Temme, Kristan and Bravyi, Sergey and Gambetta, Jay M.},
  journal   = {Physical Review Letters},
  volume    = {119},
  number    = {18},
  pages     = {180509},
  year      = {2017},
  doi       = {10.1103/PhysRevLett.119.180509}
}

@article{hilder2022fault,
  title={Fault-tolerant parity readout on a shuttling-based trapped-ion quantum computer},
  author={Hilder, Janine and Pijn, Daniel and Onishchenko, Oleksiy and Stahl, Alexander and Orth, Maximilian and Lekitsch, Bj{\"o}rn and Rodriguez-Blanco, Andrea and M{\"u}ller, Markus and Schmidt-Kaler, Ferdinand and Poschinger, UG},
  journal={Physical Review X},
  volume={12},
  number={1},
  pages={011032},
  year={2022},
  publisher={APS}
}

@article{self2024protecting,
  title={Protecting expressive circuits with a quantum error detection code},
  author={Self, Chris N and Benedetti, Marcello and Amaro, David},
  journal={Nature Physics},
  volume={20},
  number={2},
  pages={219--224},
  year={2024},
  publisher={Nature Publishing Group UK London}
}

@article{campbell2018random,
  title={A random compiler for fast Hamiltonian simulation},
  author={Campbell, Earl},
  journal={arXiv preprint arXiv:1811.08017},
  year={2018}
}

@book{mitzenmacher2017probability,
  title={Probability and computing: Randomization and probabilistic techniques in algorithms and data analysis},
  author={Mitzenmacher, Michael and Upfal, Eli},
  year={2017},
  publisher={Cambridge university press}
}

@article{sanders2020compilation,
  title = {Compilation of Fault-Tolerant Quantum Heuristics for Combinatorial Optimization},
  author = {Sanders, Yuval R. and Berry, Dominic W. and Costa, Pedro C.S. and Tessler, Louis W. and Wiebe, Nathan and Gidney, Craig and Neven, Hartmut and Babbush, Ryan},
  journal = {PRX Quantum},
  volume = {1},
  issue = {2},
  pages = {020312},
  numpages = {70},
  year = {2020},
  month = {Nov},
  publisher = {American Physical Society},
  doi = {10.1103/PRXQuantum.1.020312},
  url = {https://link.aps.org/doi/10.1103/PRXQuantum.1.020312}
}

@article{linke2018measuring,
  title={Measuring the R{\'e}nyi entropy of a two-site Fermi-Hubbard model on a trapped ion quantum computer},
  author={Linke, Norbert M and Johri, Sonika and Figgatt, Caroline and Landsman, Kevin A and Matsuura, Anne Y and Monroe, Christopher},
  journal={Physical Review A},
  volume={98},
  number={5},
  pages={052334},
  year={2018},
  publisher={APS}
}

@article{bonet2018low,
  title={Low-cost error mitigation by symmetry verification},
  author={Bonet-Monroig, Xavi and Sagastizabal, Ramiro and Singh, M and O'Brien, TE},
  journal={Physical Review A},
  volume={98},
  number={6},
  pages={062339},
  year={2018},
  publisher={APS}
}

@article{cai2021quantum,
  title={Quantum error mitigation using symmetry expansion},
  author={Cai, Zhenyu},
  journal={Quantum},
  volume={5},
  pages={548},
  year={2021},
  publisher={Verein zur F{\"o}rderung des Open Access Publizierens in den Quantenwissenschaften}
}

@article{mcardle2019error,
  title={Error-mitigated digital quantum simulation},
  author={McArdle, Sam and Yuan, Xiao and Benjamin, Simon},
  journal={Physical review letters},
  volume={122},
  number={18},
  pages={180501},
  year={2019},
  publisher={APS}
}

@article{arrasmith2020operator,
  title={Operator sampling for shot-frugal optimization in variational algorithms},
  author={Arrasmith, Andrew and Cincio, Lukasz and Somma, Rolando D and Coles, Patrick J},
  journal={arXiv preprint arXiv:2004.06252},
  year={2020}
}

@software{pennylane,
  author = {Bergholm, Ville and Izaac, Josh and Schuld, Maria and Killoran, Nathan and Wierichs, David and Gogolin, Christian and Blank, Christian and Wecker, Dave and Troyer, Matthias and Petruccione, Francesco and Aspuru-Guzik, Alan and Wang, Zixuan and Cincio, Lukasz and Meyer, Jens},
  title = {PennyLane: Automatic differentiation of hybrid quantum-classical computations},
  year = {2018},
  url = {https://pennylane.ai/},
  doi = {10.5281/zenodo.2562110}
}

@article{ransford2025helios,
  title={Helios: A 98-qubit trapped-ion quantum computer},
  author={Ransford, Anthony and Allman, MS and Arkinstall, Jake and Campora III, JP and Cooper, Samuel F and Delaney, Robert D and Dreiling, Joan M and Estey, Brian and Figgatt, Caroline and Hall, Alex and others},
  journal={arXiv preprint arXiv:2511.05465},
  year={2025}
}

@article{jennings2025randomized,
  title = {Randomized Adiabatic Quantum Linear Solver Algorithm with Optimal Complexity Scaling and Detailed Running Costs},
  author = {Jennings, David and Lostaglio, Matteo and Pallister, Sam and Sornborger, Andrew T. and Suba\ifmmode \mbox{\c{s}}\else \c{s}\fi{}\ifmmode \imath \else \i \fi{}, Yi\ifmmode \breve{g}\else \u{g}\fi{}it},
  journal = {PRX Quantum},
  volume = {6},
  issue = {4},
  pages = {040373},
  numpages = {23},
  year = {2025},
  month = {Dec},
  publisher = {American Physical Society},
  doi = {10.1103/1xkb-22cc},
  url = {https://link.aps.org/doi/10.1103/1xkb-22cc}
}

\clearpage
\onecolumngrid
\appendix

\section{Proof of Theorem \ref{thm:algorithms performancesZeroFillDiscardL}}
\label{appx:proofZeroFillDiscardL}
We begin with a proof for the ZeroFill$(L)$ algorithm.
For a fixed value of $\utheta$, each circuit attempt succeeds with probability $f(\utheta)$. Since the circuit is repeated at most $L$ times, the probability that 
all $L$ runs fail is $
\bigl(1-f(\utheta)\bigr)^L.
$ Therefore, the probability that $\utheta$ yields at least one successful 
measurement is
\[
k(\utheta;L) 
\coloneq 1 - \bigl(1-f(\utheta)\bigr)^L.
\]
Assume that $p(\utheta)$ is absolutely continuous with respect to $q(\utheta)k(\utheta;L)$,
the estimator is unbiased for every observable $O$ if
\begin{align}
\mathbb{E}_{qr}\left[w_Z(\underline{\theta})k(\utheta;L)x\right] \nonumber = \mathbb{E}_{pr}\left[x\right].
\end{align}
This holds (a.e.) if and only if
\begin{align}\label{eq:unbiasedwDetectError}
    w_Z(\underline{\theta}) &= \frac{p(\utheta)}{q(\utheta)k(\utheta;L)} = \frac{p(\underline{\theta})}{q(\underline{\theta})\left[1- (1-f(\underline{\theta}))^L \right]}\,.
\end{align}
To characterize the average cost per run, we note the following.
Given a $\utheta$, the number of attempts required to record an outcome is $\min(J,L)$, where $J \sim \mathrm{Geom}(f(\underline{\theta}))$ 
counts the number of trials until the first success, and $L$ is the fixed positive integer limiting the maximum number of allowed trials for each $\underline\theta$.  
Using the tail-sum formula,
\begin{align}
    \mathbb{E}[\min(J,L)|\utheta] = \sum_{j=1}^{\infty} \Pr[\min(J,L) \ge j|\utheta].
\end{align}
For $1 \le j \le L$, the event $\min(J,L) \ge j$ is equivalent to 
$J \ge j$. Hence 
\begin{align}
    \Pr[\min(J,L) \ge j|\utheta] &= \Pr[J \ge j|\utheta]  = (1-f(\underline\theta))^{\,j-1}.
\end{align}
For $j > L$, we cannot have $\min(J,L) \ge j$, so the probability is zero. Thus, the expected number of circuit runs given a $\utheta$ is
\begin{align}
    \mathbb{E}[\min(J,L)|\utheta] &= \sum_{j=1}^L (1-f(\underline\theta))^{j-1} = \frac{1-(1-f(\underline\theta))^L}{f(\underline\theta)} =\frac{k(\utheta;L)}{f(\utheta)},\label{eq:expecAttempts}
\end{align}
and the expected cost is $c(\utheta)k(\utheta;L)\over f(\utheta)$.
Averaging over the sampling distribution $q(\utheta)$ gives the expected cost per run as
\begin{align}
& \mathbb{E}_{q}\!\left[\, c(\utheta)\,\frac{k(\utheta;L)}{f(\underline\theta))} \right]. 
\end{align}
For the Variance, we have
\begin{align}
    Var_{qr}^{\max} & \leq \max_{|x| \leq 1}\mathbb{E}_{qr}\big[ k(\utheta;L)  w_Z^2(\utheta)x^2 \big]  = \mathbb{E}_q\left[k(\utheta;L)w_Z^2(\utheta)\right].
\end{align}
Therefore,
\begin{align}
    NC_q &= \mathbb{E}_{q}\!\left[\, c(\utheta)\,\frac{k(\utheta;L)}{f(\underline\theta)} \right] \mathbb{E}_q\left[k(\utheta;L)w_Z^2(\utheta)\right].
\end{align}
Using the condition \eqref{eq:unbiasedwDetectError}, we perform the change of measure to obtain
\begin{align}
    NC_q & \;=\; \mathbb{E}_{p}\!\left[\frac{ c(\utheta)}{w_Z(\utheta)f(\underline\theta)} \right] \mathbb{E}_p\left[w_Z(\utheta)\right]  \;\geq\; \left(\mathbb{E}_p\left[ \sqrt{\frac{c(\utheta)}{f(\underline\theta)}}\right]\right)^2,
\end{align}
where the last inequality follows from the Cauchy-Schwarz inequality, with equality if and only if $w_Z(\utheta)\propto\sqrt{c(\utheta)/f(\utheta)}$ which implies:
\begin{align}
    q(\utheta) \propto \frac{p(\utheta)}{k(\utheta;L)} \sqrt{\frac{f(\utheta)}{c(\utheta)}} = \frac{p(\utheta)}{ 1 - \bigl(1-f(\utheta)\bigr)^L} \sqrt{\frac{f(\utheta)}{c(\utheta)}}.
\end{align}

We now provide a proof for the Discard$(L)$ algorithm. In this algorithm $\utheta$ is sampled from $q(\utheta)$ and if errors are detected in all $L$ runs of the corresponding quantum circuit, a new parameter $\utheta$ is sampled from $q(\utheta)$. This is in fact rejection sampling and results in $\utheta$ begin sampled from a new distribution given by:
\begin{align}
h_L(\underline{\theta})\coloneq\;\frac{q(\underline{\theta})\,k(\underline{\theta},L)}{\mathbb{E}_q[k(\utheta;L)]},
\end{align}
where $  \mathbb{E}_q[k(\utheta;L)] = \int q(\underline{\theta})\,k(\underline{\theta};L)\,d\underline{\theta}$ is the acceptance rate. 
Now we can think of the procedure as the standard IS algorithm with $q(\utheta)\rightarrow h_L(\utheta)$ and for an unbiased estimator for any observable $O$, we choose 
\begin{align}\label{eq:discard_unbiasedwDetectError}
    w_D(\underline{\theta}) = p(\utheta)/h_L(\utheta)\,.
\end{align}
Because of the rejection sampling step, we need to sample $\utheta$ from $q(\utheta)$ on average $1/\mathbb{E}_q[k(\utheta;L)]$ times in order to get a single sample from $h_L(\utheta)$.  
From \eqref{eq:expecAttempts}, for a given \(\underline{\theta}\) the expected number of circuit runs is $k(\utheta;L)/f(\utheta)$. 
Hence the expected cost per sample is
\begin{align}
     \frac{1}{\mathbb{E}_q[k(\utheta;L)]}\mathbb{E}_q\left[c(\underline{\theta})\,\frac{k(\underline{\theta};L)}{f(\underline{\theta})}\right] = \mathbb{E}_{h_L}\left[\frac{c(\underline{\theta})}{f(\underline{\theta})}\right].
\end{align}
As for the Variance, we have
\begin{align}
    Var_{qr}^{\max} 
    & \leq \mathbb{E}_{h_L}\left[w_D^2(\utheta)\right].
\end{align}
This gives the bound on the net cost as
\begin{align}
    NC_q & = \mathbb{E}_{h_L}\left[\frac{c(\utheta)}{f(\utheta)}\right]\mathbb{E}_{h_L}\left[w_D^2(\utheta)\right]  =\mathbb{E}_p\left[\frac{c(\utheta)}{w_D(\utheta)f(\utheta)}\right]\mathbb{E}_p\left[w_D(\utheta)\right] \geq \left(\mathbb{E}_p\left[\sqrt\frac{c(\utheta)}{f(\utheta)}\right]\right)^2,
\end{align}
where the last inequality follows from the Cauchy-Schwarz inequality, with equality if and only if $w(\utheta)\propto\sqrt{c(\utheta)/f(\utheta)}$ which implies:
\begin{align}
    q(\utheta) & \propto \frac{p(\utheta)}{k(\utheta;L)}\sqrt\frac{f(\utheta)}{c(\utheta)} = \frac{p(\utheta)}{1-(1-f(\utheta))^L}\sqrt\frac{f(\utheta)}{c(\utheta)}.
\end{align}
This completes the proof.

\section{Proof of Theorem \ref{thm:TheoremIIDLInfty}}
\label{proof:TheoremIIDLInfty}
In the limit of large $s$, the Central Limit Theorem applies,
implying that the distribution of $c(\underline{\theta})$ becomes close to a normal distribution $\mathcal{N}\!\big(c;\,s\mu_1, s\sigma^2_1\big)$,
where $\mu_1$ and $\sigma^2_1$ are the first two cumulants 
of each variable $c_1(\theta_i)$.
By the Berry--Esseen theorem, if the third absolute moment of $c_1(\theta_i)$ is finite, then the cumulative distribution function of the normalized sum differs from the Gaussian CDF $\Phi$ by at most a constant times $1/\sqrt{s}$. Concretely, there exists a finite constant $C$ (depending only on the third absolute moment) such that for all real $y$
\[
\Big|\mathbb{P}\Big(\frac{c-s\mu_1}{\sqrt{s}\,\sigma_1}\le y\Big)-\Phi(y)\Big|\le \frac{C}{\sqrt{s}}\,.
\]
To propagate this CDF bound to expectations of the (unbounded) test function $g(c)=c^\alpha e^{\lambda\alpha c}$ we fix a cutoff $M>0$ and write
\[
\mathbb{E}[g(c)] = \mathbb{E}[g(c)\mathbf{1}_{c\le M}] + \mathbb{E}[g(c)\mathbf{1}_{c>M}].
\]
We apply Berry--Esseen to the bounded function $g\mathbf{1}$ to obtain an additive $O(s^{-1/2})$ error for the truncated expectation; then choose $M$ large enough (using the assumed moment bound) so that the tail $\mathbb{E}[g(c)\mathbf{1}_{c>M}]$ is uniformly small. Letting $M\to\infty$ after the Berry--Esseen step implies that replacing the exact distribution of $c$ by the Gaussian $\mathcal{N}(c;\,s\mu_1,s\sigma_1^2)$ incurs an additive error of order $O(s^{-1/2})$ in $\mathbb{E}[g(c)]$. Thus the Gaussian approximation step contributes an $O(s^{-1/2})$ additive error in the final asymptotic evaluation, i.e.
\begin{align}
\mathbb{E}_p\left[\left(\frac{c(\utheta)}{f(\utheta)}\right)^\alpha\right]
    & = \frac{\displaystyle\int_0^\infty dc\; c^\alpha e^{\lambda \alpha c} \exp\!\big(-\tfrac{(c-s\mu_1)^2}{2s\sigma_1^2}\big)}
    {\displaystyle\int_0^\infty dc\; \exp\!\big(-\tfrac{(c-s\mu_1)^2}{2s\sigma_1^2}\big)} + \mathcal{O}\left(\frac{1}{\sqrt{s}}\right),
\end{align}
where the normal density is truncated at $c=0$ and the denominator enforces the normalization on $[0,\infty)$. Here we introduce the parameter $\alpha$: setting $\alpha=0.5$ evaluates the net cost under the sampling distribution $q(\utheta)$ (see \eqref{eq:NetCostZeroFill}), whereas $\alpha = 1$ evaluates the net cost under the target distribution $p(\utheta)$ (see \eqref{eq:netCostLinftyWithP}).
Since we are studying the asymptotic behavior for large $s$, we first control the denominator with a Mills-type expansion. Writing $c=s\mu_1+\sqrt{s}\,\sigma_1 x$ gives
\begin{align*}
   \int_0^\infty dc\;  \exp\!\Big(-\frac{(c-s\mu_1)^2}{2s\sigma_1^2}\Big)& = \sqrt{s}\,\sigma_1\int_{-\sqrt{s}\mu_1/\sigma_1}^\infty dx\; e^{-x^2/2}\\
 &= \sqrt{2\pi s \sigma_1^2}\; \Phi\!\Big(\frac{\sqrt{s}\,\mu_1}{\sigma_1}\Big)\\
 &= \sqrt{2\pi s \sigma_1^2}\,\Big[1+ \mathrm{o}\!\big(e^{-s\mu_1^2/(2\sigma_1^2)}\big)\Big]\,,
\end{align*}
Hence the denominator tends to $\sqrt{2\pi s\sigma_1^2}$ up to an exponentially small correction.
First, we focus on the numerator
and complete the square in the exponent:
\begin{align*}
    & \int_0^\infty dc\; c^\alpha e^{\lambda \alpha c} e^{-(c-s\mu_1)^2/(2s\sigma_1^2)} \nonumber \\
    &= \int_0^\infty dc\; c^\alpha \exp\!\Big(-\frac{1}{2s\sigma_1^2}\Big(c-\big[s\mu_1+s \lambda \alpha \sigma_1^2\big]\Big)^2\Big)\exp\Big(\frac{1}{2s\sigma_1^2}\Big(\big[s\mu_1+s \lambda \alpha \sigma_1^2\big]^2-s^2\mu_1^2\Big)\Big)\\[4pt]
    &= \exp\!\Big(s\alpha\big(\lambda\mu_1+\tfrac{\alpha}{2}\lambda^2\sigma_1^2\big)\Big) 
    \int_0^\infty \!\!dc c^\alpha \exp\!\Big(\!-\frac{1}{2s\sigma_1^2}\Big(c-\big[s\mu_1+s\lambda\alpha\sigma_1^2\big]\Big)^2\Big).
\end{align*}
We now put numerator and denominator together,
obtaining
\begin{align}
&\mathbb{E}_p\left[\left(\frac{c(\utheta)}{f(\utheta)}\right)^\alpha\right] = \exp\!\Big(s\alpha\big(\lambda\mu_1+\tfrac{\alpha}{2}\lambda^2\sigma_1^2\big)\Big)\,
      \Big[1+ \mathrm{o}\!\big(e^{-s\alpha\mu_1^2/(2\sigma_1^2)}\big)\Big]      \int_0^\infty dc\; c^\alpha\; \mathcal{N}\!\big(c;\,s\mu_1+s\lambda\alpha\sigma_1^2,\; s\sigma_1^2\big).
\end{align}
The remaining term to be evaluated is the integral
\begin{align}\label{eq: finite c alpha exp}
&\int_0^\infty dc\; c^\alpha\; \mathcal{N}\!\big(c;\,s\mu_1+s\lambda\alpha\sigma_1^2,\; s\sigma_1^2\big)
= \sigma_1^{\alpha}s^{\alpha/2}\hspace{-0.4in}\int\limits_{-\sqrt{s}[\mu_1+\lambda\alpha\sigma_1^2]/\sigma_1}^{\infty} \hspace{-0.4in} dx\;
\Big(\frac{\sqrt{s}\mu_1+\sqrt{s}\lambda\alpha\sigma_1^2}{\sigma_1}+x\Big)^\alpha\frac{e^{-x^2/2}}{\sqrt{2\pi}}.
\end{align}
The lower limit $-\sqrt{s}[\mu_1+\lambda \alpha \sigma_1^2]/\sigma_1$ tends to $-\infty$ as $s\to\infty$ when $\mu_1>0$. Extending the integral down to $-\infty$ introduces an error equal to the discarded left tail
\begin{align}
    &T(s)\coloneq \Big|\int\limits_{-\infty}^{-\sqrt{s}[\mu_1+\lambda \alpha \sigma_1^2]/\sigma_1} dx\;
\left(\frac{\sqrt{s}\mu_1+\sqrt{s}\lambda\alpha\sigma_1^2}{\sigma_1}+x\right)^\alpha \frac{e^{-x^2/2}}{\sqrt{2\pi}}\Big|.
\end{align}
We bound $T(s)$ by Cauchy--Schwarz as follows. We define
\begin{align}
    x_0\coloneq &\frac{\sqrt{s}[\mu_1+\lambda \alpha \sigma_1^2]}{\sigma_1}>0,
\qquad \phi(x)\coloneq\frac{e^{-x^2/2}}{\sqrt{2\pi}}, \qquad h(x)\coloneq\left[\frac{\sqrt{s}\mu_1+\sqrt{s}\lambda\alpha\sigma_1^2}{\sigma_1}+x\right]^\alpha.
\end{align}
Then by Cauchy--Schwarz,
\begin{align}
    &T(s)=\abs{\int_{-\infty}^{-x_0} h(x)\,\phi(x)\,dx} \le \Big(\int_{-\infty}^{-x_0} h(x)^2\phi(x)\,dx\Big)^{1/2}\;
      \Big(\int_{-\infty}^{x_0}\phi(x)\,dx\Big)^{1/2}.
\end{align}
We henceforth estimate each factor: (i) Tail probability factor (second factor). Using the standard bound for the Gaussian tail,
\begin{align}
    \int_{-\infty}^{-x_0}\phi(x)\,dx&=\int_{x_0}^{\infty}\phi(x)\,dx
\le \frac{\phi(x_0)}{x_0} = \frac{1}{x_0\sqrt{2\pi}}\,e^{-x_0^2/2}.
\end{align}
Hence its square root satisfies
\[
\Big(\int_{-\infty}^{-x_0}\phi(x)\,dx\Big)^{1/2}
\le \frac{1}{(2\pi)^{1/4}}\,x_0^{-1/2}\,e^{-x_0^2/4}.
\]

(ii) First factor. For $x\le -x_0$ we may bound $h(x)^2$ by a polynomial times $s^{\alpha}$. Indeed there exist constants (depending only on $\alpha,\mu_1,\lambda,\sigma_1$) such that
\[
h(x)^2 \le C\,s^{\alpha}\,(1+x^{2\alpha}).
\]
Hence
\begin{align}
    \int_{-\infty}^{-x_0} h(x)^2\phi(x)\,dx &\le C\,s^{\alpha}\int_{x_0}^{\infty}(1+x^{2\alpha})\phi(x)\,dx
 \leq C\,s^{\alpha}\int_{-\infty}^{\infty}(1+x^{2\alpha})\phi(x)\,dx.
\end{align}
For $\alpha$ integer or half integer, 
the r.h.s. integral is a sum of moments from a normal distributions,
hence it's finite.
Combining (i) and (ii) in the Cauchy--Schwarz inequality yields
\[
T(s) = O\!\Big( s^{(2\alpha-1)/2}\,\exp\!\Big(-\frac{s[\mu_1+\lambda\alpha\sigma_1^2]^2}{2\sigma_1^2}\Big)\Big).
\]
Thus the error from extending the lower limit in Eq.~\eqref{eq: finite c alpha exp} to $-\infty$ is exponentially small in $s$ and negligible compared to any algebraic powers of $s$.
Hence, the final step is to tackle the integral  
\[
\int_{-\infty}^{\infty} dx\;
\Big(\frac{\sqrt{s}\mu_1+\sqrt{s}\lambda\alpha\sigma_1^2}{\sigma_1}+x\Big)^\alpha\frac{e^{-x^2/2}}{\sqrt{2\pi}}.
\]
At this scope we apply Laplace's method by Taylor-expanding the $h(x)$ function around $x=0$, that is around the maximum point of the Gaussian. 
Its Taylor expansion up to second order about $x=0$ reads
\begin{align*}
    h(x) &=\Big(\frac{\sqrt{s}\mu_1+\sqrt{s}\lambda\alpha\sigma_1^2}{\sigma_1}\Big)^\alpha
     \Bigg[1 + \alpha\Big(\frac{\sqrt{s}\mu_1+\sqrt{s}\lambda\alpha\sigma_1^2}{\sigma_1}\Big)^{-1}x  + \frac{\alpha(\alpha-1)}{2}\Big(\frac{\sqrt{s}\mu_1+\sqrt{s}\lambda\alpha\sigma_1^2}{\sigma_1}\Big)^{-2}\!\!x^2\Bigg] + R(x),
\end{align*}
where $R(x)$ is the remainder.
If we neglect the contribution of $R(x)$,
the consequent error to the final result will be
\[
\int_{-\infty}^{\infty} dx\;
R(x)\frac{e^{-x^2/2}}{\sqrt{2\pi}}\,.
\]
We henceforth bound the integral of the remainder 
against the Gaussian density 
by splitting the integral into the two regions
\[
\text{(A)}\quad |x|\le s^{1/4},\qquad \text{(B)}\quad |x|> s^{1/4}.
\]
\paragraph{Region A: small $x$.}
For $|x|\le s^{1/4}$ we have
\[
\Big|\frac{x}{\sqrt{s}(\mu_1+\lambda\alpha\sigma_1^2)/\sigma_1}\Big|
= O(s^{-1/4}),
\]
so for sufficiently large $s$ the expansion parameter is uniformly small. By the Lagrange form of Taylor's remainder,
$\exists \xi(x) \in [0,x] \subseteq [0, s^{1/4}]$
\[
|R(x)|
= \abs{\frac{h^{(3)}\xi(x)}{3!}x^3} \leq C_\alpha \abs{\frac{x}{\sqrt{s}(\mu_1+\lambda\alpha\sigma_1^2)/\sigma_1}}^3\,,
\]
with a constant $C_\alpha$ depending only on $\alpha$ and the fixed combination $\mu_1+\lambda\alpha\sigma_1^2$. Therefore, integrating $|x|^3 e^{-x^2/2}$ for $|x|\le s^{1/4}$ yields
\[
\Big|\int_{|x|\le s^{1/4}} \frac{R(x)}{\sqrt{2\pi}} e^{-x^2/2} dx\Big|
= O\!\big(s^{-3/2}\big).
\]

\paragraph{Region B: large $x$.}
On $|x|>s^{1/4}$ we bound the contribution by Cauchy--Schwarz:
\begin{align}
    &\Big|\int_{|x|>s^{1/4}} \frac{R(x)}{\sqrt{2\pi}} e^{-x^2/2} dx\Big| \le \Big(\int_{|x|>s^{1/4}} \frac{R(x)^2}{\sqrt{2\pi}} e^{-x^2/2} dx\Big)^{1/2} 
      \Big(\int_{|x|>s^{1/4}} \frac{1}{\sqrt{2\pi}} e^{-x^2/2} dx\Big)^{1/2}.
\end{align}
The tail probability factor is $O\!\big(e^{-s^{1/2}/2}\big)$. 
On the other hand, since $R(x)$ is a difference of polynomials,
it is also a polynomial, 
and hence the first integral can be bounded with a finite value.
Hence the overall Region B contribution is $O\!\big(e^{-s^{1/2}/2}\big)$.

\paragraph{Total remainder.}
Combining Region A and Region B bounds we obtain the explicit remainder estimate for the Laplace step:
\[
\int_{-\infty}^{\infty} \frac{R(x)}{\sqrt{2\pi}} e^{-x^2/2} dx
= O\!\big(s^{-3/2}\big),
\]
since the algebraic term dominates the exponentially small term for large $s$.
Thus the integral over the whole real line 
satisfies the Laplace expansion
\begin{align}
    &\int_{-\infty}^{\infty} dx\; \frac{h(x)}{\sqrt{2\pi}}e^{-x^2/2}
= \Big(\frac{\sqrt{s}\mu_1+\sqrt{s}\lambda\alpha\sigma_1^2}{\sigma_1}\Big)^\alpha 
\Bigg[1+\frac{\alpha(\alpha-1)}{2}\Big(\frac{\mu_1+\lambda\alpha\sigma_1^2}{\sigma_1}\Big)^{-2}\frac{1}{s}
+ O\!\big(s^{-3/2}\big)\Bigg].
\end{align}
The final asymptotic expansion is
\begin{align}
\mathbb{E}_p\left[\left(\frac{c(\utheta)}{f(\utheta)}\right)^\alpha\right]\nonumber = \exp\!\Big(s\alpha\big(\lambda\mu_1+\tfrac{\alpha}{2}\lambda^2\sigma_1^2\big)\Big) s^{\alpha}\Big(\mu_1+\lambda\alpha\sigma_1^2\Big)^\alpha 
\Bigg[1+\frac{\alpha(\alpha-1)}{2}\Big(\frac{\mu_1+\lambda\alpha\sigma_1^2}{\sigma_1}\Big)^{-2}\frac{1}{s}
+ O\!\big(s^{-3/2}\big)\Bigg]\,.
\end{align}
Finally, taking the $1/\alpha$ power and simplifying yields the expression 
\begin{align}
\left(\mathbb{E}_p\left[\left(\frac{c(\utheta)}{f(\utheta)}\right)^\alpha\right]\right)^{1/\alpha} = e^{s\left(\lambda\mu_1+ \alpha\frac{\lambda^2 \sigma_1^2}{2}\right)}
s\left(\mu_1+ \lambda \alpha \sigma_1^2\right)  \Big[1+ \frac{\alpha(\alpha-1)}{2} \Big(\frac{\mu_1+ \lambda \alpha \sigma_1^2}{\sigma_1}\Big)^{-2}\frac{1}{s} + O\!\big(s^{-3/2}\big)\Big]^{1/\alpha}.\label{eq:netcost_gen_alpha}
\end{align}
First, we notice that, as long as $\lambda \neq 0$
and either $\mu_1 \neq 0$ or $\sigma_1 \neq 0$,
the expression showcases an exponential behavior as a function of $s$.
Moreover, in the $L=\infty$ case, we can readily use it to compute the ratio
\begin{align}
    \frac{NC_{q^*}}{NC_p}(s) &= e^{-s\left(\lambda\sigma_1\over 2\right)^2}\,\frac{\mu_1+ \frac{\lambda  \sigma_1^2}{2}}{\mu_1+ \lambda \sigma_1^2}\,\left[1-\frac{1}{4s}\left(\frac{\sigma_1}{\mu_1+ \frac{\lambda \sigma_1^2}{2}}\right)^{2}+ O\!\big(s^{-3/2}\big) \right]\,.\;
\end{align}
In particular, if $\lambda = 0$ then $NC_{q^*}/NC_p \to 1$,
while for all $\lambda > 0$ one obtains $NC_{q^*}/NC_p \to 0$.
Observe that $NC_{q^*}/NC_p$ always has a single minimum,
precisely at
\begin{align}
    s^*(L = \infty) &= \frac{1}{8}\left(\frac{\sigma_1^2}{\mu_1 + \frac{\lambda \sigma_1^2}{2}} \right)^2  \left[1+ \sqrt{1+\frac{64}{\lambda^2\sigma_1^4}\left( \mu_1 + \frac{\lambda \sigma_1^2}{2}\right)^2} \right]\,.
\end{align}

\section{Proof of Theorem \ref{thm:Optimal_Band_limited_Randomized_Sampling}}\label{sec:proof_Optimal_Band_limited_Randomized_Sampling}
Our starting point is to impose constraints on the distribution $p$, 
which remains otherwise arbitrary but must satisfy the conditions discussed in Section~\ref{sec : Eigenpath Transversal}. 
The first condition is non-negativity. To enforce this, we write 
\begin{equation} p(t) = u^*(t) u(t), \end{equation} 
thus transforming the problem of determining $p$ into the problem of determining a complex-valued function $u(t)$. The second condition is normalization, which requires 
\begin{equation} 
\int dt\, p(t) = \int dt\, |u(t)|^2 = 1. \end{equation} 
Together, these two conditions guarantee that $p(t)$ defines a valid probability density function. 
Finally, the third condition arises from the requirement that the Fourier transform $\hat{p}(\omega)$ vanishes for all $|\omega| \geq \Delta$. Expressed in terms of $u$, this reads 
\begin{equation} 
\hat{p}(\omega) = \frac{1}{2 \pi}\int d\omega_1 \, \hat{u}^*(\omega_1) \hat{u}(\omega+\omega_1), 
\end{equation} so it suffices to impose 
\begin{equation} 
\hat{u}(\omega) = 0 \qquad \forall\, |\omega| \geq \Delta/2 
\end{equation} 
to satisfy the third constraint.

We now proceed to the next stage, 
namely expressing the net cost in terms of $u(t)$.
For a generic function $z$ of $\abs{t}$, we aim to minimize $\mathbb{E}_p[z(|t|)]$ .
Because of the normalization and band-limiting conditions, the Parseval theorem ensures that $\hat{u}(\omega)$ can be expanded on $[-\Delta/2, \Delta/2]$ as
\begin{align*}
    \hat{u}(\omega) &= \sum_{\text{odd } m} \gamma_m \cos\left(\frac{\pi}{\Delta} m \omega\right) + \sum_{\text{even } m} \gamma_m \sin\left(\frac{\pi}{\Delta} m \omega\right) \\
    &= \sum_{n=1}^\infty \left[ \gamma_{2n-1} \cos\left(\frac{2\pi}{\Delta} \left(n-\frac{1}{2}\right)\omega\right) + \gamma_{2n} \sin\left(\frac{2\pi}{\Delta} n\,\omega\right) \right],
\end{align*}
for suitable coefficients $\{\gamma_m\}$, whose dimension is $\lfloor \Delta^{1/2} \rfloor$. Taking the inverse Fourier transform gives
\begin{equation*}
    u(t) = \frac{1}{2\pi} \int_{-\Delta/2}^{\Delta/2} d\omega\, \hat{u}(\omega) e^{i\omega t}.
\end{equation*}
Using standard integral identities, this can be rewritten in the compact form
\begin{align*}
    u(t) 
    &= \frac{1}{\pi} \sum_{n=1}^\infty \Bigg[
        \frac{\gamma_{2n-1}}{t^2-(2\pi/\Delta)^2(n-\frac{1}{2})^2} \frac{2\pi}{\Delta} \left(n-\frac{1}{2}\right) \cos\left(\frac{\Delta t}{2}\right) \sin\left(-\pi \left(n-\frac{1}{2}+n\right)\right) \\
    &\quad + \frac{\gamma_{2n}}{t^2-(2\pi/\Delta)^2n^2} \frac{2\pi}{\Delta} n\, \sin\left(\frac{\Delta t}{2}\right) \cos\left(\pi n\right)
    \Bigg].
\end{align*}
Introducing suitable adimensional coefficients $a_n$ and $b_n$, the general form becomes
\begin{equation}
    u(t) = 
    \left[\sum_{n=1}^\infty \frac{4\sqrt{\pi\Delta}\left(n-\frac{1}{2}\right)\, a_n}{\Delta^2t^2-(2\pi)^2(n-\tfrac{1}{2})^2}\right] \cos\left(\frac{\Delta t}{2}\right)
    +
    \left[\sum_{n=1}^\infty \frac{4\sqrt{\pi\Delta}\,n\, b_n}{\Delta^2t^2-(2\pi)^2\,n^2}\right] \sin\left(\frac{\Delta t}{2}\right)\,,
\end{equation}
from which, direct computation gives
\begin{align*}
    p(t) = |u(t)|^2
    &= \sum_{n,m=1}^\infty \frac{16\pi\Delta\left(n-\frac{1}{2}\right)\left(m-\frac{1}{2}\right) a_n^* a_m }{(\Delta^2t^2-(2\pi)^2(n-\tfrac{1}{2})^2)(\Delta^2t^2-(2\pi)^2(m-\tfrac{1}{2})^2)}\cos^2(\Delta t/2) \\
    &\quad + \sum_{n,m=1}^\infty \frac{16\pi\Delta\,n\,m\, b_n^* b_m }{(\Delta^2t^2-(2\pi)^2\,n^2)(\Delta^2t^2-(2\pi)^2\,m^2)}\sin^2(\Delta t/2) \\
    &\quad + \sum_{n,m=1}^\infty \frac{16\pi\Delta\left(n-\frac{1}{2}\right)\,m \Re(a_n^* b_m) }{(\Delta^2t^2-(2\pi)^2\left(n-\frac{1}{2}\right)^2)(\Delta^2t^2-(2\pi)^2\,m^2)}\sin(\Delta t).
\end{align*}
Since $z(|t|)$ is an even function of $t$, the last term does not contribute to
\begin{align}
    \mathbb{E}_p[z(|t|)] = \int dt\, p(t) z(|t|)\,.
\end{align}
Defining
\begin{align}
    \mathds{A}_{nm}(z,\Delta) &= 
    16\pi \Delta \Big(n-\tfrac{1}{2}\Big)\Big(m-\tfrac{1}{2}\Big)
    \int dt\,
    \frac{z(|t|) \cos^2(\Delta t/2)}
    {\bigl(\Delta^2t^2-(2\pi)^2(n-\tfrac{1}{2})^2\bigr)
     \bigl(\Delta^2t^2-(2\pi)^2(m-\tfrac{1}{2})^2\bigr)}, \\[2mm]
    \mathds{B}_{nm}(z,\Delta) &= 
    16\pi \Delta n\,m
    \int dt\,
    \frac{z(|t|) \sin^2(\Delta t/2)}
    {\bigl(\Delta^2t^2-(2\pi)^2n^2\bigr)
     \bigl(\Delta^2t^2-(2\pi)^2m^2\bigr)}\,,
\end{align}
then, in matrix notation,
we can write
\begin{align}
    \mathbb{E}_p[z(|t|)] =\underline{a}^\dagger \, \mathds{A}(z,\Delta) \, \underline{a} + \underline{b}^\dagger \, \mathds{B}(z,\Delta) \, \underline{b}\,.
\end{align}
More compactly,
we can collect all the elements of $\underline{a}$ and $\underline{b}$ in a new vector
\begin{equation}
    \underline{v} := \left[ \underline{a},\, \underline{b}\right]
\end{equation}
and define
\begin{equation}
    \mathds{V}(z,\Delta) := \begin{bmatrix}
        \mathds{A}(z,\Delta) & \mathds{O}\\
        \mathds{O} & \mathds{B}(z,\Delta)
    \end{bmatrix}
\end{equation}
to write
\begin{align}
    \mathbb{E}_p[z(|t|)] =\underline{v}^\dagger \, \mathds{V}(z,\Delta) \, \underline{v} \,.
\end{align}
so that
the optimization problem reduces to the minimization of 
\begin{align}
    \mathbb{E}_p[z(|t|)] = \underline{v}^\dagger \, \mathds{V}(\alpha k,\Delta) \, \underline{v}\,,
\end{align}
under the normalization constraint
\begin{align}
    \mathbb{E}_p[1] = \underline{v}^\dagger \, \mathds{V}(1,\Delta) \, \underline{v} = 1\,.
\end{align}
It is possible to prove that (see Appendix~\ref{appx:A_B_computation})
\begin{align}
    \mathds{V}(1,\Delta) = \mathds{1}\,,
\end{align}
hence the constraint translates in the normalization condition
\begin{align}
    |\underline{v}|^2  = 1.
\end{align}
Therefore the minimization problem reduces to an eigenvalue problem: 
the optimal vector 
\begin{align}
    \underline{\tilde{v}} = \left[\underline{\tilde{a}},\, \underline{\tilde{b}}\right]
\end{align}
is the normalized eigenvector of $\mathds{V}(z,\Delta)$ 
corresponding to its smallest eigenvalue. 
Since $\mathds{V}(z,\Delta)$ is block-diagonal,
the solution is either of the form
\begin{align}
    \underline{\tilde{v}} = \left[\underline{\tilde{a}},\, \underline{0}\right]
\end{align}
or
\begin{align}
    \underline{\tilde{v}} = \left[\underline{0}, \underline{\tilde{b}}\right]\,.
\end{align}
The final optimal distribution is then
\begin{align}
    p(t) &= 
    \sum_{n,m=1}^\infty \frac{16\pi\Delta\left(n-\frac{1}{2}\right)\left(m-\frac{1}{2}\right) \tilde{a}_n^* \tilde{a}_m }{(\Delta^2t^2-(2\pi)^2(n-\tfrac{1}{2})^2)(\Delta^2t^2-(2\pi)^2(m-\tfrac{1}{2})^2)}\cos^2(\Delta t/2) \\
    &\quad + \sum_{n,m=1}^\infty \frac{16\pi\Delta\,n\,m\, \tilde{b}_n^* \tilde{b}_m }{(\Delta^2t^2-(2\pi)^2\,n^2)(\Delta^2t^2-(2\pi)^2\,m^2)}\sin^2(\Delta t/2)
\end{align}

\subsection{Computation of $\mathds{A}$ and $\mathds{B}$}
\label{appx:A_B_computation}
In this section we report the analytical expressions for the integrals $\mathds{A}_{nm}(z,\Delta)$ and $\mathds{B}_{nm}(z,\Delta)$ in Eq.~\eqref{eq:A_B_definition},
for 
\begin{align}
    z(\abs{t}) = 1\,, \qquad z(\abs{t}) = \abs{t}\,, \qquad z(\abs{t}) = t^2\,.
\end{align}
We start proving that
\begin{align}
    \boxed{\mathds{A}_{nm}(1, \Delta) = \mathds{B}_{nm}(1, \Delta) = \delta_{nm}}
\end{align}
\begin{proof}
    Defining 
        \begin{align}
            \mu_n = 2\pi(n-\tfrac{1}{2})\,, \hspace{2cm} \nu_n = 2\pi n\,,
        \end{align}
    and using the change of variable $t \mapsto \tau = \Delta t$ the expressions become
    \begin{align}
    \mathds{A}_{nm}(1,\Delta) &= 
    16\pi \Big(n-\tfrac{1}{2}\Big)\Big(m-\tfrac{1}{2}\Big)
    \int d\tau\,
    \frac{\cos^2(\tau/2)}
    {\bigl(\tau^2-\mu_n^2\bigr)
     \bigl(\tau^2-\mu_m^2\bigr)}\,,\\
    \mathds{B}_{nm}(1,\Delta) &= 
    16\pi  \,n\,m\,
    \int dt\,
    \frac{ \sin^2(\tau/2)}
    {\bigl(\tau^2-\nu_n^2\bigr)
     \bigl(\tau^2-\nu_m^2\bigr)}\,.
    \end{align}
    We first consider
    \begin{align}
        \mathds{A}_{nm}(1,\Delta) &= 16\pi \Big(n-\tfrac{1}{2}\Big)\Big(m-\tfrac{1}{2}\Big)
        \int d\tau\,
        \frac{ \cos^2(\tau/2)}
        {\bigl(\tau^2-\mu_n^2\bigr)
         \bigl(\tau^2-\mu_m^2\bigr)}\, .
    \end{align}
    and rewrite the integral as
    \begin{align}
        \int d\tau'\,
        \frac{ \cos^2(\tau/2)}
        {\bigl(\tau^2-\mu_n^2\bigr)
         \bigl(\tau^2-\mu_m^2\bigr)}
         &= \frac{1}{ \mu_m^2-\mu_n^2}\int d\tau'\,
        \cos^2(\tau/2)\left[\frac{1}{\tau^2-\mu_m^2}- \frac{1}{\tau^2-\mu_n^2}\right]\\
        &= \frac{I(\mu_m)- I(\mu_n)}{ \mu_m^2-\mu_n^2} \,,
    \end{align}
    Where we defined
    \begin{align}
        I(\mu) = \int d\tau'\,
        \frac{\cos^2(\tau/2)}{\tau^2-\mu^2}\,.
    \end{align}
    The integral $I(\mu)$is understood in the Cauchy principal value sense over $\mathbb{R}$.
    Using $\cos^2(\tau/2)=\tfrac12(1+\cos\tau)$, we write
    \begin{align}
    \mathrm{P.V.}\,I(\mu)
    = \frac12 \mathrm{P.V.}\int_{-\infty}^{\infty}
    \left[\frac{1}{\tau^2-\mu^2}+\frac{\cos\tau}{\tau^2-\mu^2}\right]d\tau .
    \end{align}
    The first term vanishes since
    \begin{align}
    \frac{1}{\tau^2-\mu^2}
    = \frac{1}{2\mu}\left(\frac{1}{\tau-\mu}-\frac{1}{\tau+\mu}\right),
    \end{align}
    \begin{align}
    \mathrm{P.V.}\int_{-\infty}^{\infty}\frac{d\tau}{\tau^2-\mu^2}=0 .
    \end{align}
    For the second term, consider
    \begin{align}
    \mathrm{P.V.}\int_{-\infty}^{\infty}\frac{e^{i\tau}}{\tau^2-\mu^2}\,d\tau
    \end{align}
    and close the contour in the upper half of the complex plane.
    The simple poles at $\tau=\pm\mu$ lie on the real axis and each contributes half its residue, yielding
    \begin{align}
    \mathrm{P.V.}\int_{-\infty}^{\infty}\frac{e^{i\tau}}{\tau^2-\mu^2}\,d\tau
    = -\pi\frac{\sin\mu}{\mu}.
    \end{align}
    Taking the real part gives
    \begin{align}
    \mathrm{P.V.}\int_{-\infty}^{\infty}\frac{\cos\tau}{\tau^2-\mu^2}\,d\tau
    = -\pi\frac{\sin\mu}{\mu}.
    \end{align}
    Therefore
    \begin{align}
    \mathrm{P.V.}\,I(\mu)= -\frac{\pi}{2}\frac{\sin\mu}{\mu}
    \end{align}
    and
    \begin{align}
        \mathds{A}_{nm}(1,\Delta) &= -16\pi \Big(n-\tfrac{1}{2}\Big)\Big(m-\tfrac{1}{2}\Big)\frac{\pi}{2}\frac{\frac{\sin(\mu_m)}{\mu_m}- \frac{\sin(\mu_n)}{\mu_n}}{\mu_m^2-\mu_n^2}\,.
    \end{align}
    For $m \neq n$ one has 
    \begin{equation}
        \sin(\mu_n) = \sin(\mu_m) = 0\,,
    \end{equation}
    and hence $\mathds{A}_{nm}(1,\Delta)$ vanishes.
    To study the case of $m = n$ we can take the limit of $\mu_m \to \mu_n$ and use the theorem of L'Hôpital to write
    \begin{align}
        \mathds{A}_{nn}(1,\Delta) &= -\lim_{\mu_m \to \mu_n}16\pi \Big(n-\tfrac{1}{2}\Big)^2\frac{\pi}{2}\frac{\frac{\sin(\mu_m)}{\mu_m}- \frac{\sin(\mu_n)}{\mu_n}}{(\mu_m-\mu_n)(\mu_m+\mu_n)}\\
        &= - 4\pi \Big(n-\tfrac{1}{2}\Big)^2\frac{\pi}{\mu_n}\frac{d}{d\mu}\left[\frac{\sin(\mu)}{\mu}\right]\Bigg|_{\mu_n}\\
        &= - 4\pi \Big(n-\tfrac{1}{2}\Big)^2\frac{\pi}{\mu_n}\frac{\cos(\mu_m)\mu_m- \sin(\mu_m)}{\mu_m^2}\\
        &= - 4\pi^2 \Big(n-\tfrac{1}{2}\Big)^2\frac{ \cos(\mu_m)}{\mu_n^2}\\
        &= 1\,.
    \end{align}
Similarly, we now focus on
\begin{align}
    \mathds{B}_{nm}(1,\Delta) &= 16\pi \,n\,m\,
    \int d\tau\,
    \frac{\sin^2(\tau/2)}
    {(\tau^2-\nu_n^2)(\tau^2-\nu_m^2)}\, .
\end{align}
Rewriting the integral using partial fractions gives
\begin{align}
\int d\tau\,
\frac{\sin^2(\tau/2)}
{(\tau^2-\nu_n^2)(\tau^2-\nu_m^2)}
= \frac{J(\nu_m)-J(\nu_n)}{\nu_m^2-\nu_n^2},
\end{align}
where we define
\begin{align}
J(\nu) = \int d\tau\, \frac{\sin^2(\tau/2)}{\tau^2-\nu^2}.
\end{align}
Using the identity $\sin^2(\tau/2) = \frac12 (1-\cos\tau)$, the integral becomes
\begin{align}
\mathrm{P.V.}\,J(\nu)
= \frac12 \mathrm{P.V.}\int_{-\infty}^{\infty} \frac{1 - \cos\tau}{\tau^2 - \nu^2}\, d\tau
= \frac12 \mathrm{P.V.}\int_{-\infty}^{\infty} \frac{d\tau}{\tau^2 - \nu^2}
- \frac12 \mathrm{P.V.}\int_{-\infty}^{\infty} \frac{\cos\tau}{\tau^2 - \nu^2}\, d\tau.
\end{align}
The reminder of the proof is equivalent to the one for $\mathds{A}(1, \Delta)$
\end{proof}

We now consider the case $z(|t|) = |t|$, 
for which we report the analytical results obtained using symbolic calculus,
namely
\begin{equation}
     \boxed{\mathds{A}_{nm}(|t|, \Delta)
=
\begin{cases}
\dfrac{(2n+1)\pi\,\mathrm{Si}\!\bigl(\pi(2n+1)\bigr)-2}
      {2\pi^2(2n+1)^2},
& n=m, \\[1.2ex]
\displaystyle
-\dfrac{
\mathrm{Ci}\!\bigl(\pi(2n+1)\bigr)
-\mathrm{Ci}\!\bigl(\pi(2m+1)\bigr)
+\log\!\bigl(\tfrac{2m+1}{2n+1}\bigr)}
{\pi^2\!\bigl((2m+1)^2-(2n+1)^2\bigr)},
& n\neq m .
\end{cases}}
\end{equation}
and
\begin{align}
\boxed{\mathds{B}_{nm}(|t|, \Delta)
=
\begin{cases}
\dfrac{2\pi(n+1)\,\mathrm{Si}\!\bigl(2\pi(n+1)\bigr)-2}
      {8\pi^2(n+1)^2},
& n=m, \\[1.2ex]
\displaystyle
-\dfrac{
\mathrm{Ci}\!\bigl(2\pi(n+1)\bigr)
-\mathrm{Ci}\!\bigl(2\pi(m+1)\bigr)
+\log\!\bigl(\tfrac{m+1}{n+1}\bigr)}
{4\pi^2\!\bigl((m+1)^2-(n+1)^2\bigr)},
& n\neq m .
\end{cases}}
\end{align}
Finally, we prove that 
\begin{align}
    \boxed{\mathds{A}_{nm}(t^2, \Delta) = \left(\frac{2\pi}{\Delta}\right)^2 \Big(n-\tfrac{1}{2}\Big)^2 \delta_{nm}\,,}
\end{align}
and 
\begin{align}
    \boxed{\mathds{B}_{nm}(t^2, \Delta) = \left(\frac{2\pi}{\Delta}\right)^2 n^2 \,\delta_{nm}\,.}
\end{align}

\begin{proof}
Defining
\begin{align}
    \mu_n = 2\pi(n-\tfrac{1}{2})\,, \hspace{2cm} \nu_n = 2\pi n\,,
\end{align}
and using the change of variable $t \mapsto \tau = \Delta t$ the expressions become
\begin{align}
\mathds{A}_{nm}(t^2,\Delta) &= 
\frac{16\pi}{\Delta^2} \Big(n-\tfrac{1}{2}\Big)\Big(m-\tfrac{1}{2}\Big)
\int d\tau\,
\frac{\tau^2 \cos^2(\tau/2)}
{\bigl(\tau^2-\mu_n^2\bigr)
 \bigl(\tau^2-\mu_m^2\bigr)}\,,\\
\mathds{B}_{nm}(t^2,\Delta) &= 
\frac{16\pi}{\Delta^2} \,n\,m\,
\int d\tau\,
\frac{\tau^2 \sin^2(\tau/2)}
{\bigl(\tau^2-\nu_n^2\bigr)
 \bigl(\tau^2-\nu_m^2\bigr)}\,.
\end{align}
We first consider
\begin{align}
    \mathds{A}_{nm}(t^2,\Delta) &= \frac{16\pi}{\Delta^2} \Big(n-\tfrac{1}{2}\Big)\Big(m-\tfrac{1}{2}\Big)
    \int d\tau\,
    \frac{\tau^2 \cos^2(\tau/2)}
    {\bigl(\tau^2-\mu_n^2\bigr)
     \bigl(\tau^2-\mu_m^2\bigr)}\,
\end{align}
and rewrite the integral as
\begin{align}
\int d\tau\,
\frac{\tau^2 \cos^2(\tau/2)}
{(\tau^2-\mu_n^2)(\tau^2-\mu_m^2)}
= \frac{\mu_m^2 I(\mu_m)-\mu_n^2 I(\mu_n)}{\mu_m^2-\mu_n^2}\,,
\end{align}
where we defined
\begin{align}
I(\mu) = \mathrm{P.V.}\int_{-\infty}^{\infty} d\tau\,
\frac{\cos^2(\tau/2)}{\tau^2-\mu^2}\,.
\end{align}
Using $\cos^2(\tau/2)=\tfrac12(1+\cos\tau)$, we write
\begin{align}
\mathrm{P.V.}\,I(\mu)
= \frac12 \mathrm{P.V.}\int_{-\infty}^{\infty}
\left[\frac{1}{\tau^2-\mu^2} + \frac{\cos\tau}{\tau^2-\mu^2}\right]d\tau .
\end{align}
The first term vanishes since
\begin{align}
\mathrm{P.V.}\int_{-\infty}^{\infty}\frac{d\tau}{\tau^2-\mu^2}=0\,,
\end{align}
and the second term evaluates as before using a contour in the upper half-plane:
\begin{align}
\mathrm{P.V.}\int_{-\infty}^{\infty}\frac{\cos\tau}{\tau^2-\mu^2}\,d\tau
= -\pi\frac{\sin\mu}{\mu}.
\end{align}
Therefore
\begin{align}
I(\mu) = -\frac{\pi}{2}\frac{\sin\mu}{\mu},
\end{align}
and
\begin{align}
\mathds{A}_{nm}(t^2,\Delta) &= -\frac{16\pi}{\Delta^2} \Big(n-\tfrac12\Big)\Big(m-\tfrac12\Big) 
\frac{\mu_m^2 \frac{\sin\mu_m}{\mu_m} - \mu_n^2 \frac{\sin\mu_n}{\mu_n}}{\mu_m^2 - \mu_n^2}\\
&= -\frac{8\pi^2}{\Delta^2} \Big(n-\tfrac12\Big)\Big(m-\tfrac12\Big) 
\frac{\mu_m \sin\mu_m - \mu_n \sin\mu_n}{\mu_m^2 - \mu_n^2}\,.
\end{align}
For $m \neq n$ one has 
\begin{equation}
\sin(\mu_n) = \sin(\mu_m) = 0\,,
\end{equation}
and hence $\mathds{A}_{nm}(t^2,\Delta)$ vanishes.
To study the case of $m = n$, we take the limit $\mu_m \to \mu_n$ and use L'Hôpital's rule:
\begin{align}
\mathds{A}_{nn}(t^2,\Delta) &= -\lim_{\mu_m \to \mu_n} 16\pi \frac{1}{\Delta^2} \Big(n-\tfrac12\Big)^2 
\frac{\mu_m \sin\mu_m - \mu_n \sin\mu_n}{(\mu_m-\mu_n)(\mu_m+\mu_n)}\\
&= - \frac{8\pi}{\Delta^2} \Big(n-\tfrac12\Big)^2 \frac{1}{\mu_n} \frac{d}{d\mu} (\mu \sin\mu) \Big|_{\mu_n} \\
&= - \frac{8\pi}{\Delta^2} \Big(n-\tfrac12\Big)^2 \frac{\cos\mu_n \mu_n + \sin\mu_n}{\mu_n} \\
&= \left(\frac{2\pi}{\Delta}\right)^2 \Big(n-\tfrac{1}{2}\Big)^2 .
\end{align}
Similarly, we henceforth focus on
\begin{align}
\mathds{B}_{nm}(t^2,\Delta) &= \frac{16\pi}{\Delta^2} \,n\,m\,
\int d\tau\,
\frac{\tau^2 \sin^2(\tau/2)}
{(\tau^2-\nu_n^2)(\tau^2-\nu_m^2)}\,.
\end{align}
Rewriting the integral using partial fractions gives
\begin{align}
\int d\tau\,
\frac{\tau^2 \sin^2(\tau/2)}
{(\tau^2-\nu_n^2)(\tau^2-\nu_m^2)}
= \frac{\nu_m^2 J(\nu_m) - \nu_n^2 J(\nu_n)}{\nu_m^2 - \nu_n^2},
\end{align}
where we define
\begin{align}
J(\nu) = \mathrm{P.V.}\int d\tau\, \frac{\sin^2(\tau/2)}{\tau^2-\nu^2}.
\end{align}
Using the identity $\sin^2(\tau/2) = \frac12 (1-\cos\tau)$, the integral becomes
\begin{align}
\mathrm{P.V.}\,J(\nu)
= \frac12 \mathrm{P.V.}\int_{-\infty}^{\infty} \frac{1 - \cos\tau}{\tau^2 - \nu^2}\, d\tau
= \frac{\pi}{2}\frac{\sin\nu}{\nu}.
\end{align}
Therefore, following the same steps as for $\mathds{A}$,
the proof is completed.
\end{proof}

\end{document}